\def\eqref#1{Eq.~\ref{#1}}   
\def\eps{{\epsilon}}
\newcommand{\E}{\mathbb{E}}
\newcommand{\R}{\mathbb{R}}
\newtheoremstyle{mythmstyle} 
{\topsep}    
{\topsep}    
{\itshape}   
{0pt}        
{\bfseries}  
{}           
{ }          
{}           
\theoremstyle{mythmstyle}
\newtheorem{theorem}{Theorem}
 \newtheorem{assumption}{Assumption}
 \newtheorem{lemma}[theorem]{Lemma}
\newtheorem{proposition}[theorem]{Proposition}
\newtheorem{remark}[theorem]{Remark}
\newtheorem*{proposition*}{Proposition}
\newtheorem*{theorem*}{Theorem}
\newtheorem*{lemma*}{Lemma}
\newcommand{\cP}{\mathcal{P}}
\newcommand{\KL}{\mathop{\mathrm{KL}}\nolimits}
\newcommand{\TV}{\mathop{\mathrm{TV}}\nolimits}
\newcommand{\ud}{\mathrm{d}}
\definecolor{grape}{rgb}{0.43, 0.17, 0.71}
\definecolor{blush}{rgb}{0.87, 0.36, 0.51}
\definecolor{aquua}{HTML}{689d6a}
\definecolor{blush}{rgb}{0.87, 0.36, 0.51}
\title{
Sampling from multi-modal distributions with polynomial query complexity in fixed dimension via reverse diffusion
}
\author{
Adrien ~Vacher \\
CREST, ENSAE\\
Institut Polytechnique de Paris\\
\texttt{adrien.vacher@ensae.fr} \\
\And
Omar ~Chehab\\
CREST, ENSAE\\
Institut Polytechnique de Paris\\
\texttt{omar.chehab@ensae.fr} \\
\AND
Anna ~Korba\\
CREST, ENSAE\\
Institut Polytechnique de Paris\\
\texttt{anna.korba@ensae.fr}
}
\begin{document}

\maketitle

\begin{abstract}

Even in low dimensions, sampling from multi-modal distributions is challenging. We provide the first sampling algorithm for a broad class of distributions --- including all Gaussian mixtures --- with a query complexity that is polynomial in the parameters governing multi-modality, assuming fixed dimension. Our sampling algorithm simulates a time-reversed diffusion process, using a self-normalized Monte Carlo estimator of the intermediate score functions. Unlike previous works, it avoids metastability, requires no prior knowledge of the mode locations, and relaxes the well-known log-smoothness assumption which excluded general Gaussian mixtures so far.

\end{abstract}

\section{Introduction}

Sampling from a distribution whose density is only known up to a normalization constant is a fundamental problem in statistics. Formally, given some potential $V: \mathbb{R}^d \to \mathbb{R}$ such that $\int e^{-V(x)} \ud x < \infty$, the sampling problem consists in obtaining a sample from some distribution $p$ such that $p$ is $\epsilon$-close to the target $\mu \propto e^{-V}$ with respect to some divergence while maintaining the complexity, i.e., the number of queries to $V$ and possibly to its derivatives, as low as possible. Depending on the shape of the distribution, the typical complexity of existing sampling algorithms can significantly differ.  

\paragraph{Log-concave and "PL-like" distributions} As in Euclidean optimization, a common assumption in the sampling literature is to assume that $\mu$ is log-concave and log-smooth or equivalently, that $V$ is convex and smooth. 
Specifically, when the potential $V$ is assumed to be $\alpha$-strongly convex and to have an $L$-Lipschitz gradient, the popular Unajusted Langevin Algorithm (ULA) is known to achieve fast convergence~\citep{durmus2017nonasymptotic,dalalyan2017user}. Because strong-log concavity implies uni-modality, thus excluding many distributions of interest, this assumption was further relaxed to $\mu$ verifying an ${\alpha}^{-1}$-log-Sobolev inequality, later interpreted as a Polyak--\L ojasiewicz type condition on $\text{KL}(\cdot|\mu)$ for the Wasserstein geometry~\citep{blanchet2018lojasiewicz}. Under these conditions, ULA was shown to achieve $\epsilon$-error in 
Kullback-Leibler ($\KL$) divergence in $\tilde{O}(L^2 \alpha^{-2} d \epsilon^{-1})$ queries to $\nabla V$~\citep{vempala2019rapid}. While these polynomial guarantees do go beyond the uni-modal setting, we show in the next paragraph that most existing algorithms still fail to sample from truly multi-modal distributions.

\paragraph{Multi-modal distributions} 
Designing sampling algorithms for multi-modal distributions is an active area of research. However, most existing sampling algorithms are limited in at least two of the following ways:

\begin{enumerate}
    \item The query complexity is exponential in the parameters of the problem. For instance, when the global $\alpha$-strong convexity of $V$ is relaxed with $\alpha$-strong convexity outside a ball of radius $R$, the log-Sobolev constant of $\mu$ degrades to $O(e^{8LR^2}\alpha)$~\citep[Prop. 2]{ma2019sampling}. In practice, this does not simply translate to poor worst case bounds: practitioners are well aware that when dealing with multi-modal distributions, ULA-based algorithms suffer from \textit{metastability}, where they get stuck in local modes, leading to slow convergence~\citep{deng2020contour}.
    
    \item Guarantees are obtained under a log-smoothness assumption; we show in Sec.~\ref{ssec:gaussian_mixtures} that even for Gaussian Mixtures, which is arguably the most basic multi-modal model, this assumption may not be  verified.
    
    \item Explicit \textit{a priori} knowledge on the target distribution is required for the algorithm to converge. For instance, importance sampling requires a proposal distribution whose "effective" support must cover sufficiently well the one of the target; however, since $V$ can only be evaluated point-wise, the access to such a support is unclear. In fact, this limitation goes beyond the multi-modal setting in some cases: no matter the log-Sobolev constant of the target, similarly to euclidean gradient descent, ULA still requires an \textit{a priori} bound on the inverse smoothness constant of $V$ and sufficiently small step size in comparison in order to converge.
\end{enumerate}

Generally speaking, finding a sampling algorithm that addresses even the first limitation is not possible. Recent results provide lower-bounds on the complexity of sampling from a multi-modal distribution: they are exponential in the dimension, as shown in~\citet[Th K.1]{lee2018beyond}, \citet[Th 3.3]{chak2024theoreticalguarantees} and \citet[Th. 3]{he2025querycomplexitysamplingnonlogconcave}. However, when the dimensionality is fixed, the question of whether the three limitations can be addressed remains. \textit{In fixed dimension, can we sample from a broad class of multi-modal distributions with a polynomial number of queries in the parameters of the problem, (e.g. $L, R, \alpha$) and without prior knowledge on these parameters}?

\paragraph{Our contributions} 
Our work answers this question positively. We provide a sampling algorithm that addresses the three limitations outlined above. First, we show that our algorithm has \textit{polynomial} complexity in all the problem parameters but the dimension, thus enabling to efficiently sample from highly multi-modal distributions in fixed dimension. Second, we show that unlike most existing results, our guarantees hold under relaxed regularity assumptions that cover \textit{general} Gaussian mixtures. Third, our algorithm yield guarantees \textit{without} prior knowledge on the parameters of the distribution. 

The workhorse of our algorithm is made of two key ingredients: the reverse diffusion scheme, that transfers the sampling problem into a score estimation problem with different levels of noise, and self-normalized importance sampling, to estimate these noisy scores using only query access to $V$ and additional Gaussian samples. Both reverse diffusion~\citep{huang2024reverse, huang2024faster, he2024zeroth} and self-normalized importance sampling~\citep{huang2021schrodingerfollmersamplersamplingergodicity, jiao2021convergence,ruzayqat2023unbiased,ding2023sampling, saremi2024chain} were already used separately to sample in a Bayesian setting, where only the log-density of the target is available. 
However, these works either offer no theoretical guarantees or fail to improve upon existing results. In this article, we combine both reverse diffusion and self-normalized importance sampling in a single algorithm, allowing us to recover the first polynomial time guarantees to sample from highly multi-modal distributions in fixed dimension.

This paper is structured as follows. First, we survey related work on multi-modal sampling in \Cref{sec:related_work} and present our main result in \Cref{sec:main_result}. Then, we detail our sampling algorithm in \Cref{sec:preliminaries} and provide the key ingredients for the proof in \Cref{sec:proof}.

\section{Related work}\label{sec:related_work}

Before detailing our sampling algorithm in Section~\ref{sec:preliminaries}, we review the main alternatives to ULA when sampling from a multi-modal distribution and the guarantees they offer. We show that existing approaches suffer from at least two of three drawbacks outlined in the introduction: exponential query complexity, a restrictive smoothness assumption that excludes general Gaussian mixtures (as shown later in Sec.~\ref{ssec:gaussian_mixtures}), and unavailable prior knowledge of the distribution. 

\paragraph{Proposal-based algorithms} 
At a high level, proposal-based algorithms use a proposal distribution that is easy to sample from and that can either be directly used as a proxy for the target, or, alternatively, whose samples will be rejected (respectively re-weighted) to obtain approximate samples from the target; the corresponding algorithm for the latter case is the well-known rejection sampling (respectively importance sampling) scheme. Guarantees for these methods typically assume that the target distribution is log-smooth. Furthermore, they must use carefully designed proposals that require prior knowledge of the target distribution that is often unavailable in practice. 

For instance, when the target is $L$-log-smooth and with finite second moment $m_2$, one can design and sample from a proxy distribution that is $\epsilon$-close in $\TV$ to the target using $\tilde{O}((L m_2\epsilon^{-1})^{O(d)})$ queries to the potential function $V$ and its gradient $\nabla V$~\citep{he2025querycomplexitysamplingnonlogconcave}. While this bound is indeed polynomial in fixed dimension, the design of the proxy requires an $\epsilon$-approximation of the global minimum of $V$ which can only be achieved if $m_2$, therefore the location of the mass, is explicitly available; this is rarely the case in practice. 

Similarly, if we further assume that the target is $\alpha$-strongly log-concave outside of a ball of radius $R$, one can achieve an $\epsilon$-precise approximation of the target distribution in $\TV$ with polynomial number of queries to $\nabla V$ when $d$ is fixed via an importance sampling scheme~\citep[Th. 2.3]{chak2024theoreticalguarantees}. In this case, the proposal is such that it coincides with the target when the potential of the target is above some cutoff, and is flat elsewhere. While there exists a cutoff value ensuring the log-concavity of the proposal, thus allowing its efficient sampling, this value depends on the unknown constants $L, R$ and $\alpha$~\citep[Prop. 5.1]{chak2024theoreticalguarantees}.

\paragraph{Tempering-based algorithms} 
Instead of directly sampling from $\mu \propto e^{-V}$, these algorithms start by sampling from a flattened version of $\mu$ given by $\mu^\beta \propto e^{-\beta V}$ for small $\beta$, and gradually increase $\beta$ to $1$, a strategy sometimes also referred to as annealing. When $\mu$ is assumed to be a finite mixture of the same shifted $\alpha$-strongly log-concave and $L$-log-smooth distribution, \cite{lee2018beyond} proved that for a well-chosen (stochastic) sequence of flattened distributions $\mu^{\beta_i}$, sampling up to precision $\epsilon$ in $\KL$ can be achieved in $poly(L, \alpha^{-1}, w_{min}^{-1}, d, \epsilon^{-1}, R)$ queries to $\nabla V$, where $R$ is the location of the furthest mode and $w_{\min}$ the minimum weight in the mixture. However, this setting is quite restrictive: for Gaussian mixtures for instance, it only handles the case where the covariance matrices are identical for all components. Furthermore, the algorithm requires explicit knowledge of $R$ which is unavailable. 

\paragraph{Föllmer Flows} Instead of directly sampling from the target distribution $\mu \propto e^{-V}$, Föllmer Flows start from a simple (e.g. Gaussian or Dirac mass) distribution that is progressively interpolated to the target using a Schrödinger bridge. When the initial distribution is a Dirac mass at $0$, this bridge solves a closed-from SDE~\citep[Theorem 3]{wang2021deep} which can thus be discretized to generate samples from the target. In the context of Bayesian inference, where one only has access to the unnormalized density, \citet{vargas2022bayesianlearningneuralschrodingerfollmer} estimated the drift of the resulting SDE via neural methods; in particular no guarantees on the sampling quality are provided. In \citet{huang2021schrodingerfollmersamplersamplingergodicity, jiao2021convergence,ruzayqat2023unbiased,ding2023sampling}, the drift is estimated via a Monte-Carlo method. While in appearance, these works provide strong polynomial guarantees, a closer look shows that these guarantees only hold if the function $f(x) = e^{-V(x) + \| x\|^2/2} $ is Lipschitz, smooth and bounded from below; we show in Appendix \ref{appendix:follmer_flows} that this assumption is quite restrictive. 

\paragraph{Diffusion-based methods} Over the past few years, diffusion-based algorithms, and especially reverse diffusion, that we shall review in details in Sec.~\ref{sec:main_result}, have emerged as solid candidates for multi-modal  sampling. In essence, they allow to transfer the sampling problem into the problem of estimating the scores of intermediate distributions that are given by the convolution of the initial distribution with increasing levels of Gaussian noise. Under an $\epsilon$-oracle of these intermediate scores, it has been shown that diffusion-based methods could yield an $\epsilon$-approximate sample of the target in $\text{poly}(\epsilon^{-1})$ time under milder and milder assumptions~\citep{chen2023reversepaththeory,chen2023improved,benton2024nearly,li2024towards,conforti2024klconvergenceguaranteesscore,gentiloni2025beyond, cordero2025non}. This framework has been applied with tremendous empirical success in generative modeling, where numerous samples of the target are already available and one seeks to produce new samples from the target, for several years now~\citep{song2019scorebasedmodel,ho2020denoising, bortoli2021diffusionschrodingerbridge, chen2024learning}. However, it was only quite recently that this framework has been applied in a Bayesian context, where only an unnormalized density of the target, instead of samples, is available~\citep{huang2024reverse, huang2024faster, he2024zeroth, grenioux2024stochastic, akhound2024iterated}. In a work closely related to ours,~\citet{huang2024faster} showed that if the intermediate scores remain $L$-log-smooth for any noise level, which in particular implies that the target itself is $L$-log-smooth, their algorithm could reach a complexity of $O(e^{L^3\log^3((Ld+m_2)/\epsilon)})$ with $m_2$ the second order moment of the target. In~\citet{he2024zeroth}, the smoothness assumption is relaxed with a sub-quadratic growth assumption $V(x) - V(x^*) \leq L \| x - x^*\|^2$, where $x^*$ is any global minimizer of $V$, yet at the price of an oracle access to $V(x^*)$; under these assumptions, the authors manage to obtain a complexity that is at best $O(L^{d/2} \epsilon^{-d} e^{L \| x^*\|^2 + \| x_N \|^2})$ where $x_N$ is the final output sample. Under the reasonable (and desirable) assumption that $\mathbb{E}[\| x_N||^2] \approx m_2$, Jensen's inequality yields an overall $O(L^{d/2} \epsilon^{-d} e^{L \| x^*\|^2 + m_2})$ complexity. In particular, both these works suffer at least two of the limitations mentioned in the introduction, making them ill-suited for multi-modal sampling.

\section{Presentation of the main result and application to Gaussian Mixtures}\label{sec:main_result}
\subsection{Main result}

As in the works mentioned above, we rely on the recent advances on reverse diffusion and focus on the task of estimating the intermediate scores. Using an estimator that is described in Sec.~\ref{sec:preliminaries}, we recover polynomial sampling guarantees for densities verifying the assumptions described hereafter.

\begin{assumption}[Semi-log-convexity]
\label{assumption:semi_convexity}
We assume that $\mu \propto e^{-V}$ is such that $\log(\mu)$ is $C^2$ and verifies $\nabla^2 \log(\mu) \succeq - \beta I_d$ or equivalently $\nabla^2 V \preceq \beta I_d$ for some $\beta \geq 0$.
\end{assumption}

This assumption shall be referred to as \textit{semi-log-convexity}, by analogy with the functional analysis literature \citep[Theorem 3]{Mikulincer2023}. Note that is has also been referred to as 1-sided Lipschitzness for $\nabla V$~\citep{gentiloni2025beyond}. It relaxes the classical log-smoothness assumption which implies the additional lower bound $\nabla^2 V \succeq - \beta I_d$. In particular, unlike the latter, a mixture of semi-log-convex densities remains semi-log-convex~\citep[Chap. 16.B]{marshall1979inequalities}; we provide a quantitative version of this statement in Sec.~\ref{ssec:gaussian_mixtures} in the case of Gaussian mixtures.

\begin{assumption}[Dissipativity]
\label{assumption:dissipativity}
We assume that $\mu \propto e^{-V}$ is such that there exists $a>0, b \geq 0$ for which its potential satisfies $\langle \nabla V(x), x \rangle \geq a \| x \|^2 - b $ for all $x \in \mathbb{R}^d$.
\end{assumption}
This assumption is referred to as \textit{dissipativity} as common in the sampling and optimization literature \citep{raginsky2017non,zhang2017hitting, erdogdu2021convergence}. Note that this assumption relaxes strong convexity outside of a ball which can be equivalently re-written as $\langle \nabla V(x) - \nabla V(y), x- y \rangle \geq a \| x - y \|^2 - b$ \textbf{for all pairs} $(x,y) \in \mathbb{R}^d$, also referred to as \textit{strong dissipativity}~\citep{eberle2013strongdissipativity,erdogdu22convergence}. We show in Sec. \ref{ssec:gaussian_mixtures} that unlike strong-dissipativity, a mixture of dissipative distributions remain dissipative.
distribution is also log-concave outside a ball. 

\begin{theorem}\label{th:main_informal}
[Main result, informal]
Suppose that Assumption~\ref{assumption:semi_convexity} and ~\ref{assumption:dissipativity} hold. 
Then, 
for all $\epsilon>0$, there exists a stochastic algorithm whose parameters only depend on $\epsilon$ (and not on the parameters of the problem), that outputs a sample $X \sim \hat{p}$ such that
$
\mathbb{E}[\KL(\mu, \hat{p})] \lesssim \epsilon \beta^{d+3}(b+d)/a^2
$
in $O(\mathrm{poly}(\epsilon^{-d}))$ queries to $V$, where $\lesssim$ hides a universal constant as well as log quantities in $d, \epsilon^{-1}, \beta, a, b$. 

In particular, when $d$ is fixed, this algorithm can output a sample from a distribution that is $\epsilon$-close to $\mu$ in expected $\KL$ in $\mathrm{poly}((b+1)/a, \beta, \epsilon^{-1})$ running time. 
\end{theorem}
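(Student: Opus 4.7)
My plan is to combine the two ingredients flagged in the introduction --- reverse diffusion and self-normalized importance sampling (SNIS) --- into a single end-to-end error analysis. First I would run the standard Ornstein--Uhlenbeck forward interpolation $X_t = e^{-t} X_0 + \sqrt{1 - e^{-2t}} Z$ with $X_0 \sim \mu$ and $Z \sim \mathcal{N}(0,\Id)$, and discretize the reverse SDE from $T$ down to an early-stopping time $t_0$ with an Euler scheme fed with score estimates $\hat s_t$. The by-now standard KL analysis for score-based diffusion (e.g.~\citet{chen2023reversepaththeory,benton2024nearly,conforti2024klconvergenceguaranteesscore}) then yields
\[
\E[\KL(\mu, \hat p)] \lesssim e^{-2T} m_2 + \text{(discretization)} + \text{(early-stopping bias)} + \int_{t_0}^{T} \E_{\mu_t}\!\left[\|\hat s_t - s_t\|^2\right] \ud t ,
\]
so that the core of the proof reduces to building an $\hat s_t$ with uniformly small mean-squared error using only queries to $V$.

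Next I would write the score via Tweedie's identity, $s_t(x) = -x/\sigma_t^2 + (e^{-t}/\sigma_t^2) \E[X_0 \mid X_t = x]$ with $\sigma_t^2 = 1 - e^{-2t}$, and observe that the posterior $q_t(y \mid x) \propto e^{-V(y)} \mathcal{N}(y; e^t x, e^{2t}\sigma_t^2 \Id)$ is known up to a normalizing constant. A SNIS estimator based on $M$ i.i.d.\ samples from a wider Gaussian proposal $\pi_{t,x}$ then produces $\hat s_t(x)$ from exactly $M$ evaluations of $V$, and classical SNIS error bounds control its bias and variance by $\chi^2(q_t(\cdot\mid x) \,\|\, \pi_{t,x})/M$, up to a factor depending on the second moment of the test function.

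The main obstacle --- and the step where both assumptions enter --- is to establish a uniform-in-$t$ and uniform-in-$x$ bound on this chi-square, polynomial in $\beta$, $a^{-1}$, $b$ at fixed $d$. Semi-log-convexity $\nabla^2 V \preceq \beta \Id$ (Assumption~\ref{assumption:semi_convexity}) ensures that $e^{-V}$ cannot concentrate faster than a Gaussian of inverse variance $\beta$, so that once $\pi_{t,x}$'s covariance is taken slightly larger than the natural posterior scale, the posterior log-density differs from that of the proposal by a function whose Hessian is pinched between constant multiples of the identity. This produces a $\chi^2$ bound of order $(1 + c\beta)^{d/2}$, polynomial in the problem parameters at fixed $d$ but exponential in $d$, which is the source of the $\epsilon^{-d}$ factor in the final complexity. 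Dissipativity (Assumption~\ref{assumption:dissipativity}) is then used to control the second moment of $\mu_t$ uniformly in $t$ by $O((b+d)/a)$, which both turns the effective support of $\mu_t$ into a ball of $\epsilon$-tunable radius and is needed for the tail integrability of the importance weights at every anchor $x$.

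Finally I would aggregate: take $T \asymp \log((b+d)/(a\epsilon))$, $t_0 \asymp \epsilon/\beta$ (using Assumption~\ref{assumption:semi_convexity} to bound the Fisher information of $\mu_{t_0}$), Euler step size $\asymp \epsilon/\beta$, and $M \asymp \epsilon^{-d} \mathrm{poly}(\beta, a^{-1}, b+d)$ SNIS samples per step, each parameter being a function of $\epsilon$ alone once conservative worst-case bounds are substituted for the unknown problem constants. The product of $\mathrm{poly}(\epsilon^{-1},\beta)$ reverse-diffusion steps with $M$ samples per step yields the announced $\mathrm{poly}(\epsilon^{-d})$ query complexity, while summing the error contributions recovers the $\epsilon\beta^{d+3}(b+d)/a^2$ KL bound.
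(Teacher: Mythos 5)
Your high-level skeleton (OU forward process, KL decomposition from the diffusion literature, SNIS score estimates from Gaussian proposals) matches the paper, but the step you yourself identify as ``the main obstacle'' is where the argument breaks. A uniform-in-$x$ bound on $\chi^2\bigl(q_t(\cdot\mid x)\,\|\,\pi_{t,x}\bigr)$ that is polynomial in $\beta,a^{-1},b$ does not hold and cannot be extracted from semi-log-convexity: Assumption~\ref{assumption:semi_convexity} only upper-bounds $\nabla^2 V$, so it prevents over-concentration but says nothing about \emph{where} the posterior mass sits. For an anchor $x$ in the tails or between well-separated modes, $q_t(\cdot\mid x)$ concentrates near the modes of $\mu$, far from any Gaussian proposal tied to $e^t x$, and the relative second moment of the importance weights scales like $e^{td}\,Z_{2V}\,p_t^{2V}(x)/(Z_V\,p_t^{V}(x))^2$, which blows up as $p_t^V(x)\to 0$ (think of a two-component Gaussian mixture with separation $R$: the pointwise error is exponentially bad in $R$ at unlucky $x$). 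This is exactly why the paper never proves a pointwise/uniform bound: Proposition~\ref{prop:non_asympt} keeps the factor $p_t^{2V}(z)/(p_t^V(z))^2$ explicit, Proposition~\ref{prop:regularity_forward} (dissipativity) shows the accompanying $\theta_t(z)$ grows only like $1+\|z\|^2$, and then the error is controlled \emph{only after integration against $p_t^V$}, where it reduces to the moments $m_0,m_2$ of the ratio $\Phi_t=p_t^{2V}/p_t^V$; those moments are the object that semi-log-convexity controls, via the propagated Hessian bound of \citet{Mikulincer2023} plus a Gronwall argument on $\partial_t\Phi_t$ (Lemma~\ref{lemma:upper_bound_ratio}). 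The paper's remark on \citet{agapiou2017importance} makes precisely this point: off-the-shelf SNIS bounds of the kind you invoke carry an extra $1/p_t^V$ factor and their integral against $p_t^V$ diverges. So the key lemma of your plan is missing, and the classical-SNIS-plus-uniform-$\chi^2$ route would fail; a bespoke estimator bound whose pointwise blow-up is exactly cancelled by the $p_t^V$ weighting is needed.

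A second, more structural mismatch with the statement: the theorem asserts the algorithm's parameters depend on $\epsilon$ only, with \emph{no} prior knowledge of $a,b,\beta$, and correspondingly the guaranteed error is $\epsilon\,\beta^{d+3}(b+d)/a^2$ rather than $\epsilon$. Your parameter choices ($T\asymp\log((b+d)/(a\epsilon))$, step size and $t_0\asymp\epsilon/\beta$, $M\asymp\epsilon^{-d}\mathrm{poly}(\beta,a^{-1},b+d)$) require those constants, and ``substituting conservative worst-case bounds'' is not available since the constants are unbounded a priori. The paper resolves this by fixing $T=\log(1/\epsilon)$, $N=1/\epsilon$, $n_k=d\,\epsilon^{-(2d+3)}$ and letting the unknown constants appear multiplicatively in the final KL bound (Theorem~\ref{theorem:poly_complexity}); relatedly, it needs no early stopping, since Theorem~\ref{th:conforti} only requires finite relative Fisher information $\mathcal{I}(\mu,\pi)$, which Lemma~\ref{lemma:up_fisher_m2_density} bounds by $2(b+2d)/a+2\beta d$ under the two assumptions.
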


\begin{table}
\centering
\resizebox{\textwidth}{!}{%
\begin{tabular}{|l|l|l|l|}
\hline
\textbf{Algorithm} & \textbf{Assumptions} & \textbf{Oracle} & \textbf{Complexity} (Total Variation) \\
\hline
ULA~\citep{ma2019sampling} &
$((\alpha+L)\mathbf{1}_{\|x\|\ge R}-L)I_d \preceq \nabla^2V(x)\preceq L I_d$ &
$\nabla V$, $L$ &
$O(e^{16LR^2}(L/\alpha)^2 d \epsilon^{-1})$ \\
\hline
Proposal-based~\citep{he2025querycomplexitysamplingnonlogconcave} &
$\|\nabla^2V\|\le L$, $m_2\le M$ &
$V$, $\nabla V$, $M$, $L$ &
$O((LM\epsilon^{-1})^{O(d)})$ \\
\hline
RD + ULA~\citep{huang2024faster} &
$\|\nabla^2\log(p_t)\|\le L$, $m_2<\infty$ &
$\nabla V$ &
$e^{O(L^3\log^3((Ld+m_2/\epsilon)))}$ \\
\hline
RD + Rejection Sampling~\citep{he2024zeroth} &
$V(x)-V(x^*)\le L\|x-x^*\|^2$, $m_2<\infty$ &
$V$, $V^*$ &
$O(L^{d/4}\epsilon^{-d/2}e^{(L\|x^*\|^2+m_2)/2})$ \\
\hline
RD + Self-normalized IS (ours) &
$\nabla^2V\preceq \beta I_d,\;\langle\nabla V(x),x\rangle\ge a\|x\|^2-b$ &
$V$ &
$O(\sqrt{d}\beta^{(d+3)/2}\sqrt{b+d}/a\epsilon^{-(d+2)})$ \\
\hline
\end{tabular}
} 
\vspace{0.75em}
\caption{\textbf{Complexity of sampling algorithms.} We denote by $x^*$ a global minimizer of $V$, by $V^*=V(x^*)$ the global minimum of $V$, by $m_2$ the second moment of the target, by $p_t$ the density of the forward process (see Sec. \ref{sec:preliminaries}), and $\|\cdot\|$ the operator norm for matrices. In “RD + ULA,” RD refers to a Reverse Diffusion algorithm and ULA to how the intermediate scores are estimated. Even though originally stated in KL for our work and the one of~\citet{he2024zeroth}, all the complexities are w.r.t. the Total Variation distance (obtained via the Pinsker inequality).}
\label{table:comparison}
\vspace{-1.5em}
\end{table}

Our algorithm addresses the three limitations outlined in the sections above. When the dimension $d$ is fixed, we obtain a \textit{polynomial} query complexity. This guarantee does \textit{not} assume log-smoothness and applies to general Gaussian mixtures, as will be shown in the next subsection. 
Moreover, this guarantee does  \textit{not} require running the algorithm with any explicit knowledge of the target distribution's constants $a, b, \beta$.  

\paragraph{Overall comparison} 

We summarize in Table \ref{table:comparison} how our algorithm compares to previous approaches in terms of assumptions, oracles required to run the algorithm and resulting complexity. Along with the work of \citet{he2025querycomplexitysamplingnonlogconcave}, our algorithm is the only one that is polynomial in the parameters of the distribution when $d$ is fixed. Furthermore, while our dissipativity assumption is stronger than finite second moment, we relax the log-smoothness assumption by semi-log-convexity which notably covers general Gaussian mixtures. Finally, as mentioned above, because their algorithm requires an $\epsilon$-approximation of the global minimum of $V$, they require an explicit upper-bound on the second order moment of $\mu$ which may not be available in practice.

\paragraph{Numerical illustration} In Figure~\ref{fig:16_gaussians}, we consider a standard task in the literature: sampling from a mixture of 16 equally weighted Gaussians with unit variance and centers uniformly distributed in $[-40, 40]^2$~\citep{midgley2023flow}. We compare our algorithm against Unadjusted Langevin Algorithm (ULA) and to the reverse diffusion algorithm of~\citet{huang2024reverse} (RDMC). We also implemented the zeroth-order method of~\citet{he2024zeroth}, but it failed to converge. ULA was initialized from $\mathcal{N}(0, I_2)$ and run for $5 \times 10^4$ steps. All three methods used the same discretization step size $h = 0.01$. Our reverse diffusion algorithm and RDMC were both run with $500$ reverse diffusion steps and both used $100$ samples to estimate the intermediate scores. As discussed in Sec.~\ref{sec:preliminaries}, while our samples are simply drawn from a Gaussian distribution, the samples used in RDMC are drawn from a auxiliary, multi-modal distribution generated via an inner ULA step. In this experiment, this inner ULA was initialzed with a standard Gaussian and was run for $100$ steps; in particular, we emphasize that while ULA and our algorithm were roughly given the same computational budget, the one given to RDMC was a hundred times superior. As a result, ULA and our algorithm took approximately one minute to run on a computer locally and the RDMC method required over an hour. We observe that unlike the two others, our algorithm successfully recovers all the modes. 

\begin{wrapfigure}{r}{0.55\textwidth}
\begin{center}
\includegraphics[width=0.55\textwidth]{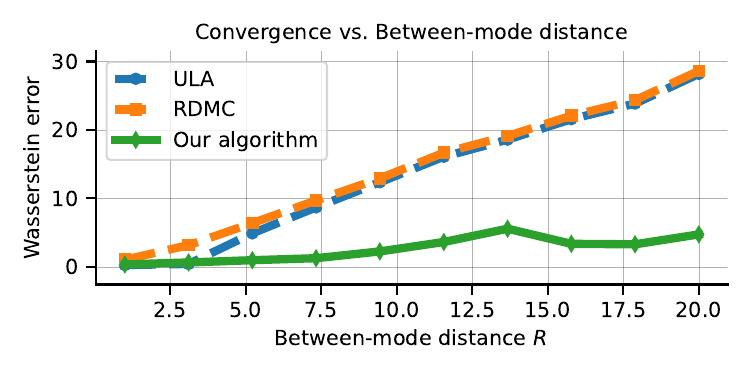}
\end{center}
\caption{Error in Wasserstein distance as a function of the between-mode distance.}
\label{fig:wasserstein_error}
\end{wrapfigure}

In Figure~\ref{fig:wasserstein_error}, we monitor the convergence of the same three algorithms as a function of the problem difficulty, measured by the distance between the modes. Here, the target distribution is a mixture of three Gaussians in two dimensions. It has equal weights $1/3$, equidistant modes located at a distance $R$ from the origin, and different covariances $(I, I/2, I/4)$. 
The final error is measured in Wasserstein distance, because it can be easily approximated using samples: we use $500$ samples generated by the sampling algorithm and $500$ samples from the true target distribution. We allowed each of the three algorithms $10^6$ queries to the potential $V$ or to its gradient: we performed $10^6$ iterations for ULA, we used $100$ steps of reverse diffusion for our algorithm and RDMC. We used $100$ inner steps of ULA for RDMC that was initialized with a standard Gaussian. We used $100$ particles for score estimation for RDMC and $10000$ for our algorithm. As expected, as the between-mode distance grows with $R$, our algorithm yields the lowest error.

\begin{figure}[b]
    \centering
    \includegraphics[width=0.6\linewidth]{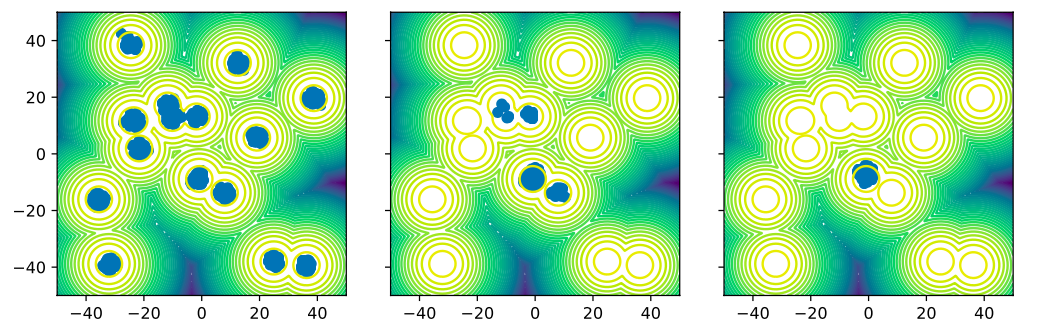}
    \caption{From left to right: our algorithm vs. ULA vs.~\citet{huang2024reverse}. The color scheme indicates the probability density value of the distribution we want to sample from (dark is low probability density, bright is high probability). The blue dots are the samples produced by the algorithm.}
    \label{fig:16_gaussians}
\end{figure}

\subsection{Application to Gaussian mixtures}
\label{ssec:gaussian_mixtures}

We now apply our results to derive provable sampling guarantees for general Gaussian mixtures. Apart from the very recent exception of \citet{lytras2024tamed} that we discuss below, note that despite their wide popularity, no sampling guarantees were yet derived for general Gaussian mixtures. Indeed, unless they verify some specific assumptions such as identical covariance matrices among components \citep[Lemma B.1]{cordero2025non}, general Gaussian mixtures do not satisfy the log-smoothness assumption, thus they do not fit the framework of many previously discussed works. In fact, their gradient may not even be Hölder continuous:
consider the simple counter-example of a two-dimensional mixture $\mu = 0.5 \mathcal{N}(0, \Sigma_1) + 0.5 \mathcal{N}(0, \Sigma_2)$ with covariances $\Sigma_1 = diag(1, 0.5)$ and $\Sigma_2 = diag(0.5, 1)$. On the diagonal $x=y$, the score is $\nabla \log(\mu)(x, x) = -3/2(x, x)$, while near the diagonal, right above it for instance, the score behaves asymptotically as $\nabla \log(\mu)(x,x) \sim_{ x \to +\infty} -(2x, x)$; we provide a rigorous analysis in Appendix~\ref{sec:gm_non_smooth}. Fortunately though, Gaussian mixtures do verify Assumptions \ref{assumption:semi_convexity}-\ref{assumption:dissipativity}, as we next show. 

\begin{proposition}\label{prop:gm_ass}
Let $\mu = \sum_{i=1}^p w_i \mathcal{N}(\mu_i, \Sigma_i)$ and denote $\lambda_{min} > 0$ (resp. $\lambda_{max}$) the minimum (resp. maximum) eigenvalue of the covariance matrices $\Sigma_i$.  It holds that $-\nabla^2 \log(\mu) \preceq I_d/\lambda_{min}$ 
  and that for all $x \in \mathbb{R}^d$, $\langle - \nabla \log(\mu)(x), x \rangle \geq \| x \|^2/(2\lambda_{max}) - \lambda_{max} \max_{i} (\| \mu_i \|/\lambda_{min})^2$.

\end{proposition}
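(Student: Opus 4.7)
The plan is to express the first and second log-derivatives of $\mu$ as posterior averages over the mixture components, and then exploit a positive-semidefinite covariance identity that appears naturally in the Hessian. Let $g_i$ denote the density of $\mathcal{N}(\mu_i, \Sigma_i)$ and introduce the responsibilities $\pi_i(x) = w_i g_i(x) / \mu(x)$. Writing $v_i = \Sigma_i^{-1}(x - \mu_i)$ and $\bar v = \sum_i \pi_i v_i$, a direct computation (from $\nabla g_i = -g_i v_i$, $\nabla^2 g_i = g_i(v_i v_i^\top - \Sigma_i^{-1})$, and $\nabla^2 \log \mu = \nabla^2 \mu / \mu - (\nabla \mu)(\nabla \mu)^\top/\mu^2$) gives
\begin{align*}
\nabla \log \mu(x) &= -\sum_i \pi_i(x) \, \Sigma_i^{-1}(x - \mu_i), \\
\nabla^2 \log \mu(x) &= \sum_i \pi_i v_i v_i^\top - \sum_i \pi_i \Sigma_i^{-1} - \bar v \bar v^\top.
\end{align*}

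For the first claim, observe that $\sum_i \pi_i v_i v_i^\top - \bar v \bar v^\top = \sum_i \pi_i (v_i - \bar v)(v_i - \bar v)^\top$ is the covariance matrix of $(v_i)$ under the categorical law $(\pi_i)_i$, hence positive semidefinite. Therefore $\nabla^2 \log \mu(x) \succeq -\sum_i \pi_i \Sigma_i^{-1}$, and since $\Sigma_i^{-1} \preceq I_d / \lambda_{\min}$ for every $i$ and $\sum_i \pi_i = 1$, we obtain $-\nabla^2 \log \mu \preceq I_d / \lambda_{\min}$, which is the stated semi-log-convexity bound.

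For the second claim, decompose
$$\langle -\nabla \log \mu(x), x \rangle = \sum_i \pi_i \langle \Sigma_i^{-1} x, x \rangle - \sum_i \pi_i \langle \Sigma_i^{-1} \mu_i, x \rangle.$$
Lower-bound the first term by $\|x\|^2 / \lambda_{\max}$ using $\Sigma_i^{-1} \succeq I_d / \lambda_{\max}$. For the second, use Cauchy--Schwarz followed by Young's inequality with weight $1/\lambda_{\max}$ to get $|\langle \Sigma_i^{-1} \mu_i, x \rangle| \leq (\|\mu_i\|/\lambda_{\min})\|x\| \leq \|x\|^2/(2\lambda_{\max}) + (\lambda_{\max}/2)(\|\mu_i\|/\lambda_{\min})^2$. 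Taking the max over $i$ and subtracting yields $\langle -\nabla \log \mu(x), x \rangle \geq \|x\|^2/(2\lambda_{\max}) - (\lambda_{\max}/2)\max_i (\|\mu_i\|/\lambda_{\min})^2$, which is tighter than (hence implies) the stated dissipativity bound.

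The only nontrivial step is recognizing the covariance structure in the Hessian expression: it is precisely what makes the rank-one terms $v_i v_i^\top$ and $\bar v \bar v^\top$ combine into a positive-semidefinite matrix, leaving behind a bound that depends only on $\max_i \|\Sigma_i^{-1}\|$ rather than on the positions or norms of the modes $\mu_i$. Once this is seen, both claims reduce to pointwise eigenvalue bounds and a routine Young-type split.
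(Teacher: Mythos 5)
Your proof is correct and follows essentially the same route as the paper's: the same responsibility-weighted decomposition, with the covariance identity $\sum_i \pi_i v_i v_i^\top - \bar v \bar v^\top \succeq 0$ for the Hessian bound, and Cauchy--Schwarz plus Young for the drift term. The only difference is your choice of Young weight $1/\lambda_{\max}$ (the paper uses $1/(2\lambda_{\min})$), which is in fact the cleaner option: it yields the stated dissipativity bound directly (even with an extra factor $1/2$ on the offset), whereas the paper's split leaves a coefficient $1/\lambda_{\max} - 1/(2\lambda_{\min})$ on $\|x\|^2$ that can be negative and must be absorbed in a final step.
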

The proof is deferred to Appendix~\ref{sec:proof_prop_gm_assumptions}. Combined with Theorem \ref{th:main_informal}, Proposition~\ref{prop:gm_ass} shows that we can sample any Gaussian mixture with average precision $\epsilon$ in $\KL$ in $O(\mathrm{poly}(\kappa,R, d, \lambda_{min}^{-d}, \epsilon^{-d}))$ queries to $V$ where $R = \max_{i}\| \mu_i \|$ and $\kappa = \lambda_{max}/\lambda_{min}$. There exists a relatively recent literature seeking to relax the log-smoothness assumption. For instance, ~\citet{Chatterji2019LangevinMC,erdogdu2021convergence, Nguyen2021UnadjustedLA} work under weak smoothness, i.e. $\alpha$-Hölder continuous gradient of the potential for $\alpha$ in $[0,1]$ (recall that $\alpha$=1 recovers the smooth case). However, as shown above, this relaxation is not sufficient yet to cover general Gaussian mixtures. The only reference we know of that does is the work of~\citet{lytras2024tamed}, who used a regularized version of Langevin to relax the global smoothness condition with local Lipschitz smoothness and polynomial growth of the Lipschitz constant. Yet crucially, their complexity bound scales as a polynomial of the \textit{Poincaré} constant of $\mu$. We prove in Appendix~\ref{sec:reg_langevin} that for the mixture $\mu = 0.5 \mathcal{N}(Ru, \lambda_{\max} I_d) + 0.5 \mathcal{N}(-Ru, \lambda_{\min} I_d)$ with $u$ any unit vector and $0< \lambda_{\min} \leq \lambda_{\max}$, this constant is at least $(R^2e^{R^2/(2\lambda_{\max})})/2$. In particular, this method still degrades exponentially with the multi-modality parameter $R$.

\section{Our sampling algorithm}
\label{sec:preliminaries}

In this section, we introduce the reverse diffusion framework and explain how it reduces the sampling problem to that of estimating the scores along the forward Ornstein-Uhlenbeck (OU) process. 

\subsection{Reverse diffusion: from sampling to score estimation}

Reverse diffusion methods emerged as an alternative to Langevin-based samplers in order to overcome metastability and were first introduced to the ML community in \citet{song2021diffusionsde}. They rely on the so-called forward process 
\begin{equation}
    \label{eq:def_ou}
    \begin{cases}
        & \ud X_t = - X_t \ud t+ \sqrt{2} \ud B_t \,\\
        & X_0 \sim \mu \, ,
    \end{cases}
\end{equation}
which corresponds to the standard OU process initialized at $\mu$, that is a specific case of a Langevin diffusion targeting a standard Gaussian, that we will denote $\pi$.  
Note that since the target of this process, the standard Gaussian, is $1$-strongly log-concave, the resulting process converges exponentially fast to the equilibrium. In order to sample from $\mu$, reverse diffusion algorithms rely on the semi-discretized \textit{backward process}: given a horizon $T$ that we discretize as $0=t_0 \leq t_1 \leq \cdots t_{N-1} \leq t_N = T$, the latter writes
\begin{equation}
\label{eq:def_backward}
        \ud Y_t = Y_t \ud t + 2\nabla \log(p_{t_{N-k}})(Y_t) \ud t + \sqrt{2}\ud B_t \, , ~~ t \in ]T- t_{N-k}, T- t_{N-(k+1)}] \, ,
\end{equation}
with $Y_0 \sim p_T$ and where $p_{t}$ is the distribution of the forward process \eqref{eq:def_ou} at time $t$. Note that this reverse process cannot be readily implemented for two reasons: first, it requires the knowledge of the intermediate scores $\nabla \log(p_{t_k})$ which are not available in closed form. Second, it requires sampling from the distribution $p_T$. Nevertheless, if one can access a proxy $s_{t_k}$ of the scores $\nabla \log(p_{t_k})$, and considering $T$ large enough so that $p_T\approx\pi$,  we can implement instead
\begin{equation}
\label{eq:def_backward_approx}
\ud Y_t = Y_t \ud t + 2s_{t_k}(Y_k) \ud t + \sqrt{2}\ud B_t \, , ~~ t \in ]T- t_{N-k}, T- t_{N-(k+1)}] \, ,
\end{equation}
with $Y_0 \sim \pi$ and where all iterations can be solved in closed form. Because the forward process \eqref{eq:def_ou} converges exponentially fast, we can expect the initialization error $Y_0 \sim \pi$ instead of $Y_0 \sim p_T$ to be small after a short time $T$. Furthermore, if the proxies $s_{t_k}$ are sufficiently accurate, one can expect that the process output by the approximate scheme \eqref{eq:def_backward_approx} has a distribution that is close to the target $\mu$. Over the past three years, several works provided quantitative bounds of the error induced by the discretization, the use of an approximate score and the initialization error with respect to different divergences and under various assumptions \citep{bortoli2021diffusionschrodingerbridge, lee2022reversepaththeory, chen2023reversepaththeory,chen2023improved,conforti2024klconvergenceguaranteesscore}. Yet, we shall rely exclusively on the following theorem as it is the most suited to our framework. 

\begin{theorem}[\citet{conforti2024klconvergenceguaranteesscore}]
\label{th:conforti}
    Assume that $\mu \propto e^{-V}$ has finite Fisher-information w.r.t. $\pi$ the standard gaussian density in $\mathbb{R}^d$:
    $$
    \mathcal{I}(\mu, \pi) \coloneqq \int \| x - \nabla V(x)\|^2 \ud \mu(x) < +\infty \, .
    $$ Then, for the constant step-size discretization $t_k = k T/N$, denoting $p$ the distribution of the sample $Y_T$ output by \eqref{eq:def_backward_approx}, it holds that
    $$
    \KL(\mu, p) \lesssim (d+m_2)e^{-T} + \frac{1}{N}\sum_{k=1}^N \| \nabla \log(p_{t_k}) - s_{t_k}\|_{L^2(p_{t_k})}^2 + \frac{T}{N} \mathcal{I}(\mu, \pi)\, ,
    $$
    where $m_2$ is the second order moment of $\mu$ and where $\lesssim$ hides a universal constant. 
\end{theorem}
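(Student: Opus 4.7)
The plan is a standard data-processing $+$ chain-rule $+$ Girsanov decomposition on path space, comparing the \emph{exact} time-reversal of \eqref{eq:def_ou} started at $p_T$ with the \emph{approximate} scheme \eqref{eq:def_backward_approx} started at $\pi$. Let $P$ denote the law of the exact reverse process initialized at $p_T$ (its marginal at reverse-time $t$ is $p_{T-t}$, so its terminal marginal is $\mu$), and let $Q$ denote the law of \eqref{eq:def_backward_approx} initialized at $\pi$ (terminal marginal $p$). Data-processing and the chain rule for relative entropy then give
\[
\KL(\mu,p) \leq \KL(P,Q) = \KL(p_T,\pi) + \E_P\bigl[\KL(P_{\cdot|0},Q_{\cdot|0})\bigr].
\]
The first term is the initialization error: exponential contraction of the OU semigroup toward $\pi$ combined with $\KL(\mu,\pi) \lesssim d+m_2$ (using $-\log \pi(x) = \|x\|^2/2 + c$ and the finite second moment of $\mu$) yields $\KL(p_T,\pi) \lesssim (d+m_2)e^{-T}$.

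For the conditional term, the two reverse SDEs share the diffusion $\sqrt{2}\,\ud B_t$ and differ only in their drifts ($2\nabla\log p_{T-t}$ vs.\ $2\,s_{t_{N-k}}$), so Girsanov's theorem gives
\[
\E_P\bigl[\KL(P_{\cdot|0},Q_{\cdot|0})\bigr] = \sum_{k=0}^{N-1} \int_{T-t_{N-k}}^{T-t_{N-(k+1)}} \E_{p_{T-t}}\bigl[\|\nabla\log p_{T-t} - s_{t_{N-k}}\|^2\bigr]\,\ud t .
\]
I would then apply the triangle split $\|a-c\|^2 \leq 2\|a-b\|^2 + 2\|b-c\|^2$ with $b = \nabla\log p_{t_{N-k}}$. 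One of the two resulting pieces collapses, using that the marginal of $P$ at reverse-time $T-t_k$ is $p_{t_k}$ and that the constant step size is $T/N$, to the middle term $\frac{1}{N}\sum_k \|\nabla\log p_{t_k} - s_{t_k}\|_{L^2(p_{t_k})}^2$ of the stated bound.

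The main obstacle is the remaining piece,
\[
\sum_{k} \int_{T-t_{N-k}}^{T-t_{N-(k+1)}} \E_{p_{T-t}}\bigl[\|\nabla\log p_{T-t} - \nabla\log p_{t_{N-k}}\|^2\bigr]\,\ud t ,
\]
which must be controlled by $(T/N)\,\mathcal{I}(\mu,\pi)$. My plan is to establish a quantitative time-regularity estimate for the score along the OU flow, roughly of the form $\|\partial_t \nabla\log p_t\|_{L^2(p_t)}^2 \lesssim \mathcal{I}(\mu,\pi)$, uniform in $t\in[0,T]$. Using Mehler's formula together with the commutation $\nabla P_t = e^{-t}\, P_t\, \nabla$ for the OU semigroup, one can represent $\nabla \log p_t(x)$ as a conditional expectation involving $X_0$ and $\nabla V(X_0)$ under the OU process \eqref{eq:def_ou} conditioned on $X_t = x$; differentiating this representation in $t$ and applying Jensen's inequality closes the estimate against the integrand $\|x - \nabla V(x)\|^2$, whose $\mu$-expectation is precisely $\mathcal{I}(\mu,\pi)$. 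Integrating over a subinterval of length $T/N$ and summing over $k$ then yields the third term. The delicate point is that $\mathcal{I}(\mu,\pi)$ is defined relative to $\pi$ rather than to Lebesgue, so one must use Gaussian integration-by-parts carefully to keep the bound uniform in $t$ and, in particular, non-singular as $t \to 0$.
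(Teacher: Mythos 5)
The paper never proves this statement: it is imported verbatim from \citet{conforti2024klconvergenceguaranteesscore}, so your attempt can only be judged against that work's argument. Your scaffolding (data processing, chain rule for KL on path space, Girsanov, triangle split into an estimation piece and a discretization piece) is indeed the standard skeleton of that proof, but the step that carries all the difficulty is exactly the one you leave as a ``plan'', and the plan as stated would fail. First, the term you must control is not a pure time increment of the score at a fixed point: the scheme \eqref{eq:def_backward_approx} freezes the argument of $s_{t_{N-k}}$ at the value of the process at the start of each step, so after Girsanov the estimation piece collapses to $\|\nabla \log p_{t_k}-s_{t_k}\|^2_{L^2(p_{t_k})}$ only when the comparison score is also evaluated at that frozen point (whose law under the exact reverse process is $p_{t_k}$); written at $Y_t$ as in your display, the norm sits under $p_{T-t}$, not $p_{t_k}$, and does not collapse without controlling likelihood ratios. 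The leftover discretization piece is then $\mathbb{E}\|\nabla\log p_{T-t}(Y_t)-\nabla\log p_{t_k}(Y_{T-t_k})\|^2$, an increment in \emph{both} time and space along the reverse trajectory. Second, your proposed key estimate $\|\partial_t \nabla\log p_t\|^2_{L^2(p_t)}\lesssim \mathcal{I}(\mu,\pi)$, uniformly in $t$, is not available under the theorem's sole assumption of finite relative Fisher information: $\partial_t\nabla\log p_t$ involves higher derivatives of $\log p_t$ through the Fokker--Planck equation and is not controlled near $t=0$ without smoothness --- and this paper's whole point is that $\nabla\log\mu$ need not even be H\"older. The cited proof instead exploits a martingale property of the (suitably rescaled) score process $t\mapsto \nabla\log p_{T-t}(Y_t)$ along the exact reverse dynamics, so that squared increments are orthogonal and their sum telescopes to a relative-Fisher-information quantity, which is precisely what produces the $\tfrac{T}{N}\,\mathcal{I}(\mu,\pi)$ term for the uniform grid; that idea is missing from your sketch.

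A smaller but genuine flaw: the claim $\KL(\mu,\pi)\lesssim d+m_2$ is false in general, since the negative differential entropy of $\mu$ is not controlled by its second moment (take $\mu$ highly concentrated). The initialization bound $(d+m_2)e^{-T}$ is correct but is obtained either by a short-time regularization argument (bound $\KL(p_1,\pi)\lesssim d+m_2$ by an explicit computation with the Gaussian convolution, then apply exponential contraction of KL along the OU flow for the remaining time), or, under the stated hypothesis, via the Gaussian log-Sobolev inequality $\KL(\mu,\pi)\le \tfrac12\,\mathcal{I}(\mu,\pi)$ followed by contraction.
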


The previous theorem shows that under mild assumptions that notably allow for multi-modality, the problem of sampling from $\mu$ can be transferred into a score approximation problem along the forward process. In the next subsection, we present an estimator for these intermediate scores that is tractable given the knowledge of the unnormalized density $\mu\propto e^{-V}$.

\subsection{Derivation of an estimator}
\label{ssec:monte_carlo_estimators_scores}

The key observation to derive an estimator of the intermediate scores is that the forward process \eqref{eq:def_ou} is nearly available in closed form. However, this closed form may be written in different manners.

\paragraph{Different expressions of the scores} Consider \eqref{eq:def_ou} integrates to
$
X_t = \sqrt{\lambda_t} X_0 + \sqrt{1-\lambda_t} Z \, ,
$
where $\lambda_t = e^{-2t}$, $Z$ is a standard $d$-dimensional Gaussian, and $X_0$ is a random variable simulating the target distribution. The corresponding density convolves the target density with a Gaussian, as
\begin{align}
    p_t^V(x)
    &=
    \frac{1}{\sqrt{1-\lambda_t}} \pi \left(
    \frac{x}{\sqrt{1 - \lambda_t}}
    \right) 
    \ast
    \frac{1}{\sqrt{\lambda_t}} \mu \left(
    \frac{x}{\sqrt{\lambda_t}}
    \right) \, ,
\end{align}
where $\pi$ is a standard Gaussian and $\mu(\cdot) \propto \exp(-V(\cdot))$ is the target distribution. From this expression, we can obtain different formulas for the scores. We retain
\begin{enumerate}
    \item
    $
    \nabla \log p_t^V(x)
    =
    \frac{1}{\sqrt{\lambda_t}}
    \E_{y | x} \left[
    \nabla \log \mu \left(
    \frac{x - y}{\sqrt{\lambda_t}}
    \right)
    \right]
    , \quad
    y | x \sim \propto   
    \mu \left(
    \frac{x - y}{\sqrt{\lambda_t}}
    \right)
    \times
    \mathcal{N} (
    y; 0, (1 - \lambda_t) I
    )
    $ \, ,

    \item 
    $
    \nabla \log p_t^V(x)
    =
    \frac{1}{\sqrt{\lambda_t}}
    \frac{
    \E_{y} \left[
    \nabla \log \mu \left(
    \frac{x - \sqrt{1 - \lambda_t} y}{\sqrt{\lambda_t}}
    \right)
    \mu \left(
    \frac{x - \sqrt{1 - \lambda_t} y}{\sqrt{\lambda_t}}
    \right)
    \right]
    }{
    \E_{y} [
    \mu \left(
    \frac{x - \sqrt{1 - \lambda_t} y}{\sqrt{\lambda_t}}
    \right)
    ]
    }
    , \quad 
    y \sim \mathcal{N}(y; 0, I)
    $ \, ,

    \item 
    $
    \nabla \log p_t^V(x)
    =
    \frac{1}{1 - \lambda_t} \left(
    \E_{y | x} [y]
    - x
    \right)
    , \quad
    y | x \sim 
    \mu \left(
    \frac{y}{\sqrt{\lambda_t}}
    \right)
    \times
    \mathcal{N}(y; x, (1 - \lambda_t) I)
    $    \, ,

    \item 
    $
    \nabla \log p_t^V(x)
    =
    \frac{1}{\sqrt{1 - \lambda_t}}
    \frac{
    \E_{y} \left[
    y \mu \left(
    \frac{x + \sqrt{1 - \lambda_t} y}{\sqrt{\lambda_t}}
    \right)
    \right]
    }{
    \E_{y} \left[
    \mu \left(
    \frac{x + \sqrt{1 - \lambda_t} y}{\sqrt{\lambda_t}}
    \right)
    \right]
    }
    , \quad 
    y \sim
    \mathcal{N} (
    y; 0, I
    )\, .    
    $    
\end{enumerate}

Derivations can be found in Appendix~\ref{app:sec:monte_carlo_estimators_scores}. The  second~\citep{akhound2024iterated}, third~\citep{huang2024reverse,he2024zeroth,grenioux2024stochastic}, and fourth~\citep{saremi2024chain} identities have been used to build Monte Carlo estimators of the score. Hence, one natural way to group these identities is by how difficult it is to draw the samples $y$. The second and fourth identities sample from a Gaussian distribution, whereas the first and third identities sample from an auxiliary multi-modal distribution from generating $y|x$. As $t$ gets closer to zero, this auxiliary distribution resembles the original target, so sampling from it progressively becomes as difficult as sampling from the original target density. This observation explains why, unsurprisingly,~\citet{huang2024reverse,he2024zeroth} do not improve upon existing results. Another way to group these identities is whether or not they require evaluating the score of the target distribution: identities one and two do and are referred to as TSI estimators, referring to the Target Score Identity (TSI) used to obtain them~\citep{debortoli2024targetscorematching}; identities three and four do not and are referred to as DSI estimators, referring to a Denoising Score Identity (DSI) used to obtain them~\citep{he2025neuralsampler}. Combinations of TSI and DSI estimators have also been considered~\citep{phillips2024particle,he2025neuralsampler}.

\paragraph{A self-normalized estimator}
We now focus on the estimator of the scores that we use. It is a rewrite of the fourth identity and is a \textit{ratio} of expectations under Gaussians
$$
\nabla \log(p_t^V)(z) = \frac{-1}{1-e^{-2t}}\frac{\mathbb{E}[Y_te^{-V(e^{t}(z-Y_t))}]}{\mathbb{E}[e^{-V(e^{t}(z-Y_t))}]} \, ,
$$
with $Y_t \sim \mathcal{N}(0, (1-e^{-2t})I_d)$ which is easy to simulate. While conventional statistical wisdom may suggest using independent samples to estimate both the numerator and the denominator, we voluntarily choose to correlate them and implement instead

\begin{equation}
\label{eq:def_estimator}
\hat{s}_{t,n}(z) \coloneqq \frac{-1}{1-e^{-2t}} \frac{\sum_{i=1}^n y_ie^{-V(e^{t}(z-y_i))}}{\sum_{i=1}^n e^{-V(e^{t}(z-y_i))}} \, ,
\end{equation}
where the $y_i$ are independent Gaussians such that $y_i \sim \mathcal{N}(0, (1-e^{1-2t})I_d)$ ; we refer to this estimator as \textit{self-normalized} as common in the sampling literature \citep{agapiou2017importance}. The key property of self-normalized estimators is that they remain nearly bounded: in our case, it holds uniformly in $z$ that
$$
\mathbb{E}[\| \hat{s}_{t,n}(z)] \| \leq \frac{\mathbb{E}[\max_i \| y_i \|]}{1-e^{-2t}} \sim \sqrt{\frac{d\log(n)}{1-e^{-2t}}} \, .
$$ 
This boundedness will allow us to derive a non-asymptotical control on the quadratic error that we present in the next section.

We now explain how this differs from previous work. While this estimator was already considered in \citep{saremi2024chain}, the authors did not use it within the reverse diffusion pipeline and more importantly, we are the first work to derive quantitative guarantees for this estimator, which is one of our core contributions. Similarly, an estimator close to this one was considered in the context of Föllmer Flows \citep{jiao2021convergence, ruzayqat2023unbiased, ding2023sampling} in order to approximate the shift in the corresponding Schrödinger bridge. We show in Appendix \ref{appendix:follmer_comparison} that while this shift is itself close to the intermediate scores that we seek to approximate, the resulting self-normalized estimator is more degenerate. Namely, as discussed above, their guarantees are significantly weaker.

\section{Sketch of proof of the main result}
\label{sec:proof}

The proof is decomposed in three steps: (i) we derive a non-asymptotic bound on the quadratic error of the estimator presented in \eqref{eq:def_estimator} (ii) we show that under Assumptions \ref{assumption:semi_convexity}-\ref{assumption:dissipativity}, the integrated error of this estimator (that appears in the bound of \Cref{th:conforti}) can be fully controlled by the zeroth and second order moments of the ratio $\Phi_t = p_t^{2V}/p_t^{V}$, where $p_t^{2V}$ (respectively $p_t^{V}$) is the density of the forward process defined in \eqref{eq:def_ou} initialized at $\mu^2$ (respectively $\mu$) (iii) we provide a quantitative bound on these moments as well as other relevant quantities and we conclude with Theorem \ref{th:conforti}.

\begin{proposition}[Non-asymptotic bound on the quadratic error]
\label{prop:non_asympt}
For all $z \in \mathbb{R}^d$, $n$ and $t>0$, denoting $p_t^{2V}$ (respectively $p_t^{V}$) the density of the forward process defined in \eqref{eq:def_ou} initialized at $\mu^2 \propto e^{-2V}$ (respectively $\mu \propto e^{-V}$), $Z_{2V}$ (respectively $Z_V$) the normalizing constant of $\mu^2$ (respectively $\mu$) and $\pi$ the density of the standard Gaussian, it holds that
$$
     \mathbb{E}\left[ \| \hat{s}_{t,n}(z) - \nabla \log(p_t^V)(z) \|^2 \right] \leq  \frac{32e^2(d + \log(n))}{n(1-e^{-2t})}\left( 1 + \theta_t(z)\frac{p_t^{2V}(z) Z_{2V} e^{td}}{(p_t^V(z)Z_V)^2}\right) \, ,
$$
with $\theta_t(z) =  \Delta \log(p_t^{2V})(z) - \frac{\Delta \log(\pi)(z)}{1-e^{-2t}} + \| \nabla \log(p_t^V)(z) \|^2 + \| \nabla \log(p_t^{2V})(z) \|^2 + 1$.
\end{proposition}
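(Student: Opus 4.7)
The plan is to decompose the self-normalized error by splitting on the event that the random denominator is well-concentrated, and to bound each piece using Gaussian Stein identities (for the good event) and a tail bound on the maximum of Gaussian norms (for the bad event). Setting $\sigma^2 = 1 - e^{-2t}$, letting $\phi$ denote the density of $\mathcal{N}(0,\sigma^2 I_d)$, and $W(y) = e^{-V(e^t(z-y))}$, a change of variables in the definition of $p_t^V$ identifies $M_V := \mathbb{E}_\phi[W] = p_t^V(z) Z_V e^{-td}$ and analogously $M_{2V} := \mathbb{E}_\phi[W^2] = p_t^{2V}(z) Z_{2V} e^{-td}$, so the ratio $M_{2V}/M_V^2$ coincides with the $\chi^2$-factor in the bound. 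Since $W$ depends on $(y,z)$ only through $e^t(z-y)$, one has $\nabla_y W^k = -\nabla_z W^k$ and $\Delta_y W^k = \Delta_z W^k$; applying the Gaussian Stein identity $\mathbb{E}_\phi[y f] = \sigma^2 \mathbb{E}_\phi[\nabla_y f]$ to $f = W$ gives $a := \mathbb{E}_\phi[yW]/M_V = -\sigma^2 \nabla \log p_t^V(z)$, and the error reduces to $\hat{s}_{t,n}(z) - \nabla \log p_t^V(z) = -\sigma^{-2}\,\bar g_n/\bar V_n$, where $g(y) = W(y)(y - a)$ is centered under $\phi$ and $\bar V_n = n^{-1}\sum_i W(y_i)$.

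On the good event $A = \{\bar V_n \geq M_V/2\}$, we have $\|\bar g_n/\bar V_n\|^2 \leq 4\|\bar g_n\|^2/M_V^2$ and $\mathbb{E}[\|\bar g_n\|^2] = \mathbb{E}_\phi[\|y-a\|^2 W^2]/n$ by independence. Applying the second-order Stein identity $\mathbb{E}_\phi[\|y\|^2 f] = d\sigma^2 \mathbb{E}_\phi[f] + \sigma^4 \mathbb{E}_\phi[\Delta_y f]$ to $f = W^2$ together with $\mathbb{E}_\phi[yW^2] = -\sigma^2 M_{2V}\nabla\log p_t^{2V}(z)$ (another Stein application) yields, after expanding the square,
\begin{equation*}
\mathbb{E}_\phi[\|y-a\|^2 W^2] = M_{2V}\sigma^4\bigl[d/\sigma^2 + \Delta\log p_t^{2V}(z) + \|\nabla\log p_t^V(z) - \nabla\log p_t^{2V}(z)\|^2\bigr].
\end{equation*}
Rewriting $d/\sigma^2 = -\Delta\log\pi/\sigma^2$ and using $\|u-v\|^2 \leq 2(\|u\|^2 + \|v\|^2)$ collapses the bracket into a universal multiple of $\theta_t(z)$; dividing by $\sigma^4$ yields the $\theta_t \cdot M_{2V}/M_V^2/n$ portion of the target bound.

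For the complementary event, Chebyshev gives $\mathbb{P}(A^c) \leq 4 M_{2V}/(nM_V^2)$. The self-normalized structure provides the pointwise bound $\|\bar U_n/\bar V_n\| \leq \max_i \|y_i\|$, hence $\|\bar g_n/\bar V_n\|^2 \leq 2\max_i\|y_i\|^2 + 2\sigma^4\|\nabla\log p_t^V(z)\|^2$; the latter piece combined with $\mathbb{P}(A^c)$ is absorbed into the $\theta_t$ term. The main obstacle is bounding $\mathbb{E}[\max_i\|y_i\|^2\,\mathbf{1}_{A^c}]$ at the required rate $(d+\log n)/n$, since a direct Cauchy-Schwarz only delivers $1/\sqrt{n}$. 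To recover the $1/n$ rate, I would introduce an auxiliary event $B = \{\max_i\|y_i\|^2 \leq C\sigma^2(d+\log n)\}$ with $\mathbb{P}(B^c) \leq n^{-2}$, obtained via a Laurent--Massart chi-squared tail bound combined with a union bound over the $n$ samples. On $A^c\cap B$, the uniform bound times $\mathbb{P}(A^c)$ contributes $(d+\log n)\cdot M_{2V}/(nM_V^2)$; on $B^c$, a Cauchy-Schwarz using $\mathbb{E}[\max_i\|y_i\|^4]^{1/2} = O(\sigma^2(d+\log n))$ and $\mathbb{P}(B^c)^{1/2} \leq 1/n$ produces the additive baseline $(d+\log n)/n$ term, i.e.\ the ``$1$'' in the factor $1 + \theta_t \chi^2$. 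The whole argument is sensitive to the tail calibration of $B$: any weaker bound on $\mathbb{P}(B^c)$ would spoil the $1/n$ rate.
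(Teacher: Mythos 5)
Your proposal is correct, but it follows a genuinely different route from the paper's proof. The paper conditions on a \emph{two-sided} event (denominator not too small \emph{and} numerator not too large), performs a second-order Taylor expansion of the map $(x,p)\mapsto\|x/p-N/D\|^2$ around $(N,D)$ with a Cauchy--Schwarz step and a case disjunction on whether $\|\nabla\log p_t^V\|\ge 1$ to choose the numerator threshold $\kappa$, and then handles $\mathbb{E}[\max_i\|y_i\|^2\mathds{1}_{\bar A}]$ on the bad event via Chebyshev combined with H\"older's inequality at exponent $p=\log n$, which is where the $e^2(d+\log n)$ factor comes from. You instead center the numerator exactly, writing the error as $-\sigma^{-2}\bar g_n/\bar V_n$ with $g(y)=W(y)(y-a)$ of mean zero, so no Taylor remainder, no intermediate-point bookkeeping, and no event on the numerator are needed; you condition only on $\{\bar V_n\ge M_V/2\}$, obtain the same variance quantities through Stein/Tweedie identities (your bracket $d/\sigma^2+\Delta\log p_t^{2V}+\|\nabla\log p_t^V-\nabla\log p_t^{2V}\|^2$ is exactly the trace of the paper's Tweedie Hessian formula), and replace the H\"older step by a Laurent--Massart truncation event $B$ with $\mathbb{P}(B^c)\le n^{-2}$ plus a crude Cauchy--Schwarz on $B^c$ (which still relies on the $p=\log n$ moment bound for the chi-square maximum, so that ingredient reappears in disguise). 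What each approach buys: yours is algebraically cleaner and dispenses with the $\kappa$-disjunction entirely; the paper's single H\"older argument avoids the auxiliary event and tracks the explicit constant $32e^2$ directly. Two small points you should make explicit to fully close your argument: absorbing the $(d+\log n)\mathbb{P}(A^c)$ and $\|\nabla\log p_t^V\|^2\mathbb{P}(A^c)$ contributions into the $\theta_t$ term is legitimate only because $\theta_t\ge 1$, which follows from $\Delta\log p_t^{2V}+d/(1-e^{-2t})+\|\nabla\log p_t^{2V}\|^2\ge 0$ (again Tweedie), and you should verify that your accumulated constants (roughly $72$ from the Laurent--Massart threshold times Chebyshev, plus the good-event and $B^c$ contributions) indeed fit under $32e^2\approx 236$ --- they do with the natural choices, but the proposition asserts that specific constant.
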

The complete proof is left to Appendix~\ref{appendix:prop_non_asympt} yet we briefly sketch the main arguments. We split the expectation on the event $A$ where the empirical denominator $\hat{D}$ of \eqref{eq:def_estimator} (respectively  the empirical numerator $\hat{N}$) is not too small (respectively not too large) with respect to its expectation $D$ (respectively $\| N \|$) and on the complementary $\bar{A}$. Over $A$, we use a second-order Taylor expansion to make the variances of both the numerator and the denominator appear, and compute them explicitly. Conversely, the quadratic error of the estimator remains almost bounded on $\bar{A}$. 
We use Chebyshev's inequality to upper-bound $\mathbb{P}(\bar{A})$ and make the variances of the numerator and of the denominator appear again, which concludes the proof. 

\begin{remark}
    Generic bounds on self-normalized estimators were derived in \citet[Theorem 2.3]{agapiou2017importance}. Yet, we show in Appendix~\ref{appendix:agapiou} that if used in our context, they would involve at least an extra $1/(p_t^{V})$ factor which can cause the integrated error $\int  \mathbb{E}\left[ \| \hat{s}_{t,n}(z) - \nabla \log(p_t^V)(z) \|^2 \right] \ud p_t^V(z)$ to diverge. 
\end{remark}
Then, we need to control the Laplacian of the forward processes as well as their gradient appearing in $\theta_t$ in the former proposition. As mentioned in the previous section, the intermediate scores can be re-written as
$$
\nabla \log(p_t^V)(z) = \frac{z - e^{-t} \mathbb{E}_{q_{t,z}}[Y]}{1-e^{-2t}} \, ,
$$
with $q_{t,z}(x) \propto e^{-V(x)} e^{-\frac{\| e^{-t} x - z \|^2}{2(1-e^{-2t})}}$. Now we note that if $\mu \propto e^{-V}$ is dissipative, so is $q_{t,z}$; in particular we can quantitatively bound its second order moment and \textit{a fortiori}, upper-bound $\|\nabla \log(p_t^V)(z)\|^2$. 
\begin{proposition}[Regularity bounds on the forward process]
\label{prop:regularity_forward}
   Suppose that Assumption~\ref{assumption:dissipativity} holds. Then, for all $z\in \R^d$ and $t>0$, it holds that 
    \begin{equation*}
        \begin{cases}
            & \| \nabla \log(p_t^V) (z) \|^2 \leq \frac{\|z \|^2}{(1-e^{-2t})^2}\biggl( 2 + \frac{e^{-2t}}{a(1-e^{-2t})}\biggr) + \frac{2e^{-2t}(2b +d)}{a(1-e^{-2t})^2} \, ,\\
            & \Delta \log(p_t^V)(z) \leq \frac{e^{-2t} }{(1-e^{-2t})^2} \biggl( \frac{\|z \|^2}{2a(1-e^{-2t})} + \frac{2b + d}{a}\biggr) \, .
        \end{cases}
    \end{equation*}
\end{proposition}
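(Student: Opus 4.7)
The plan is to start from the posterior representation of the score and to leverage dissipativity to control moments of this posterior measure. Let $q_{t,z}(x) \propto e^{-V(x)}\, e^{-\|e^{-t}x-z\|^2/(2(1-e^{-2t}))}$, which is the conditional law of $X_0$ given $X_t=z$ under the forward dynamics. Tweedie's identity and its differentiated counterpart give
\[
\nabla \log p_t^V(z) = \frac{e^{-t}\mathbb{E}_{q_{t,z}}[X]-z}{1-e^{-2t}}, \qquad \nabla^2 \log p_t^V(z) = \frac{e^{-2t}}{(1-e^{-2t})^2}\mathrm{Cov}_{q_{t,z}}(X) - \frac{I_d}{1-e^{-2t}}.
\]
The second identity follows either from differentiating the first under the integral sign (noting that $\partial_{z_i}\log q_{t,z}(x) = e^{-t}(x_i - \mathbb{E}_{q_{t,z}}[X_i])/(1-e^{-2t})$ after cancelling the normalizer) or from recognising $z\mapsto \log p_t^V(z)$ as the log-partition of the exponential tilt family $q_{t,z}$. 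Both displayed identities reduce the proposition to a single estimate: a bound on the second moment $\mathbb{E}_{q_{t,z}}[\|X\|^2]$. Indeed, the gradient bound then follows from $\|e^{-t}\mathbb{E}[X]-z\|^2 \leq 2\|z\|^2 + 2e^{-2t}\mathbb{E}[\|X\|^2]$ and Jensen, while the Laplacian bound follows by taking the trace of the second identity, using $\mathrm{Tr}(\mathrm{Cov}_{q_{t,z}}(X)) \leq \mathbb{E}_{q_{t,z}}[\|X\|^2]$, and dropping the non-positive term $-d/(1-e^{-2t})$.

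To bound $\mathbb{E}_{q_{t,z}}[\|X\|^2]$, I would transfer the dissipativity of $V$ to the posterior potential $\tilde V_{t,z}(x) = V(x) + \|e^{-t}x-z\|^2/(2(1-e^{-2t}))$. Assumption~\ref{assumption:dissipativity} gives
\[
\langle \nabla \tilde V_{t,z}(x), x\rangle \;\geq\; \Bigl(a+\tfrac{e^{-2t}}{1-e^{-2t}}\Bigr)\|x\|^2 \;-\; b \;-\; \tfrac{e^{-t}\langle z,x\rangle}{1-e^{-2t}}.
\]
Since $q_{t,z}\propto e^{-\tilde V_{t,z}}$, the standard integration by parts identity $\mathbb{E}_{q_{t,z}}[\langle \nabla \tilde V_{t,z}(X), X\rangle] = d$ yields, after combining with the above,
\[
\Bigl(a+\tfrac{e^{-2t}}{1-e^{-2t}}\Bigr)\mathbb{E}_{q_{t,z}}[\|X\|^2] \;\leq\; d+b+\tfrac{e^{-t}\|z\|\sqrt{\mathbb{E}_{q_{t,z}}[\|X\|^2]}}{1-e^{-2t}},
\]
using Cauchy--Schwarz and Jensen on the linear term in $z$. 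This is a quadratic inequality in $\sqrt{\mathbb{E}_{q_{t,z}}[\|X\|^2]}$; extracting its positive root and applying $(u+v)^2 \leq 2u^2+2v^2$ produces a bound of the form $\mathbb{E}_{q_{t,z}}[\|X\|^2] \lesssim \|z\|^2/(2a(1-e^{-2t})) + (d+b)/a$, which is precisely what the two claimed inequalities need.

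The main technical obstacle is the calibration of constants when simplifying this quadratic-root bound. The factor $a+e^{-2t}/(1-e^{-2t})$ must be carefully separated: using AM--GM one has $(a+e^{-2t}/(1-e^{-2t}))^2 \geq 2a\,e^{-2t}/(1-e^{-2t})$, which is exactly the inequality that reduces the $\|z\|^2$ coefficient from a naive $e^{-2t}/((1-e^{-2t})^2 a^2)$ down to the sharp $1/(2a(1-e^{-2t}))$ that appears in the statement. Without this trick one either gets a constant that blows up as $t\to 0$ or loses a factor of $a$. Once this is handled, reinjecting the moment bound into the Tweedie formulas and simplifying term by term is purely mechanical and delivers the two inequalities in the statement.
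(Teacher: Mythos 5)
Your proposal follows essentially the same route as the paper's proof: both reduce the two bounds, via the posterior (Tweedie) representations of $\nabla \log p_t^V$ and $\nabla^2 \log p_t^V$, to a bound on $\mathbb{E}_{q_{t,z}}[\|X\|^2]$, obtained from the dissipativity of the tilted potential together with the integration-by-parts identity $\mathbb{E}_{q_{t,z}}[\langle \nabla \tilde V_{t,z}(X), X\rangle]=d$ (the paper's Lemma~\ref{lemma:bound_second_moment_dissipative} applied to $q_{t,z}$). The only difference is how the cross term in $z$ is absorbed: the paper completes the square pointwise ($e^{-t}\|x\|^2-\langle z,x\rangle \geq -e^{t}\|z\|^2/4$) before invoking the moment lemma, whereas you apply Cauchy--Schwarz after taking expectations and solve a quadratic inequality in $\sqrt{\mathbb{E}_{q_{t,z}}[\|X\|^2]}$; this yields the same $\|z\|^2/(2a(1-e^{-2t}))$ coefficient but an additive term $2(b+d)/a$ rather than the stated $(2b+d)/a$, a factor-$2$ slack on $d$ that is immaterial for the downstream use (and avoidable by doing the Young-type splitting pointwise, as the paper does).
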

The complete proof is left in Appendix \ref{appendix:prop_regularity_forward}. This result implies that the $\theta_t$ term defined in Proposition \ref{prop:non_asympt} is at most of order of order $\theta_t(z)\sim (1+ \| z \|^2)$ w.r.t. $z$. In particular, the average integrated error $\mathbb{E}[\| \hat{s}_{t,n}(z) - \nabla \log(p_t^V) \|_{L^2(p_t^V)}^2]$ can be upper-bounded with respect to the zeroth and the second order moments of the ratio $\Phi_t = p_t^{2V}/p_t^{V}$. In the next proposition, we show that these moments are bounded under the semi-log-convexity assumption. By a slight abuse of notation, we shall denote $m_i(\Phi_t) = \int \| x \|^{i} \Phi_t(z) \ud z$ the $i$-th moment of $\Phi_t$.

\begin{lemma}[Bounds on the moments of the ratio]
\label{lemma:upper_bound_ratio}
    Assume that $\mu \propto e^{-V}$ has finite second moment $m_2$ and that Assumption~\ref{assumption:semi_convexity} holds. Then,
    \begin{equation*}
        \begin{cases}
            & m_0(\Phi_t) \leq \frac{(Z_V)^2}{Z_{2V}}e^{td} (\beta(1-e^{-2t}) + e^{-2t})^{d} \, , \\
            & m_2(\Phi_t) \leq  \frac{2e^{t(d+2)}(Z_V)^2}{Z_{2V}} (\beta(1-e^{-2t}) + e^{-2t})^{d+2} \left[m_2 + d (\beta+1) + 2d\log\left(\frac{\beta\mu(0)^{-2/d}}{2\pi}\right) \right]\, ,
        \end{cases}
    \end{equation*}
    where $Z_{2V}$ (respectively  $Z_V$) is the normalization constant of $e^{-2V}$ (respectively  $e^{-V}$) and where $\pi$ in the right-hand-side refers to the constant $\pi \approx 3.14$.
\end{lemma}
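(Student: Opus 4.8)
The plan is to reduce both moments to a pointwise estimate at a single well‑chosen point. Write $\lambda_t=e^{-2t}$, $\sigma_t^2=1-\lambda_t$, let $\mu=e^{-V}/Z_V$ be the \emph{normalised} target, and let $g_t$ be the $\mathcal N(0,\sigma_t^2 I_d)$ density. Integrating the OU process gives $p_t^V(x)=\int\mu(y)\,g_t(x-e^{-t}y)\,\ud y$, and since the normalised density of $\mu^2$ equals $\mu^2/\!\int\!\mu^2$ with $\int\mu^2=Z_{2V}/Z_V^2$, also $p_t^{2V}(x)=\tfrac{Z_V^2}{Z_{2V}}\int\mu(y)^2 g_t(x-e^{-t}y)\,\ud y$. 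By Tonelli's theorem this turns the claim into
\[
m_i(\Phi_t)=\frac{Z_V^2}{Z_{2V}}\int\mu(y)^2\,G_t^{(i)}(y)\,\ud y,\qquad G_t^{(i)}(y):=\int\|x\|^i\,\frac{g_t(x-e^{-t}y)}{p_t^V(x)}\,\ud x ,
\]
so it suffices to bound $G_t^{(i)}(y)$, the $i$-th moment of the function $h_y(x):=g_t(x-e^{-t}y)/p_t^V(x)$, uniformly in $y$, by $C_t/\mu(y)$ (times a $1+\|y\|^2$ factor for $i=2$); integrating against $\mu(y)^2$ and using $\int\mu=1$, $\int\|y\|^2\mu=m_2$ then concludes.

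The first ingredient is that Assumption~\ref{assumption:semi_convexity} is \emph{improved} by Gaussian smoothing. I would show $\nabla^2\log p_t^V\succeq-c_t I_d$ with $c_t=\beta/(\beta(1-e^{-2t})+e^{-2t})$ — this is where the constant $\beta(1-e^{-2t})+e^{-2t}$ appears — by differentiating the score identity to get $\nabla^2\log p_t^V=\sigma_t^{-4}\mathrm{Cov}_{\mathrm{post}}-\sigma_t^{-2}I_d$ and lower‑bounding the posterior covariance using the log‑Hessian bound $-(\beta/\lambda_t)I_d$ of the rescaled target. Then $-\log h_y$ has Hessian $\sigma_t^{-2}I_d+\nabla^2\log p_t^V\succeq c_t''I_d$ with $c_t''=\sigma_t^{-2}-c_t=\lambda_t/(\sigma_t^2(\beta\sigma_t^2+\lambda_t))>0$, i.e. $h_y$ is $c_t''$-strongly log-concave, so it has a unique maximiser $x^\star=x^\star(y)$ and, comparing it with its Gaussian envelope, $G_t^{(0)}(y)\le h_y(x^\star)(2\pi/c_t'')^{d/2}$ and $G_t^{(2)}(y)\le 2h_y(x^\star)(2\pi/c_t'')^{d/2}\big(\|x^\star-e^{-t}y\|^2+d/c_t''\big)+2\lambda_t\|y\|^2 G_t^{(0)}(y)$.

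The key step evaluates $h_y(x^\star)$. Stationarity $\nabla\log h_y(x^\star)=0$ reads $x^\star-e^{-t}y=-\sigma_t^2\nabla\log p_t^V(x^\star)$, so $g_t(x^\star-e^{-t}y)=(2\pi\sigma_t^2)^{-d/2}\exp(-\tfrac{\sigma_t^2}2\|\nabla\log p_t^V(x^\star)\|^2)$ and hence $h_y(x^\star)(2\pi/c_t'')^{d/2}=e^{td}(\beta\sigma_t^2+\lambda_t)^{d/2}/F(x^\star)$, where $F(x):=p_t^V(x)\exp(\tfrac{\sigma_t^2}2\|\nabla\log p_t^V(x)\|^2)$. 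To lower‑bound $F(x^\star)$ I would use Assumption~\ref{assumption:semi_convexity} once more, as $e^{-V(z)}\ge e^{-V(y)-\langle\nabla V(y),z-y\rangle-\frac\beta2\|z-y\|^2}$, insert it in $p_t^V(x)=\tfrac1{Z_V}\int e^{-V(z)}g_t(x-e^{-t}z)\,\ud z$ and evaluate the Gaussian integral in $z$, which yields, for every $x$,
\[
p_t^V(x)\ge\frac{\mu(y)}{(\beta\sigma_t^2+\lambda_t)^{d/2}}\exp\!\Big(-\tfrac{\|x-e^{-t}y\|^2}{2\sigma_t^2}\Big)\exp\!\Big(\tfrac{\sigma_t^2}{2(\beta\sigma_t^2+\lambda_t)}\big\|\tfrac{e^{-t}}{\sigma_t^2}(x-e^{-t}y)-\nabla V(y)\big\|^2\Big).
\]
Evaluated at $x=x^\star$ the first exponential is precisely $\exp(-\tfrac{\sigma_t^2}2\|\nabla\log p_t^V(x^\star)\|^2)$, which cancels the score correction in $F$, so discarding the remaining nonnegative exponential gives $F(x^\star)\ge\mu(y)/(\beta\sigma_t^2+\lambda_t)^{d/2}$; therefore $G_t^{(0)}(y)\le e^{td}(\beta\sigma_t^2+\lambda_t)^{d}/\mu(y)$, and integrating against $\mu(y)^2$ gives the first bound immediately.

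For $m_2$ I would run the same computation and bound the extra factor termwise: $2\lambda_t\|y\|^2 G_t^{(0)}(y)$ produces the $m_2$ summand, the deterministic $d/c_t''$ produces the $d(\beta+1)$-type summand, and $\|x^\star-e^{-t}y\|^2=\sigma_t^4\|\nabla\log p_t^V(x^\star)\|^2$ is tamed by \emph{retaining} the exponential discarded above (using $re^{-cr}\le(ec)^{-1}$) together with the semi‑convex Polyak–Łojasiewicz bound $\|\nabla V\|^2\le2\beta(V-\inf V)$, the identity $\int\|\nabla\log\mu\|^2\ud\mu=-\int\Delta\log\mu\,\ud\mu\le\beta d$, and the sup bound $\|\mu\|_\infty\le(\beta/2\pi)^{d/2}$ (itself a consequence of Assumption~\ref{assumption:semi_convexity}), collecting constants into the logarithmic summand $2d\log(\beta\mu(0)^{-2/d}/(2\pi))$. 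The main obstacle is the key step: a naive pointwise lower bound on $p_t^V$ (e.g. anchored at $0$ with the global Hessian bound $\beta$) is far too lossy and makes the $y$-integral diverge as soon as $\mu$ has heavy tails — one has to notice that the quantity governing $h_y$ at its maximiser is not $p_t^V(x^\star)$ but the Tweedie‑corrected $F(x^\star)=p_t^V(x^\star)e^{\frac{\sigma_t^2}2\|\nabla\log p_t^V(x^\star)\|^2}$, for which the spurious gradient terms cancel and leave a clean bound in terms of $\mu(y)$; getting the sharp smoothing constant $c_t$ is the other delicate point, everything else being routine Gaussian bookkeeping.
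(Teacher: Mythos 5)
Your argument is a genuinely different route from the paper's and, for $m_0(\Phi_t)$, it is complete and correct. The paper derives the evolution equation $\partial_t\log\Phi_t=\Delta\log\Phi_t-\langle\nabla\log\Phi_t,\nabla\log\pi\rangle+\|\nabla\log\Phi_t\|^2+2\langle\nabla\log p_t^V,\nabla\log\Phi_t\rangle$, integrates it against $\|z\|^i$ with repeated integration by parts, uses the Mikulincer--Shenfeld bound $\nabla^2\log\pi-\nabla^2\log p_t^V\preceq\frac{(\beta-1)e^{-2t}}{(1-e^{-2t})(\beta-1)+1}I_d$ pointwise, and closes with Gronwall (plus, for $m_2$, three auxiliary lemmas controlling $\max\Phi_t$, $\int_0^t\|\nabla\log p_s^V(0)\|^2\,\ud s$ and an interpolation through $m_1$). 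You instead Tonelli the convolution, reduce everything to a uniform-in-$y$ estimate on $G_t^{(i)}(y)=\int\|x\|^i g_t(x-e^{-t}y)/p_t^V(x)\,\ud x$, exploit that $x\mapsto g_t(x-e^{-t}y)/p_t^V(x)$ is $c_t''$-strongly log-concave (using the same smoothing-improved constant $\beta(1-e^{-2t})+e^{-2t}$, here via Cram\'er--Rao on the posterior covariance), and lower-bound the Tweedie-corrected $F(x^\star)=p_t^V(x^\star)e^{\sigma_t^2\|\nabla\log p_t^V(x^\star)\|^2/2}$ by $\mu(y)/(\beta\sigma_t^2+\lambda_t)^{d/2}$ via a completed Gaussian square anchored at $y$. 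I checked the algebra for $G_t^{(0)}$: the score correction does cancel exactly against the quadratic in $x^\star-e^{-t}y$, and the remaining exponential is nonnegative, giving $G_t^{(0)}(y)\le e^{td}(\beta\sigma_t^2+\lambda_t)^d/\mu(y)$ and hence precisely the stated $m_0$ bound after one trivial $y$-integration. This is a cleaner, $t$-pointwise proof that avoids the PDE/Gronwall apparatus entirely, a nice find.

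The $m_2$ part, however, is only gestured at and contains a real obstruction that your list of auxiliary facts does not resolve on its own. After the stationarity substitution you must bound $\int\mu(y)^2\|x^\star-e^{-t}y\|^2/F(x^\star)\,\ud y$ with $\|x^\star-e^{-t}y\|^2=\sigma_t^4\|\nabla\log p_t^V(x^\star)\|^2$, and the retained exponential is in $\|e^{-t}\nabla\log p_t^V(x^\star)+\nabla V(y)\|^2$, not in $\|\nabla\log p_t^V(x^\star)\|^2$. If you separate the two by Young and then apply the PL bound $\|\nabla V\|^2\le2\beta(V-V_*)$ inside an exponential, you end up with $\int\mu\,e^{\theta(V-V_*)}$ for $\theta=\beta\sigma_t^2/(\beta\sigma_t^2+\lambda_t)\to1$ as $t\to\infty$, which diverges; so the ``routine Gaussian bookkeeping'' does not go through in the way you describe. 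The fix is to split the \emph{quadratic prefactor} rather than the exponent: write $\sigma_t^4\|\nabla\log p_t^V(x^\star)\|^2\le2\sigma_t^4e^{2t}(\|e^{-t}\nabla\log p_t^V(x^\star)+\nabla V(y)\|^2+\|\nabla V(y)\|^2)$, tame the first summand by $r e^{-cr}\le(ec)^{-1}$ against the retained exponential, and integrate the second against $\mu$ using $\int\|\nabla V\|^2\ud\mu\le\beta d$. If you do this you obtain a bound of the form $\frac{Z_V^2}{Z_{2V}}e^{t(d+2)}(\beta\sigma_t^2+\lambda_t)^{d}\big[\lambda_t m_2+O(d)(\beta\sigma_t^2+\lambda_t)\big]$---which is in fact slightly sharper than the paper's (one power of $(\beta\sigma_t^2+\lambda_t)$ less) and carries no $\log\mu(0)$ term at all. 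Your assertion that the constants ``collect into the logarithmic summand $2d\log(\beta\mu(0)^{-2/d}/(2\pi))$'' therefore does not follow from your method: that summand is an artifact of the paper's time-integrated Lemma~\ref{lemma:upper_bound_int_score_0}, and it does not naturally arise in your pointwise argument. Either carry out the corrected split above and state the (admissible, sharper) bound you actually get, or drop the claim that you recover the paper's exact form. Also note two small slips: the relation $\nabla^2\log p_t^V=\lambda_t\sigma_t^{-4}\mathrm{Cov}_{\mathrm{post}}-\sigma_t^{-2}I_d$ needs the $\lambda_t$ factor, and your displayed $G_t^{(2)}$ inequality is off by harmless numerical factors from applying $\|x\|^2\le2\|x-e^{-t}y\|^2+2\lambda_t\|y\|^2$ and $\|x-e^{-t}y\|^2\le2\|x-x^\star\|^2+2\|x^\star-e^{-t}y\|^2$.
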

The proof of this lemma is deferred to Appendix~\ref{appendix:lemma_ub_ratio}. It relies on a key result in~\citet[Lemma 5]{Mikulincer2023} where it is shown that for $\beta$-semi-log convex distributions, the following bound holds:
$$
\nabla^2 \log(\pi) - \nabla^2 \log(p_t^V) \preceq \frac{(\beta-1) e^{-2t}}{(1-e^{-2t})(\beta-1) + 1}I_d \, .
$$
In particular, the right-hand-side remains bounded w.r.t. $\beta$ whenever $t>0$, which allows to avoid an exponential dependence in $\beta$ in our final bounds. It suffices now to control $\mathcal{I}(\mu, \pi), m_2$ and $\mu(0)$ in order to apply Theorem \ref{th:conforti} and conclude.
\begin{lemma}
\label{lemma:up_fisher_m2_density}
    Assume that $\mu\propto e^{-V}$ is such that 
Assumption~\ref{assumption:semi_convexity} and \ref{assumption:dissipativity} hold:  $\nabla^2 V \preceq \beta I_d$ and $\langle \nabla V(x), x \rangle \geq a \| x \|^2 - b$ for some $a>0, b, \beta \geq 0$.  Then, 
    \begin{equation*}
    \begin{cases}
        & m_2 \leq (b+2d)/a \, ,\\
        & \mathcal{I}(\mu, \pi) \leq 2(b+2d)/a + 2\beta d \, ,\\
        & \log(\mu(0)^{-2/d}) \leq  4\beta b/(da) + 2\pi  + \log(2/a)\, .
    \end{cases}
    \end{equation*}
    Here again, the $\pi$ in the right-hand-side refers to the constant $\pi \approx3.14$.
\end{lemma}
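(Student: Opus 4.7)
The plan is to establish each of the three bounds separately. The first two follow from integration-by-parts (Stein-type) identities combined with the assumptions, while the third is more delicate because Assumption~\ref{assumption:dissipativity} constrains only $\langle \nabla V(x), x\rangle$ rather than $V$ itself.

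For the second-moment bound, I multiply the dissipativity inequality by $\mu$ and integrate, giving $am_2 - b \leq \int \langle \nabla V, x\rangle \ud \mu$. A standard integration by parts (using $\nabla \mu = -\mu \nabla V$) rewrites the right-hand side as $\int \mu(x)\,\nabla\cdot x\,\ud x = d$, producing $m_2 \leq (b+d)/a \leq (b+2d)/a$. For the Fisher-information bound, since $\pi$ is the standard Gaussian, I expand $\mathcal{I}(\mu,\pi) = \int \|x - \nabla V(x)\|^2 \ud \mu$ via $\|u-v\|^2 \leq 2\|u\|^2 + 2\|v\|^2$, reducing to controlling $\int \|\nabla V\|^2 \ud \mu$. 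Another integration by parts turns this into $\int \Delta V\, \ud \mu$, which is at most $d\beta$ by Assumption~\ref{assumption:semi_convexity}. Combining with the second-moment bound yields the claimed inequality.

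The lower bound on $\mu(0)$ is the main obstacle. Since shifting $V$ by a constant does not affect $\mu$, I may assume $V(0) = 0$, so that $\log(1/\mu(0)) = \log Z_V$ and it suffices to upper-bound the partition function. The strategy is to compare $\mu$ to the centered Gaussian $g = \mathcal{N}(0, a^{-1} I_d)$ via the non-negativity of $\KL(\mu\|g)$, which after unfolding yields
\[
\log Z_V \leq -\E_\mu[V] + \tfrac{d}{2}\log(2\pi/a) + \tfrac{a}{2} m_2.
\]
To control $-\E_\mu[V]$, I locate a minimizer $x^*$ of $V$: dissipativity with $\nabla V(x^*) = 0$ forces $\|x^*\|^2 \leq b/a$, and semi-log-convexity then yields $V(x^*) \geq V(0) - (\beta/2)\|x^*\|^2 \geq -\beta b/(2a)$, and hence $-\E_\mu[V] \leq \beta b/(2a)$. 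Plugging this bound together with $m_2 \leq (b+2d)/a$ into the KL inequality and dividing by $d/2$ produces a bound of the form $\beta b/(da) + \log(2\pi/a) + (b+2d)/d$, from which the claimed form is obtained by absorbing the logarithmic and $b/d$ remainders into the universal constants $2\pi$ and $\log(2/a)$ (using $\log t \leq t$ and allowing the factor in front of $\beta b/(da)$ to grow from $1$ to $4$).

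The main technical difficulty is precisely this last step: dissipativity provides only moment-type information on $\nabla V$, so the Gaussian comparison is essential to convert it into a bound on the partition function $Z_V$, and matching the clean form stated in the lemma requires careful handling of the sub-polynomial remainders. I expect this density lower bound to be the only non-routine component of the proof.
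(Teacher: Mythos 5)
Your bounds on $m_2$ and $\mathcal{I}(\mu,\pi)$ follow the paper's argument essentially verbatim (integrate dissipativity against $\mu$ and use $\int\langle\nabla V,x\rangle\,\ud\mu=d$; then Young plus $\int\|\nabla V\|^2\ud\mu=\int\Delta V\,\ud\mu\le\beta d$), so there is nothing to add there. For the bound on $\mu(0)$ your route is genuinely different from the paper's and it works. The paper integrates the dissipativity inequality along rays from $\delta=1/\sqrt2$ to $1$ (using the minimizer facts $\|x^*\|^2\le b/a$ and $V(0)-V(x^*)\le\beta\|x^*\|^2$ to handle the segment near the origin) to get the pointwise bound $V(x)-V(0)\ge a\|x\|^2/4-b(\beta/a+\log(2)/2)$, and then bounds $Z_V$ by a Gaussian integral; you instead normalize $V(0)=0$ and use non-negativity of $\KL(\mu\,\|\,\mathcal N(0,a^{-1}I_d))$, i.e. $\log Z_V\le-\E_\mu[V]+\tfrac d2\log(2\pi/a)+\tfrac a2 m_2$, controlling $-\E_\mu[V]$ by the same two minimizer facts ($\|x^*\|^2\le b/a$ from dissipativity at the critical point, $V(x^*)\ge-\tfrac\beta2\|x^*\|^2$ from semi-log-convexity). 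Your version is shorter and recycles the already-proved second-moment bound, at the cost of being less self-contained; the paper's version yields a pointwise quadratic-growth estimate on $V-V(0)$ which is of independent use. Two small points you should make explicit: (i) the absorption of the $b/d$ remainder into $4\beta b/(da)$ is not a "universal constant" step — it requires $\beta/a\ge 1$, which does follow from the assumptions (compare $a\|x\|^2-b\le\langle\nabla V(x),x\rangle\le\beta\|x\|^2+\|\nabla V(0)\|\,\|x\|$ and let $\|x\|\to\infty$); the paper invokes the same fact without proof, so you are at the same level of rigor, but say it; (ii) existence of a global minimizer $x^*$ should be noted (it follows since dissipativity makes $V$ coercive and $V$ is $C^2$), again a gap shared with the paper. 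With these remarks your constants close: you obtain $\beta b/(da)+b/d+2+\log\pi+\log(2/a)\le 4\beta b/(da)+2\pi+\log(2/a)$.
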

The proof is left to Appendix \ref{appendix:lemma_up_fisher_m2_density}. We can now state our main result.
\begin{theorem}
\label{theorem:poly_complexity}
Under Assumptions \ref{assumption:semi_convexity} and \ref{assumption:dissipativity}, if we run the algorithm of \eqref{eq:def_backward_approx} with $T = \log(1/\eps)$, $N=1/\eps$, $t_k = kT/N$ and with the stochastic score estimators $\hat{s}_{n_k, t_k}$ defined in \eqref{eq:def_estimator} with $n_k = d \epsilon^{-(2d+3)}$, then, denoting $\hat{p}$ the stochastic distribution of the output $Y_N$, it holds that 
\begin{equation*}
    \mathbb{E}[\KL(\mu, \hat{p})] \lesssim \epsilon \beta^{d+3}(b+d)/a^2 \, ,
    \end{equation*}
    where $\lesssim$ hides a universal constant as well as log factors with respect to $d, \epsilon^{-1}, a, b, \beta$. In particular, the error above is achieved in $\sum_{k=1}^N n_k = d \epsilon^{-2(d+2)}$ queries to $V$.
\end{theorem}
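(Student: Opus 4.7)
The plan is to instantiate \Cref{th:conforti} with the stated parameters, which decomposes $\mathbb{E}[\KL(\mu,\hat{p})]$ into an initialization error, an averaged score-estimation error, and a Fisher/discretization error,
\begin{equation*}
    (d+m_2)e^{-T} \;+\; \frac{1}{N}\sum_{k=1}^N \mathbb{E}\bigl\|\hat{s}_{t_k,n_k}-\nabla \log p_{t_k}^V\bigr\|_{L^2(p_{t_k}^V)}^2 \;+\; \frac{T}{N}\mathcal{I}(\mu,\pi),
\end{equation*}
and then to bound each piece using the ingredients assembled in Propositions \ref{prop:non_asympt}--\ref{prop:regularity_forward} and Lemmas \ref{lemma:upper_bound_ratio}--\ref{lemma:up_fisher_m2_density}. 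With $T=\log(1/\epsilon)$ and $N=1/\epsilon$, the first and third terms are immediate from \Cref{lemma:up_fisher_m2_density}: $(d+m_2)e^{-T}\lesssim \epsilon(b+d)/a$ and $(T/N)\mathcal{I}(\mu,\pi)\lesssim \epsilon\log(1/\epsilon)\bigl((b+d)/a + \beta d\bigr)$, both comfortably within the target $\epsilon \beta^{d+3}(b+d)/a^2$.

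The bulk of the proof is to bound the average squared $L^2(p_{t_k}^V)$ error. I would integrate the pointwise bound of \Cref{prop:non_asympt} against $p_{t_k}^V\,\ud z$; the $p_t^V(z)$ in the denominator cancels, giving
\begin{equation*}
    \mathbb{E}\bigl\|\hat{s}_{t_k,n_k}-\nabla\log p_{t_k}^V\bigr\|_{L^2(p_{t_k}^V)}^2 \lesssim \frac{d+\log n_k}{n_k(1-e^{-2t_k})}\left[1 + \frac{Z_{2V}e^{t_kd}}{Z_V^2}\int \theta_{t_k}(z)\Phi_{t_k}(z)\,\ud z\right].
\end{equation*}
Next, \Cref{prop:regularity_forward} applied to both $V$ and $2V$ gives $\theta_t(z)\le \alpha_t + \gamma_t\|z\|^2$ with $\alpha_t,\gamma_t$ explicit rational functions of $(1-e^{-2t})$ and $1/a$, so the integral reduces to $\alpha_t m_0(\Phi_t) + \gamma_t m_2(\Phi_t)$. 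Applying \Cref{lemma:upper_bound_ratio} then cancels the $Z_V^2/Z_{2V}$ factor inside the moment bounds and produces terms of the form $e^{t(2d+i)}(\beta(1-e^{-2t})+e^{-2t})^{d+i}$ for $i\in\{0,2\}$, multiplied by a bracket that collapses to $\lesssim \beta(b+d)/a$ after \Cref{lemma:up_fisher_m2_density} has controlled $m_2$ and $\log\mu(0)^{-2/d}$. The factor $\gamma_t\propto 1/a$ supplies the second $1/a$, producing the crucial $\beta^{d+3}(b+d)/a^2$ dependence.

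The main obstacle is the time-averaging: pointwise, the resulting bound has a $(1-e^{-2t_k})^{-O(1)}$ singularity near $t_k=0$ and an exponential growth $e^{2t_kd}$ near $t_k=T$. Fortunately, $(\beta(1-e^{-2t})+e^{-2t})^{d+i}$ decomposes into a $(1-e^{-2t})^{d+i}$ piece that absorbs most of the small-$t$ singularity and an $e^{-2t(d+i)}$ piece that combines with $e^{t(2d+i)}$ to give a bounded contribution at large $t$. A direct calculation of $\frac{1}{N}\sum_k (1-e^{-2t_k})^{-O(1)}e^{2t_kd}$ with $t_k=kT/N$, $T=\log(1/\epsilon)$, $N=1/\epsilon$ then shows it is bounded by $\epsilon^{-2d}$ up to logs. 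The choice $n_k = d\epsilon^{-(2d+3)}$ is precisely calibrated so that the prefactor $1/n_k$ kills this $\epsilon^{-2d}$, leaving an averaged score error $\lesssim \epsilon^3\beta^{d+3}(b+d)/(da^2)$, dominated by the target $\epsilon\beta^{d+3}(b+d)/a^2$. Summing the three contributions yields the claimed bound, and counting queries gives $\sum_k n_k = N\cdot n_k = d\epsilon^{-2(d+2)}$.
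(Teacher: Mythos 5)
Your proposal is correct and follows essentially the same route as the paper: it instantiates \Cref{th:conforti} and controls the averaged score error by integrating \Cref{prop:non_asympt} against $p_{t_k}^V$, bounding $\theta_t$ via \Cref{prop:regularity_forward} (applied to $V$ and $2V$), and invoking \Cref{lemma:upper_bound_ratio} and \Cref{lemma:up_fisher_m2_density}; the only cosmetic difference is that you average the per-step bound over $k$, whereas the paper bounds it uniformly over $t\in[\epsilon,\log(1/\epsilon)]$ using $t_k\ge T/N\ge\epsilon$. One bookkeeping slip: the dominant exponential factor is $e^{2t(d+1)}$ (coming from $e^{td}\,m_2(\Phi_t)$, consistent with your own $i=2$ term), not $e^{2t_k d}$, so the time average is $\lesssim \epsilon^{-2(d+1)}$ rather than $\epsilon^{-2d}$, and with $n_k=d\epsilon^{-(2d+3)}$ the averaged score error comes out as $\lesssim \epsilon\,\beta^{d+3}(b+d)/a^2$ (not $\epsilon^3$) up to logs --- which is exactly the claimed bound, so the conclusion is unaffected.
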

The proof is deferred to Appendix \ref{appendix:proof_main_th} and is mainly an application of the results collected above.



\section{Conclusion}

In this article, we successfully applied the reduction from sampling to intermediate score estimation, initiated over the past three years by \citet{ chen2023reversepaththeory,chen2023improved,conforti2024klconvergenceguaranteesscore, benton2024nearly}, to the problem of low dimensional multi-modal sampling. Using the self-normalized estimator of the scores, 
our results provide \textit{polynomial} query complexity guarantees in fixed dimension, apply to \textit{general} Gaussian mixtures, and do \textit{not} require prior knowledge of the target distribution's constants. Interesting future directions include extending theoretical guarantees to more general multi-modal distributions with heavy tails for instance.

We note that our sampling algorithm is based on time-reversed diffusions, which have recently gained traction for sampling from unnormalized densities.  Our method stands out by offering rigorous, non-asymptotic theoretical guarantees on query complexity, especially in the presence of score estimation error that we precisely quantify. Such theoretical guarantees are scarce and we believe our results are therefore a meaningful and timely contribution to the field.

\section{Acknowledgements}

This work was supported by the Agence Nationale de la Recherche (ANR) through the JCJC WOS project and the PEPR PDE-AI project (ANR-23-PEIA-0004). We thank Pierre Monmarché for his help in proving Lemma~\ref{lemma:bound_second_moment_dissipative}.


\bibliography{references}

\begin{thebibliography}{49}
\providecommand{\natexlab}[1]{#1}
\providecommand{\url}[1]{\texttt{#1}}
\expandafter\ifx\csname urlstyle\endcsname\relax
  \providecommand{\doi}[1]{doi: #1}\else
  \providecommand{\doi}{doi: \begingroup \urlstyle{rm}\Url}\fi

\bibitem[Agapiou et~al.(2017)Agapiou, Papaspiliopoulos, Sanz-Alonso, and
  Stuart]{agapiou2017importance}
Sergios Agapiou, Omiros Papaspiliopoulos, Daniel Sanz-Alonso, and Andrew~M
  Stuart.
\newblock {Importance sampling: Intrinsic dimension and computational cost}.
\newblock \emph{Statistical Science}, 2017.

\bibitem[Akhound-Sadegh et~al.(2024)Akhound-Sadegh, Rector-Brooks, Bose,
  Mittal, Lemos, Liu, Sendera, Ravanbakhsh, Gidel, Bengio,
  et~al.]{akhound2024iterated}
Tara Akhound-Sadegh, Jarrid Rector-Brooks, Avishek~Joey Bose, Sarthak Mittal,
  Pablo Lemos, Cheng-Hao Liu, Marcin Sendera, Siamak Ravanbakhsh, Gauthier
  Gidel, Yoshua Bengio, et~al.
\newblock Iterated denoising energy matching for sampling from {B}oltzmann
  densities.
\newblock In \emph{ICML}, 2024.

\bibitem[Benton et~al.(2024)Benton, Bortoli, Doucet, and
  Deligiannidis]{benton2024nearly}
Joe Benton, Valentin~De Bortoli, Arnaud Doucet, and George Deligiannidis.
\newblock {Nearly d-Linear convergence bounds for diffusion models via
  stochastic localization}.
\newblock In \emph{ICLR}, 2024.

\bibitem[Blanchet and Bolte(2018)]{blanchet2018lojasiewicz}
Adrien Blanchet and J{\'e}r{\^o}me Bolte.
\newblock A family of functional inequalities: {L}ojasiewicz inequalities and
  displacement convex functions.
\newblock \emph{Journal of Functional Analysis}, 2018.

\bibitem[Bortoli et~al.(2021)Bortoli, Thornton, Heng, and
  Doucet]{bortoli2021diffusionschrodingerbridge}
Valentin~De Bortoli, James Thornton, Jeremy Heng, and A.~Doucet.
\newblock {Diffusion Schr{\"o}dinger Bridge with applications to score-based
  generative modeling}.
\newblock In \emph{NeurIPS}, 2021.

\bibitem[Bortoli et~al.(2024)Bortoli, Hutchinson, Wirnsberger, and
  Doucet]{debortoli2024targetscorematching}
Valentin~De Bortoli, Michael Hutchinson, Peter Wirnsberger, and Arnaud Doucet.
\newblock Target score matching.
\newblock \emph{arXiv preprint arXiv:2402.08667}, 2024.

\bibitem[Chak(2024)]{chak2024theoreticalguarantees}
Martin Chak.
\newblock {On theoretical guarantees and a blessing of dimensionality for
  nonconvex sampling}.
\newblock \emph{arXiv preprint arXiv:2411.07776}, 2024.

\bibitem[Chatterji et~al.(2019)Chatterji, Diakonikolas, Jordan, and
  Bartlett]{Chatterji2019LangevinMC}
Niladri~S. Chatterji, Jelena Diakonikolas, Michael~I. Jordan, and Peter~L.
  Bartlett.
\newblock Langevin monte carlo without smoothness.
\newblock In \emph{AISTATS}, 2019.

\bibitem[Chen et~al.(2023{\natexlab{a}})Chen, Lee, and Lu]{chen2023improved}
Hongrui Chen, Holden Lee, and Jianfeng Lu.
\newblock {Improved analysis of score-based generative modeling: User-friendly
  bounds under minimal smoothness assumptions}.
\newblock In \emph{ICML}, 2023{\natexlab{a}}.

\bibitem[Chen et~al.(2023{\natexlab{b}})Chen, Chewi, Li, Li, Salim, and
  Zhang]{chen2023reversepaththeory}
Sitan Chen, Sinho Chewi, Jerry Li, Yuanzhi Li, Adil Salim, and Anru Zhang.
\newblock {Sampling is as easy as learning the score: theory for diffusion
  models with minimal data assumptions}.
\newblock In \emph{ICLR}, 2023{\natexlab{b}}.

\bibitem[Chen et~al.(2024)Chen, Kontonis, and Shah]{chen2024learning}
Sitan Chen, Vasilis Kontonis, and Kulin Shah.
\newblock Learning general {G}aussian mixtures with efficient score matching.
\newblock In \emph{COLT}, 2024.

\bibitem[Conforti et~al.(2025)Conforti, Durmus, and
  Silveri]{conforti2024klconvergenceguaranteesscore}
Giovanni Conforti, Alain Durmus, and Marta~Gentiloni Silveri.
\newblock {Score diffusion models without early stopping: finite Fisher
  information is all you need}.
\newblock \emph{SIAM Journal on Mathematics of Data Science (SIMODS)}, 2025.

\bibitem[Cordero-Encinar et~al.(2025)Cordero-Encinar, Akyildiz, and
  Duncan]{cordero2025non}
Paula Cordero-Encinar, O~Deniz Akyildiz, and Andrew~B Duncan.
\newblock Non-asymptotic analysis of diffusion annealed {L}angevin {M}onte
  {C}arlo for generative modelling.
\newblock \emph{arXiv preprint arXiv:2502.09306}, 2025.

\bibitem[Dalalyan and Karagulyan(2019)]{dalalyan2017user}
Arnak~S Dalalyan and Avetik~G Karagulyan.
\newblock User-friendly guarantees for the {L}angevin {M}onte {C}arlo with
  inaccurate gradient.
\newblock \emph{Stochastic Processes and their Applications}, 2019.

\bibitem[Deng et~al.(2020)Deng, Lin, and Liang]{deng2020contour}
Wei Deng, Guang Lin, and Faming Liang.
\newblock A contour stochastic gradient {L}angevin dynamics algorithm for
  simulations of multi-modal distributions.
\newblock In \emph{NeurIPS}, 2020.

\bibitem[Ding et~al.(2023)Ding, Jiao, Lu, Yang, and Yuan]{ding2023sampling}
Zhao Ding, Yuling Jiao, Xiliang Lu, Zhijian Yang, and Cheng Yuan.
\newblock {Sampling via Föllmer Flow}.
\newblock \emph{arXiv preprint arXiv:2311.03660}, 2023.

\bibitem[Durmus and Moulines(2017)]{durmus2017nonasymptotic}
Alain Durmus and Eric Moulines.
\newblock Nonasymptotic convergence analysis for the unadjusted {L}angevin
  algorithm.
\newblock \emph{The Annals of Applied Probability}, 2017.

\bibitem[Eberle(2013)]{eberle2013strongdissipativity}
Andreas Eberle.
\newblock Reflection couplings and contraction rates for diffusions.
\newblock \emph{Probability Theory and Related Fields}, 2013.

\bibitem[Erdogdu and Hosseinzadeh(2021)]{erdogdu2021convergence}
Murat~A Erdogdu and Rasa Hosseinzadeh.
\newblock On the convergence of {L}angevin {M}onte {C}arlo: The interplay
  between tail growth and smoothness.
\newblock In \emph{COLT}, 2021.

\bibitem[Erdogdu et~al.(2022)Erdogdu, Hosseinzadeh, and
  Zhang]{erdogdu22convergence}
Murat~A. Erdogdu, Rasa Hosseinzadeh, and Shunshi Zhang.
\newblock Convergence of {L}angevin {M}onte {C}arlo in chi-squared and {R}ényi
  divergence.
\newblock In \emph{AISTATS}, 2022.

\bibitem[Gentiloni-Silveri and Ocello(2025)]{gentiloni2025beyond}
Marta Gentiloni-Silveri and Antonio Ocello.
\newblock Beyond log-concavity and score regularity: Improved convergence
  bounds for score-based generative models in {W2}-distance.
\newblock In \emph{ICML}, 2025.

\bibitem[Grenioux et~al.(2024)Grenioux, Noble, Gabrié, and
  Durmus]{grenioux2024stochastic}
Louis Grenioux, Maxence Noble, Marylou Gabrié, and Alain~Oliviero Durmus.
\newblock Stochastic localization via iterative posterior sampling.
\newblock In \emph{ICML}, 2024.

\bibitem[He et~al.(2025)He, Chen, Zhang, Barber, and
  Hern{\'a}ndez-Lobato]{he2025neuralsampler}
Jiajun He, Wenlin Chen, Mingtian Zhang, David Barber, and Jos{\'e}~Miguel
  Hern{\'a}ndez-Lobato.
\newblock Training neural samplers with reverse diffusive {K}{L} divergence.
\newblock In \emph{AISTATS}, 2025.

\bibitem[He et~al.(2024)He, Rojas, and Tao]{he2024zeroth}
Ye~He, Kevin Rojas, and Molei Tao.
\newblock {Zeroth-order sampling methods for non-log-concave distributions:
  Alleviating metastability by denoising diffusion}.
\newblock In \emph{NeurIPS}, 2024.

\bibitem[He and Zhang(2025)]{he2025querycomplexitysamplingnonlogconcave}
Yuchen He and Chihao Zhang.
\newblock On the query complexity of sampling from non-log-concave
  distributions.
\newblock In \emph{COLT}, 2025.

\bibitem[Ho et~al.(2020)Ho, Jain, and Abbeel]{ho2020denoising}
Jonathan Ho, Ajay Jain, and Pieter Abbeel.
\newblock Denoising diffusion probabilistic models.
\newblock In \emph{NeurIPS}, 2020.

\bibitem[Huang et~al.(2025)Huang, Jiao, Kang, Liao, Liu, and
  Liu]{huang2021schrodingerfollmersamplersamplingergodicity}
Jian Huang, Yuling Jiao, Lican Kang, Xu~Liao, Jin Liu, and Yanyan Liu.
\newblock {S}chr{\"o}dinger-{F}{\"o}llmer sampler.
\newblock \emph{IEEE Transactions on Information Theory}, 2025.

\bibitem[Huang et~al.(2024{\natexlab{a}})Huang, Dong, Yifan, Ma, and
  Zhang]{huang2024reverse}
Xunpeng Huang, Hanze Dong, Hao Yifan, Yi-An Ma, and Tong Zhang.
\newblock Reverse diffusion {M}onte {C}arlo.
\newblock In \emph{ICLR}, 2024{\natexlab{a}}.

\bibitem[Huang et~al.(2024{\natexlab{b}})Huang, Zou, Dong, Ma, and
  Zhang]{huang2024faster}
Xunpeng Huang, Difan Zou, Hanze Dong, Yi-An Ma, and Tong Zhang.
\newblock {Faster sampling without isoperimetry via diffusion-based {M}onte
  {C}arlo}.
\newblock In \emph{COLT}, 2024{\natexlab{b}}.

\bibitem[Jiao et~al.(2021)Jiao, Kang, Liu, and Zhou]{jiao2021convergence}
Yuling Jiao, Lican Kang, Yanyan Liu, and Youzhou Zhou.
\newblock Convergence analysis of {S}chrödinger-{F}öllmer sampler without
  convexity.
\newblock \emph{arXiv preprint arXiv:2107.04766}, 2021.

\bibitem[Lee et~al.(2018)Lee, Risteski, and Ge]{lee2018beyond}
Holden Lee, Andrej Risteski, and Rong Ge.
\newblock {Beyond Log-concavity: Provable Guarantees for Sampling Multi-modal
  Distributions using Simulated Tempering Langevin Monte Carlo}.
\newblock In \emph{NeurIPS}, 2018.

\bibitem[Lee et~al.(2022)Lee, Lu, and Tan]{lee2022reversepaththeory}
Holden Lee, Jianfeng Lu, and Yixin Tan.
\newblock {Convergence for score-based generative modeling with polynomial
  complexity}.
\newblock In \emph{NeurIPS}, 2022.

\bibitem[Li et~al.(2024)Li, Wei, Chen, and Chi]{li2024towards}
Gen Li, Yuting Wei, Yuxin Chen, and Yuejie Chi.
\newblock Towards non-asymptotic convergence for diffusion-based generative
  models.
\newblock In \emph{ICLR}, 2024.

\bibitem[Lytras and Mertikopoulos(2025)]{lytras2024tamed}
Iosif Lytras and Panayotis Mertikopoulos.
\newblock Tamed {L}angevin sampling under weaker conditions.
\newblock In \emph{AISTATS}, 2025.

\bibitem[Ma et~al.(2019)Ma, Chen, Jin, Flammarion, and Jordan]{ma2019sampling}
Yi-An Ma, Yuansi Chen, Chi Jin, Nicolas Flammarion, and Michael~I Jordan.
\newblock {Sampling can be faster than optimization}.
\newblock \emph{Proceedings of the National Academy of Sciences}, 2019.

\bibitem[Marshall et~al.(1979)Marshall, Olkin, and
  Arnold]{marshall1979inequalities}
Albert~W Marshall, Ingram Olkin, and Barry~C Arnold.
\newblock \emph{Inequalities: theory of majorization and its applications}.
\newblock Springer, 1979.

\bibitem[Midgley et~al.(2023)Midgley, Stimper, Simm, Sch{\"o}lkopf, and
  Hern{\'a}ndez-Lobato]{midgley2023flow}
Laurence~Illing Midgley, Vincent Stimper, Gregor~N.C. Simm, Bernard
  Sch{\"o}lkopf, and Jos{\'e}~Miguel Hern{\'a}ndez-Lobato.
\newblock Flow annealed importance sampling bootstrap.
\newblock In \emph{ICLR}, 2023.

\bibitem[Mikulincer and Shenfeld(2023)]{Mikulincer2023}
Dan Mikulincer and Yair Shenfeld.
\newblock \emph{{On the Lipschitz Properties of Transportation Along Heat
  Flows}}.
\newblock Springer International Publishing, 2023.

\bibitem[Nguyen et~al.(2021)Nguyen, Dang, and Chen]{Nguyen2021UnadjustedLA}
Dao Nguyen, Xin Dang, and Yixin Chen.
\newblock Unadjusted langevin algorithm for non-convex weakly smooth
  potentials.
\newblock \emph{Communications in Mathematics and Statistics}, 2021.

\bibitem[Phillips et~al.(2024)Phillips, Dau, Hutchinson, De~Bortoli,
  Deligiannidis, and Doucet]{phillips2024particle}
Angus Phillips, Hai-Dang Dau, Michael~John Hutchinson, Valentin De~Bortoli,
  George Deligiannidis, and Arnaud Doucet.
\newblock Particle denoising diffusion sampler.
\newblock In \emph{ICML}, 2024.

\bibitem[Raginsky et~al.(2017)Raginsky, Rakhlin, and
  Telgarsky]{raginsky2017non}
Maxim Raginsky, Alexander Rakhlin, and Matus Telgarsky.
\newblock Non-convex learning via stochastic gradient {L}angevin dynamics: a
  nonasymptotic analysis.
\newblock In \emph{COLT}, 2017.

\bibitem[Ruzayqat et~al.(2023)Ruzayqat, Beskos, Crisan, Jasra, and
  Kantas]{ruzayqat2023unbiased}
Hamza Ruzayqat, Alexandros Beskos, Dan Crisan, Ajay Jasra, and Nikolas Kantas.
\newblock Unbiased estimation using a class of diffusion processes.
\newblock \emph{Journal of Computational Physics}, 2023.

\bibitem[Saremi et~al.(2024)Saremi, Park, and Bach]{saremi2024chain}
Saeed Saremi, Ji~Won Park, and F.~Bach.
\newblock Chain of log-concave {M}arkov chains.
\newblock In \emph{ICLR}, 2024.

\bibitem[Song and Ermon(2019)]{song2019scorebasedmodel}
Yang Song and Stefano Ermon.
\newblock {Generative modeling by estimating gradients of the data
  distribution}.
\newblock In \emph{NeurIPS}, 2019.

\bibitem[Song et~al.(2021)Song, Sohl-Dickstein, Kingma, Kumar, Ermon, and
  Poole]{song2021diffusionsde}
Yang Song, Jascha Sohl-Dickstein, Diederik~P Kingma, Abhishek Kumar, Stefano
  Ermon, and Ben Poole.
\newblock Score-based generative modeling through stochastic differential
  equations.
\newblock In \emph{ICLR}, 2021.

\bibitem[Vargas et~al.(2022)Vargas, Ovsianas, Fernandes, Girolami, Lawrence,
  and Nüsken]{vargas2022bayesianlearningneuralschrodingerfollmer}
Francisco Vargas, Andrius Ovsianas, David Fernandes, Mark Girolami, Neil~D.
  Lawrence, and Nikolas Nüsken.
\newblock Bayesian learning via neural {S}chr\"odinger-{F}\"ollmer flows.
\newblock In \emph{AABI}, 2022.

\bibitem[Vempala and Wibisono(2019)]{vempala2019rapid}
Santosh Vempala and Andre Wibisono.
\newblock {Rapid convergence of the unadjusted {L}angevin algorithm:
  Isoperimetry suffices}.
\newblock In \emph{NeurIPS}, 2019.

\bibitem[Wang et~al.(2021)Wang, Jiao, Xu, Wang, and Yang]{wang2021deep}
Gefei Wang, Yuling Jiao, Qian Xu, Yang Wang, and Can Yang.
\newblock Deep generative learning via {s-S}chr{\"o}dinger bridge.
\newblock In \emph{ICLR}, 2021.

\bibitem[Zhang et~al.(2017)Zhang, Liang, and Charikar]{zhang2017hitting}
Yuchen Zhang, Percy Liang, and Moses Charikar.
\newblock A hitting time analysis of stochastic gradient {L}angevin dynamics.
\newblock In \emph{COLT}, 2017.

\end{thebibliography}

\newpage
\appendix
\clearpage
\appendix



\tableofcontents

\newpage

\section{Expressions the intermediate scores}

\label{app:sec:monte_carlo_estimators_scores}

We next provide derivations of the Monte Carlo estimators of the intermediate scores discussed in section~\ref{ssec:monte_carlo_estimators_scores}.

The probability law along the reverse diffusion path has density
\begin{align}
    p_t^V(x)
    &=
    \frac{1}{\sqrt{1-\lambda_t}} \pi \left(
    \frac{x}{\sqrt{1 - \lambda_t}}
    \right) 
    \ast
    \frac{1}{\sqrt{\lambda_t}} \mu \left(
    \frac{x}{\sqrt{\lambda_t}}
    \right)
\end{align}
where $\pi$ is a standard Gaussian and $\mu(\cdot) \propto \exp(-V(\cdot))$ is the target distribution

\paragraph{Case 1: write the convolution as an integral against the proposal distribution}
We have
\begin{align}
    p_t^V(x)
    &=
    \frac{1}{Z_{\lambda_t}}
    \int_{y \in \mathbb{R}^d}
    \mu \left(
    \frac{x - y}{\sqrt{\lambda_t}}
    \right)
    \pi \left(
    \frac{y}{\sqrt{1 - \lambda_t}}
    \right) 
    dy
    =
    \frac{1}{Z_{\lambda_t}}
    \int
    f(x, y)
    dy
\end{align}
where we denoted the integrand by $f(x, y)$. We can compute the score

\begin{align}
    \nabla \log p_t^V(x)
    &=
    \frac{\nabla p_t^V(x)}{p_t^V(x)}
    =
    \frac{\int \nabla f(x, y) dy}{\int f(x, y) dy}
    =
    \frac{1}{\sqrt{\lambda_t}}
    \int 
    \nabla \log \mu \left(
    \frac{x - y}{\sqrt{\lambda_t}}
    \right)
    \frac{f(x, y)}{\int f(x, y) dy} dy
    \enspace .
\end{align}
From this, we can either define a Monte Carlo estimator as
\begin{align}    
    \nabla \log p_t^V(x)
    &=
    \frac{1}{\sqrt{\lambda_t}}
    \E_{y | x} \left[
    \nabla \log \mu \left(
    \frac{x - y}{\sqrt{\lambda_t}}
    \right)
    \right]
    , \quad
    y | x \sim \propto f(x, y)
\end{align}

where $f(x, y)$ is a smoothened version of the target distribution distribution. Replacing $\pi$ with a standard Gaussian. Or else, by unpacking $f(x, y)$ and using the proposal as the sampling distribution as the integration variable $y$ appears in it,
\begin{align}    
    \nabla \log p_t^V(x)
    &=
    \frac{1}{\sqrt{\lambda_t}}
    \frac{
    \E_{y} \left[
    \nabla \log \mu \left(
    \frac{x - y}{\sqrt{\lambda_t}}
    \right)
    \mu \left(
    \frac{x - y}{\sqrt{\lambda_t}}
    \right)
    \right]
    }{
    \E_{y} [
    \mu \left(
    \frac{x - y}{\sqrt{\lambda_t}}
    \right)
    ]
    }
    , \quad 
    y \sim
    \pi \left(
    \frac{y}{\sqrt{1 - \lambda_t}}
    \right) 
\end{align}

\paragraph{Case 2: write the convolution as an integral against the target distribution}
We now have
\begin{align}
    p_t^V(x)
    &=
    \frac{1}{Z_{\lambda_t}}
    \int_{y \in \mathbb{R}^d}
    \mu \left(
    \frac{y}{\sqrt{\lambda_t}}
    \right)
    \pi \left(
    \frac{x - y}{\sqrt{1 - \lambda_t}}
    \right) 
    dy
    =
    \frac{1}{Z_{\lambda_t}}
    \int
    f(x, y)
    dy
\end{align}
where we now denote the integrand by $f(x, y)$. We can compute the score
\begin{align}
    \nabla \log p_t^V(x)
    &=
    \frac{\nabla p_t^V(x)}{p_t^V(x)}
    =
    \frac{\int \nabla f(x, y) dy}{\int f(x, y) dy}
    \\
    &=
    \frac{1}{\sqrt{1 - \lambda_t}}
    \int 
    \nabla \log \pi \left(
    \frac{x - y}{\sqrt{1 - \lambda_t}}
    \right)
    \frac{f(x, y)}{\int f(x, y) dy} dy
    \enspace .
\end{align}
From this, we can either define a Monte Carlo estimator as
\begin{align}    
    \nabla \log p_t^V(x)
    &=
    \frac{1}{\sqrt{1 - \lambda_t}}
    \E_{y | x} \left[
    \nabla \log \pi \left(
    \frac{x - y}{\sqrt{1 - \lambda_t}}
    \right)
    \right]
    , \quad
    y | x \sim \propto f(x, y)
\end{align}
where $f(x, y)$ is a smoothened version of the target distribution distribution. Or else, by upacking $f(x, y)$ and using the target as the sampling distribution as the integration variable $y$ appears in it,
\begin{align}    
    \nabla \log p_t^V(x)
    &=
    \frac{1}{\sqrt{1 - \lambda_t}}
    \frac{
    \E_{y} \left[
    \nabla \log \pi \left(
    \frac{x - y}{\sqrt{1 - \lambda_t}}
    \right)
    \pi \left(
    \frac{x - y}{\sqrt{1 - \lambda_t}}
    \right)
    \right]
    }{
    \E_{y} [
    \pi \left(
    \frac{x - y}{\sqrt{1 - \lambda_t}}
    \right)
    ]
    }
    , \quad 
    y \sim
    \mu \left(
    \frac{y}{\sqrt{\lambda_t}}
    \right) 
\end{align}
which is not very useful given that sampling the target is hard in the first place. However, we can use the proposal distribution as the sampling distribution as the integration variable $y$ appears in it as well. To see this, recall that
\begin{align}
    \pi \left(
    \frac{x - y}{\sqrt{1 - \lambda_t}}
    \right)
    &=
    \mathcal{N} \left(
    \frac{x - y}{\sqrt{1 - \lambda_t}}
    ;
    0, I_d
    \right)
    =
    \sqrt{1 - \lambda_t}
    \mathcal{N} \left(
    x - y
    ;
    0, (1 - \lambda_t) I_d
    \right)
    \\
    &=
    \sqrt{1 - \lambda_t}
    \mathcal{N} \left(
    y
    ;
    x, (1 - \lambda_t) I_d
    \right)
\end{align}
This leads to
\begin{align}
    \nabla \log p_t^V(x)
    &=
    \frac{1}{1 - \lambda_t}
    \frac{
    \E_{y} \left[
    (y - x) \mu \left(
    \frac{y}{\sqrt{\lambda_t}}
    \right)
    \right]
    }{
    \E_{y} \left[
    \mu \left(
    \frac{y}{\sqrt{\lambda_t}}
    \right)
    \right]
    }
    , \quad 
    y \sim
    \mathcal{N} (
    y; x, (1 - \lambda_t) I_d
    )
    \\
    &=
    \frac{1}{\sqrt{1 - \lambda_t}}
    \frac{
    \E_{z} \left[
    z \mu \left(
    \frac{x + \sqrt{1 - \lambda_t} z}{\sqrt{\lambda_t}}
    \right)
    \right]
    }{
    \E_{z} \left[
    \mu \left(
    \frac{x + \sqrt{1 - \lambda_t} z}{\sqrt{\lambda_t}}
    \right)
    \right]
    }
    , \quad 
    z \sim
    \mathcal{N} (
    z; 0, I_d
    ).
\end{align}
Using a change of variables, we obtain
\begin{align}
    \nabla \log(p_t^V)(x) = \frac{-1}{1-\lambda_t}\frac{
    \mathbb{E} \big[
    Y_t \exp(-V( \frac{x - Y_t}{\sqrt{\lambda_t}}))
    \big]
    }{
    \mathbb{E}\big[
    \exp(-V( \frac{x - Y_t}{\sqrt{\lambda_t}}))
    \big]
    }
\end{align}
with $Y_t \sim \mathcal{N}(0, (1-\lambda_t)I_d)$.

\newpage
\section{Additional discussions}

\subsection{Föllmer Flows}

\subsubsection{Comparison with our work}
\label{appendix:follmer_comparison}

Instead of implementing a reverse diffusion process, Föllmer Flows \citet{huang2021schrodingerfollmersamplersamplingergodicity, jiao2021convergence,ruzayqat2023unbiased} seek to implement the following Schrödinger bridge:
$$
\ud X_t = b(X_t, t) \ud t + \ud W_t, ~~ X_0 = 0 \, ,
$$
with $b(x, t) = \nabla \log \mathbb{E}[f(x + W_{1-t})]$ and where $f$ is given by
$$f \coloneqq \frac{\ud \mu}{\ud \mathcal{N}(0, I_d)} \, ,$$
with $\mu$ the target density; for such a process, we have that $X_1 \sim \mu$. We refer the reader to the notations of Proposition \ref{prop:upper_bound_neg_hessian} to observe that the shift $b(x,t)$ is given by 
\begin{align*}
    b(x,t) & = \nabla \log Q_{-\log(t)/2}(x/\sqrt{t}) \, ,\\
    & = \frac{1}{\sqrt{t}}(\nabla \log p_{-\log(t)/2}^V(x/\sqrt{t}) - \nabla \log \pi(x)) \, ,
\end{align*}
with $\pi$ is the density of the standard Gaussian. Hence, up to the Gaussian term and a time-rescaling, the shift is exactly given by the intermediate scores of the reverse diffusion. However, because of the Gaussian correction, an extra exponential term appears in the associated self-normalized estimator of the shift: in \citet{huang2021schrodingerfollmersamplersamplingergodicity, jiao2021convergence,ruzayqat2023unbiased}, the authors implement
$$
\hat{b}(x, t) = \frac{\sum_{i=1}^n (x + \sqrt{1-t}z_i - \nabla V(x + \sqrt{1-t}z_i))e^{- V(x + \sqrt{1-t}z_i) + \| x + \sqrt{1-t}z_i\|^2/2}}{\sum_{i=1}^ne^{- V(x + \sqrt{1-t}z_i) + \| x + \sqrt{1-t}z_i\|^2/2}} \, ,
$$
with $z_i$ $n$-i.i.d. samples from the standard Gaussian distribution. We suspect that because of the extra exponential terms $e^{\| x + \sqrt{1-t}z_i\|^2/2}$, as discussed below, their estimator provide appealing guarantees only under very stringent assumptions. 

\subsubsection{Theoretical guarantees}
\label{appendix:follmer_flows}

The theoretical complexity bounds derived in the works of \citet{huang2021schrodingerfollmersamplersamplingergodicity, jiao2021convergence,ruzayqat2023unbiased} quantitatively rely on the assumptions that $f =\frac{\ud \mu}{\ud \mathcal{N}(0, I_d)}$ is Lipschitz, has Lipschitz gradient and is bounded from below. In particular, assume for instance that $\mu$ is a standard Gaussian centered at some point $c\in \mathbb{R}^d$. In this case, the ratio $f$ reads
$$
f(x) = e^{2 x^\top c + \| c \|^2} \, ,
$$
which verifies none of the assumptions above if $c \neq 0$. More broadly, even if $f$ verifies these assumptions, the resulting quantities degrade exponentially with the mismatch between $x \mapsto-\log(\mu)(x)$ and $x \mapsto \| x \|^2/2$. As noted in \citet{vargas2022bayesianlearningneuralschrodingerfollmer}, this limitation is not only a theoretical artifact; in practice, these methods are very unstable and fail to convergence on simple examples.

\subsection{Details on the non-smooth example}\label{sec:gm_non_smooth}

Consider the mixture $\mu = 0.5 \mathcal{N}(0, \Sigma_1) +  0.5 \mathcal{N}(0, \Sigma_2)$ with $\Sigma_1 = diag(1, 0.5)$ and $\Sigma_1 = diag(0.5, 1)$. For $(x,y) \in \mathbb{R}^2$, it holds that 
$$
- \nabla \log(\mu)(x,y) = \frac{(x, 2y) e^{-x^2/2 - y^2} + (2x, y) e^{-x^2 - y^2/2}}{e^{-x^2/2 - y^2} +  e^{-x^2 - y^2/2}} \, .
$$
Hence, when $x=y$, we have $- \nabla \log(\mu)(x,y) = 3/2(x,x)$. Now for $y = x + \eta$, the score reads
\begin{align*}
    - \nabla \log(\mu)(x,x + \eta) & = \frac{(x, 2x+ 2\eta) e^{-x^2/2 - x^2 - 2\eta x - \eta^2} + (2x, x + \eta) e^{-x^2 - x^2/2 - \eta x - \eta^2/2}}{e^{-x^2/2 - x^2 - 2\eta x - \eta^2} +  e^{-x^2 - x^2/2 - \eta x - \eta^2/2}} \\
    & = \frac{(x, 2x+ 2\eta) e^{ - \eta x - \eta^2/2} + (2x, x + \eta) }{e^{ - \eta x - \eta^2/2}+1} \\
    & \sim_{x\to + \infty} (2x, x) \, .
\end{align*}
In particular, $- \nabla \log(\mu)$ is not Hölder.
\subsection{Proof of Proposition \ref{prop:gm_ass}}
\label{sec:proof_prop_gm_assumptions}
Recall we denote $
\lambda_{I_dn} := I_dn_i \lambda_{I_dn}(\Sigma_i), \quad \lambda_{\max} := \max_i \lambda_{\max}(\Sigma_i)
$.  
We write:
\[
\mu(x) = \sum_{i=1}^p \tilde{w}_i \exp\left( -\frac{1}{2} (x - \mu_i)^\top \Sigma_i^{-1} (x - \mu_i) \right):=\sum_{i=1}^p \tilde{w}_i \phi_i(x)
\]
with \( \tilde{w}_i = w_i (2\pi)^{-d/2} \det(\Sigma_i)^{-1/2} \). 
\paragraph{Bound on the Hessian.} The Hessian of $\mu$ writes:
\[
\nabla^2 \log \mu(x) = \frac{1}{\mu(x)} \nabla^2 \mu(x) - \frac{1}{\mu(x)^2} \nabla \mu(x) \nabla \mu(x)^\top
\]
Since 
\begin{align*}
\nabla \phi_i(x)&= -\Sigma_i^{-1}(x - \mu_i) \, \phi_i(x)\\
\nabla^2 \phi_i(x)& = 
\left[ \Sigma_i^{-1} (x - \mu_i)(x - \mu_i)^\top \Sigma_i^{-1} - \Sigma_i^{-1} \right] \phi_i(x)
\end{align*}
We have
\begin{align*}
\nabla \mu(x) &=\sum_i \tilde{w}_i \nabla \phi_i(x) = -\sum_i \tilde{w}_i \phi_i(x) \Sigma_i^{-1}(x - \mu_i)\\
\nabla^2 \mu(x) 
&= \sum_{i=1}^p \tilde{w}_i \nabla^2 \phi_i(x) 
= \sum_{i=1}^p \tilde{w}_i  \phi_i(x)
\left[ \Sigma_i^{-1} (x - \mu_i)(x - \mu_i)^\top \Sigma_i^{-1} - \Sigma_i^{-1} \right]
\end{align*}
Denoting $ 
\gamma_i(x) := \frac{\tilde{w}_i \phi_i(x)}{\mu(x)} \quad 
$ and $s_i(x)=- \Sigma_i^{-1} (x - \mu_i)$ we get
\begin{align*}
\nabla^2 \log \mu(x) 
&= \sum_i \gamma_i(x) \left[ s_i(x) s_i(x)^\top - \Sigma_i^{-1} \right]
- \left( \sum_i \gamma_i(x) s_i(x) \right) \left( \sum_j \gamma_j(x) s_j(x) \right)^\top\\
& = \operatorname{Cov}_{\gamma(x)}[s_i(x)] - \sum_i \gamma_i(x) \Sigma_i^{-1}
\end{align*}
where the first term is a  covariance matrix of the vectors 
$s_i(x)$ under the weights 
$\gamma_i(x)$, hence which is a positive semi-definite matrix. 
Therefore:
\begin{align*}
-\nabla^2 \log \mu(x) &\preceq \sum_i \gamma_i(x) \Sigma_i^{-1}\preceq \sum_i \gamma_i(x) \cdot \frac{I_d.}{\lambda_{I_dn}} = \frac{I_d.}{\lambda_{I_dn}} .
\end{align*}

\paragraph{Bound on the drift.} For the (negative) score of the mixture we have:
\[
-\nabla \log \mu(x) = \sum_i \gamma_i(x) \Sigma_i^{-1}(x - \mu_i)
\]
Hence
\begin{align*}
\langle- \nabla \log \mu(x), x \rangle &=  \sum_{i=1}^p \gamma_i(x) \langle \Sigma_i^{-1}(x - \mu_i), x \rangle\\
&=  \sum_{i=1}^p \gamma_i(x) \left( x^\top \Sigma_i^{-1} x - \mu_i^\top \Sigma_i^{-1} x \right)
\ge \sum_{i=1}^p \gamma_i(x) \left( \frac{\|x\|^2}{\lambda_{\max}} - \mu_i^\top \Sigma_i^{-1} x \right),
\end{align*}
where we have used $\Sigma_i^{-1} \succeq \frac{1}{\lambda_{\max}} I_d$ for the first term. Now for the second term, using Cauchy-Schwartz and Young's inequality:
\begin{align*}
    \mu_i^\top \Sigma_i^{-1} x \leq \|\mu_i\| \cdot \|\Sigma_i^{-1} x\| \leq \frac{\|\mu_i\| \|x\|}{\lambda_{I_dn}} \leq \frac{1}{2\lambda_{I_dn}} (\|x\|^2 + \|\mu_i\|^2).
\end{align*}
Hence, using that \( \sum_i \gamma_i(x) = 1 \) and that 
$
\sum_i \gamma_i(x) \|\mu_i\|^2 \leq \max_i \|\mu_i\|^2
$ we have:
\begin{align*}
\langle - \nabla \log \mu(x), x \rangle &\geq \sum_{i=1}^p \gamma_i(x) \left( \frac{\|x\|^2}{\lambda_{\max}} - \frac{1}{2\lambda_{I_dn}}(\|x\|^2 + \|\mu_i\|^2) \right)
\\
&= \left( \frac{1}{\lambda_{\max}} - \frac{1}{2\lambda_{I_dn}} \right) \|x\|^2 - \frac{1}{2\lambda_{I_dn}} \sum_{i=1}^p \gamma_i(x) \|\mu_i\|^2.\\
&\geq \left( \frac{1}{\lambda_{\max}} - \frac{1}{2\lambda_{I_dn}} \right) \|x\|^2 - \frac{1}{2\lambda_{I_dn}} \max_i \|\mu_i\|^2
\\& \geq \frac{\|x\|^2}{2\lambda_{\max}} - \lambda_{\max} \cdot \max_i \left( \frac{\|\mu_i\|}{\lambda_{I_dn}} \right)^2
.\end{align*}

\subsection{Discussion on ~\citet{lytras2024tamed}}\label{sec:reg_langevin}

To justify our claims at the end of \Cref{ssec:gaussian_mixtures}, we first need some preliminary background on Poincaré constants. 
Let $q \in \cP_{\rm ac}(\R^d)$. We say that $q$ satisfies the {\it Poincaré inequality} with constant $C_P\ge0$ if for all $f \in W_0^{1,2}(q)$  (functions with zero-mean with respect to $q$ and whose gradient is squared integrable with respect to $q$):
\begin{equation}
    \label{eqn:poincare} 
 \int f^2(x)dq(x) \leqslant C_P\|\nabla f\|^2_{L^2(q)},
\end{equation} 
and let $C_P(q)$ be the best constant in~\eqref{eqn:poincare}, or $+\infty$ if it does not exist. We show that in the case where $\mu = 1/2 \mathcal{N}(c, \lambda_{max}) + 1/2 \mathcal{N}(-c, \lambda_{min}) $ the Poincaré constant is at least $2e^{\frac{\| c \|^2}{2\lambda_{max}}}/\|c\|^2$ therefore the resulting sampling complexity in~\citet[Theorem 3]{lytras2024tamed} is at least $O(\text{poly}(\|c\|^2e^{\frac{\|c\|^2}{2\lambda_{max}}}, \epsilon^{-1}))$.

\begin{proposition}
    Let $\mu = 1/2 \mathcal{N}(c, \lambda_{max} I_d) + 1/2 \mathcal{N}(-c, \lambda_{min} I_d) $ with $\lambda_{max} \geq \lambda_{min} > 0$ and $\|c\| >0$. It holds that 
    $$
    C_P(\mu) \geq \frac{\|c\|^2e^{\frac{\|c\|^2}{2\lambda_{max}}}}{2} \, .
    $$
\end{proposition}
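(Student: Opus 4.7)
The strategy is to exhibit a specific test function witnessing the claimed ratio in the variational formula
\[
C_P(\mu) = \sup_{f} \frac{\mathrm{Var}_\mu(f)}{\int \|\nabla f\|^2 \, d\mu},
\]
designed so that both the polynomial factor $R^2$ (coming from the mode separation) and the exponential factor $e^{R^2/(2\lambda_{\max})}$ (coming from the exponentially small density at the bottleneck $\{\langle x, u\rangle = 0\}$ between the two modes) arise multiplicatively.

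First, I would reduce to a one-dimensional problem. Taking test functions of the form $f(x) = g(\langle x, u\rangle)$ immediately gives $C_P(\mu) \geq C_P(\rho_{\mathrm{proj}})$, where $\rho_{\mathrm{proj}}(s) = \tfrac{1}{2}\mathcal{N}(s; R, \lambda_{\max}) + \tfrac{1}{2}\mathcal{N}(s; -R, \lambda_{\min})$ is the marginal of $\mu$ along $u$. I would then use a piecewise affine candidate $g_\epsilon$ equal to $+R$ for $s \geq \epsilon$, to $-R$ for $s \leq -\epsilon$, and linearly interpolating on $[-\epsilon, \epsilon]$. Since the two Gaussian components concentrate near $\pm R$, one has $\mathrm{Var}_{\rho_{\mathrm{proj}}}(g_\epsilon) \approx R^2$ for any $\epsilon$ smaller than $R$, while the Dirichlet energy
\[
\int (g_\epsilon')^2 \rho_{\mathrm{proj}}(s)\, ds = \frac{R^2}{\epsilon^2}\, \rho_{\mathrm{proj}}([-\epsilon, \epsilon])
\]
is entirely supported in the transition layer and so is controlled by the density at the bottleneck.

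Second, the exponential factor is injected by the bound
\[
\rho_{\mathrm{proj}}(0) = \frac{e^{-R^2/(2\lambda_{\max})}}{2\sqrt{2\pi\lambda_{\max}}} + \frac{e^{-R^2/(2\lambda_{\min})}}{2\sqrt{2\pi\lambda_{\min}}} \leq \frac{e^{-R^2/(2\lambda_{\max})}}{\sqrt{2\pi\lambda_{\min}}},
\]
which uses $\lambda_{\min} \leq \lambda_{\max}$ and hence $e^{-R^2/(2\lambda_{\min})} \leq e^{-R^2/(2\lambda_{\max})}$. A matching bound uniformly on the window $|s| \leq \epsilon$, valid as long as $\epsilon$ stays on the scale over which $\rho_{\mathrm{proj}}$ remains approximately constant near the saddle (that is, $\epsilon$ of order $\lambda_{\max}/R$), propagates the factor $e^{R^2/(2\lambda_{\max})}$ into the final ratio.

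The main obstacle will be tuning $\epsilon$ so that both the factor $R^2$ (from the variance) and the factor $e^{R^2/(2\lambda_{\max})}$ (from $1/\rho_{\mathrm{proj}}(0)$) survive multiplicatively. The two natural endpoints each recover only half of the target: choosing $\epsilon \asymp R$ makes $g_\epsilon$ essentially the linear test function $\langle x, u\rangle$ and yields only $C_P \geq R^2$, whereas letting $\epsilon \to 0$ keeps only the exponential and loses the polynomial prefactor. Obtaining the full product will require either a careful optimization over $\epsilon$ or a refinement of $g_\epsilon$ that is linear near each mode while still transitioning sharply through the low-density region; verifying that this combined construction yields precisely the constant $1/2$ in front of $R^2 e^{R^2/(2\lambda_{\max})}$ is where I expect the calculation to be most delicate.
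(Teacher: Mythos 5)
Your overall strategy is the same as the paper's: reduce to the direction $u=c/\|c\|$ by taking $f(x)=g(\langle x,u\rangle)$, use a bounded test function with a transition layer through the low-density region between the modes, bound the Dirichlet energy by the (exponentially small) mass of that layer, and lower-bound the variance by the separation of the two plateau values. However, as written your argument has a genuine gap: you never fix $\epsilon$, never carry out the variance or energy estimates, and you explicitly defer the decisive step (``a careful optimization over $\epsilon$ or a refinement of $g_\epsilon$''), so no inequality is actually established. It is a plan, not a proof.

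The obstacle you anticipate stems from a misreading of the $\epsilon\asymp\|c\|$ endpoint, which is exactly the paper's choice and needs no tuning. With a transition of width comparable to $\|c\|$ the function is \emph{not} ``essentially linear'': it is constant on the bulk of each component, so the energy only charges the middle band. Concretely, the paper takes $g=\pm1$ outside $[-\|c\|/2,\|c\|/2]$ and affine inside, so the Dirichlet energy is $\frac{4}{\|c\|^2}\,\mu\bigl(\{|\langle x,u\rangle|\le\|c\|/2\}\bigr)\le\frac{2}{\|c\|^2}\bigl(\Phi(-\tfrac{\|c\|}{2\sqrt{\lambda_{\max}}})+\Phi(-\tfrac{\|c\|}{2\sqrt{\lambda_{\min}}})\bigr)$, which is exponentially small, while the variance remains of order one; the quadratic and exponential factors thus appear \emph{simultaneously} from this single fixed choice. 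Your other endpoint is also off: as $\epsilon\to0$ the energy $\frac{R^2}{\epsilon^2}\rho_{\mathrm{proj}}([-\epsilon,\epsilon])\approx\frac{2R^2\rho_{\mathrm{proj}}(0)}{\epsilon}$ blows up, so the ratio degenerates rather than ``keeping the exponential.'' Two further points to watch if you complete the argument: lower-bounding $\mathrm{Var}_\mu(f)$ by $\int f^2\,\mathrm{d}\mu$ requires a justification since $f$ is not exactly centered (e.g.\ use $\mathrm{Var}_\mu(f)\ge\tfrac14(\mathbb{E}_{\mu_1}[f]-\mathbb{E}_{\mu_2}[f])^2$); and the honest Gaussian tail bound for the wide window is $\Phi(-\tfrac{\|c\|}{2\sqrt{\lambda_{\max}}})\le\tfrac12 e^{-\|c\|^2/(8\lambda_{\max})}$, so this construction yields exponent $\|c\|^2/(8\lambda_{\max})$ (the stated $\|c\|^2/(2\lambda_{\max})$ in the paper's display relies on an erf-argument constant slip), whereas your narrow window $\epsilon\asymp\lambda_{\max}/\|c\|$ gives the sharp exponent but only a prefactor of order $\lambda_{\max}/\|c\|$. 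So your suspicion that the exact product $\tfrac12\|c\|^2e^{\|c\|^2/(2\lambda_{\max})}$ cannot be extracted from this family by simple tuning is well founded, but the remedy is not further optimization over $\epsilon$: it is to commit to the paper's fixed test function and state the constants it actually delivers.
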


\begin{proof}
    Denoting $u = c/\| c \|$ and defining $g : \mathbb{R} \to \mathbb{R}$ as
    \begin{equation*}
        \begin{cases}
            & g(t) = -1 ~~\text{if}~~ t<-\|c\|/2 \, ,\\
            & g(t) = 2t/\|c\| ~~\text{if}~~ -\|c\|/2 \leq t \leq \|c\|/2 \, , \\
            & g(t) = 1 ~~\text{if}~~ t>\|c\|/2 \, ,\\
        \end{cases}
    \end{equation*}
    consider the test function $f(x) = g(x^\top u)$. Denoting $\mu_1 = \mathcal{N}(c, \lambda_{max} I_d)$, $\mu_2 = \mathcal{N}(-c, \lambda_{min} I_d)$, we start to upper-bound the Dirichlet energy $I = \int \| \nabla f \|^2 \ud \mu$:
    \begin{align*}
        I = \frac{2}{\|c\|^2}\big( \mathbb{P}_{X\sim \mu_1}\big[-\frac{\|c\|}{2} \leq X^\top u \leq \frac{\|c\|}{2} \big] +  \mathbb{P}_{X\sim \mu_2}\big[-\frac{\|c\|}{2} \leq X^\top u \leq \frac{\|c\|}{2} \big] \big) \, .
    \end{align*}
    Now remark that for $X \sim \mu_i$, the random variable $X^\top u$ is also a one dimensional gaussian: if $X\sim \mu_1$, it holds that $X^\top u \sim \mathcal{N}(\|c\|, \lambda_{max})$ and if $X\sim \mu_2$, it holds that $X^\top u \sim \mathcal{N}(-\|c\|, \lambda_{min})$. In particular, the Dirichlet energy re-writes as
    $$
    I = \frac{2}{\| c \|^2}\big(\Phi(\frac{-\|c\|}{2\sqrt{\lambda_{max}}}) - \Phi(\frac{-3\|c\|}{2\sqrt{\lambda_{max}}}) + \Phi(\frac{-\|c\|}{2\sqrt{\lambda_{min}}}) - \Phi(\frac{-3\|c\|}{2\sqrt{\lambda_{min}}}) \big) \, ,
    $$
    where $\Phi$ is the cumulative density function of the standard Gaussian. In particular, $I$ is upper-bounded as
    $$I \leq \frac{4\Phi(\frac{-\|c\|}{2\sqrt{\lambda_{max}}})}{\|c\|^2} = \frac{2}{\|c\|^2}(1 - \text{erf}(\frac{\|c\|}{\sqrt{2\lambda_{max}}})) \leq \frac{2e^{-\frac{\|c\|^2}{2\lambda_{max}}}}{\|c\|^2} \, .$$
    Now there remains to lower-bound the variance term. The variance reads
    $$
    \text{Var}_{\mu}(f) = \int f^2 \ud \mu = 1/2(\mathbb{E}_{\mu_1}[f^2(X)] + \mathbb{E}_{\mu_2}[f^2(X)]) \, .
    $$
    The first term can be lower-bounded as 
    \begin{align*}
        \mathbb{E}_{\mu_1}[f^2(X)] & \geq \mathbb{P}_{X \sim \mu_1}[X^\top \mu \geq \frac{\|c\|}{2}] \, ,\\
        & = \Phi(\frac{\|c\|}{2\sqrt{\lambda_{\max}}}
        ) \, , \\
        & \geq 1/2 \, .
    \end{align*}
    Conversely, it holds that $\mathbb{E}_{\mu_2}[f^2(X)] \geq 1/2$ and \textit{a fortiori} $\text{Var}_{\mu}(f) \geq 1$. Hence we recover that 
    $$
    \frac{\text{Var}_{\mu}(f)}{I} \geq  \frac{\|c\|^2e^{\frac{\|c\|^2}{2\lambda_{max}}}}{2}\, ,
    $$
    which concludes the proof.
\end{proof}

\subsection{Discussion on~\citet{agapiou2017importance}}
\label{appendix:agapiou}

In~\citet{agapiou2017importance}, for $\nu$ dominated by $\pi$, denoting $g$ the unnormalised density ratio
$$
\frac{\ud \nu}{\ud \pi} (u) \coloneqq \frac{g(u)}{\int g(u) \ud \pi(u)} \, ,
$$
the authors derived error bounds for \textit{self-normalised} estimators: for any test function $\phi: \mathbb{R}^d \to \mathbb{R}$, the expectation of $\phi$ under $\nu$, denoted $\nu(\phi)$, is estimated via 
$$
\nu^n(\phi) = \frac{\sum_{i=1}^n \phi(x_i) g(x_i)}{\sum_{i=1}^n g(x_i)} \, ,
$$ 
with $(x_i)$ $n$-iid samples drawn from $\pi$. In Theorem 2.3, they provide the following quadratic error bound:
\begin{equation}
\label{eq:ub_agapiou}
\mathbb{E}[(\nu(\phi) - \nu^n(\phi))^2] \lesssim \frac{3\pi(|\phi g|^{2d})^{1/d}m_{2e}(g)^{1/e}}{n\pi(g)^4} \, ,
\end{equation}
with $m_t(h) = \pi(|h(\cdot) - \pi(h)|^t)$ and $d,e \in (1, +\infty[$ such that $1/d+1/e = 1$. Now recall that, denoting $Y_t \sim \mathcal{N}(0, (1-e^{-2t})I_d)$ the score along the forward is given by
\begin{align*}
    \nabla \log(p_t^V)(z) &=  \frac{-1}{1-e^{-2t}} \frac{\mathbb{E}[Y_t e^{-V(e^{t}(z-Y_t))}]}{\mathbb{E}[e^{-V(e^{t}(z-Y_t))}]} \, ,\\
    & = \frac{-1}{1-e^{-2t}}\frac{\int y e^{-V(e^{t}(z-y))} e^{-\frac{\|y\|^2}{2(1-e^{-2t})}}\ud y}{\int e^{-V(e^{t}(z-y))} e^{-\frac{\|y\|^2}{2(1-e^{-2t})}} \ud y} \, ,\\
    & = \frac{-1}{1-e^{-2t}} \mathbb{E}_{Z\sim\nu}[Z]\, ,\\
\end{align*}
where $\nu(x) \propto e^{-V(e^{t}(z-x))} e^{-\frac{\|x\|^2}{2(1-e^{-2t})}}$. In particular, denoting $\pi(x) \propto e^{-\frac{\|x\|^2}{2(1-e^{-2t})}}$, the quadratic error of the self-normalized estimator reads
\begin{align*}
\|\hat{s}_{t,n}(z) - \nabla \log p_t^V(z)\|^2 & = \frac{1}{(1-e^{-2t})^2} \biggl \| \frac{\sum_{i=1}^n y_i g(y_i)}{\sum_{i=1}^n g(y_i)} -  \mathbb{E}_{Z\sim \nu}[Z] \biggr \|^2 \, , \\
& = \frac{1}{(1-e^{-2t})^2} \sum_{k=1}^d \biggl ( \frac{\sum_{i=1}^n \phi_k(y_i) g(y_i)}{\sum_{i=1}^n g(y_i)} -  \mathbb{E}_{Z\sim \nu}[\phi_k(Z)] \biggr)^2 \, .
\end{align*}
with $g = \ud \nu / \ud \pi $, $(y_i)$ $n$-iid samples drawn from $\pi$ and $\phi_k: \mathbb{R}^d \to \mathbb{R}$ the projector on the $k$-th coordinate. Hence, the bound \eqref{eq:ub_agapiou} of~\citet{agapiou2017importance} provides the following upper-bound
$$
\|\hat{s}_{t,n}(z) - \nabla \log p_t^V(z)\|^2 \leq \sum_{k=1}^d \frac{3\pi(|\phi_k g|^{2d})^{1/d}m_{2e}(g)^{1/e}}{n\pi(g)^4} \, .
$$
We show in the next paragraph that, when integrated against $p_t^V$, this upper-bound is vacuous. By Jensen's inequality, it holds that
\begin{equation*}
    \begin{cases}
        & m_{2e}(g)^{1/(2e)} \geq m_{2}(g)^{1/2} \, ,\\ 
        & \pi(|\phi_k g|^{2d})^{1/(2d)} \geq \pi(|\phi_k g|^{2})^{1/2} \, . 
    \end{cases}
\end{equation*}
Hence, the previous right-hand side of is lower-bounded as
$$
\sum_{k=1}^d \frac{3\pi(|\phi_k g|^{2d})^{1/d}m_{2e}(g)^{1/e}}{n\pi(g)^4} \geq \sum_{k=1}^d \frac{3\pi(|\phi_k g|^{2})m_{2}(g)}{n\pi(g)^4} = \frac{3m_{2}(g)}{n\pi(g)^4} \sum_{k=1}^d \pi(|\phi_k g|^{2})\, .
$$
We now compute explicitly each of the quantities in the right-hand-side above using Proposition \ref{lemma:density_log_hess_forward}. First we have 
\begin{align*}
    \pi(g) & =(2\pi(1-e^{-2t}))^{-d/2} \int e^{-V(e^{t}(z-x))} e^{-\frac{\|x\|^2}{2(1-e^{-2t})}} \ud x\, , \\
    & = e^{-td}Z_Vp_t^{V}(z) \, .
\end{align*}
Then, the variance term reads
\begin{align*}
    m_{2}(g) & = (2\pi(1-e^{-2t}))^{-d/2}\int (e^{-V(e^{t}(z-x))} - e^{-td}Z_Vp_t^V(z))^2 e^{-\frac{\| x\|^2}{2(1-e^{-2t})}} \ud x \, ,\\
    & = e^{-td}Z_{2V}p_t^{2V}(z) -  2e^{-2td}Z_{2V}Z_Vp_t^V(z)p_t^{2V}(z) + (e^{-td}Z_V)^2p_t^V(z)^2 \, .
\end{align*}
Finally, the second order moment term reads
\begin{align*}
    \sum_{k=1}^d \pi(|\phi_k g|^{2}) & = (2\pi(1-e^{-2t}))^{-d/2}\int \| x \|^2 e^{-2V(e^{t}(z-x))} e^{-\frac{\| x\|^2}{2(1-e^{-2t})}} \ud x \, ,\\
    & = e^{-td}p_t^{2V}(z)\Theta(z) \, ,
\end{align*}
where we denoted
$$\Theta(z) \coloneqq  Z_{2V}(1-e^{-2t})^2(\Delta \log(p_t^{2V})(z) + \| \nabla \log(p_t^{2V})(z) \|^2 + d/(1-e^{-2t}))\, .$$
Hence, once integrated against $p_t^V$, our lower bound reads
\begin{align*}
    \int \frac{3m_{2}(g)}{n\pi(g)^4} \sum_{k=1}^d \pi(|\phi_k g|^{2})p_t^V(z) \ud z = & ~e^{2td}\frac{Z_{2V}^2}{Z_V^4}\int \Theta(z)\frac{p_t^{2V}(z)^2}{p_t^V(z)^3} \ud z - 2e^{td}\frac{Z_{2V}^2}{Z_V^3}\int \Theta(z)\frac{p_t^{2V}(z)^2}{p_t^V(z)^2} \ud z \\
    & + e^{td}\frac{Z_{2V}}{Z_V^2}\int \Theta(z)\frac{p_t^{2V}(z)}{p_t^V(z)}  \ud z \, .
\end{align*}

Combining Proposition \ref{prop:regularity_forward} and Lemma \ref{lemma:upper_bound_ratio}, we have that the third term is always finite. For the specific case $V(x) = \frac{\| x \|^2}{2}$, we show that the second term also remains finite while the first diverges thus making the overall lower bound diverge. First, we recall that for this choice of potential, we have 
\begin{equation*}
    \begin{cases}
    & p_t^V(z) \propto e^{-\frac{\|z\|^2}{2}} \, , \\
    & p_t^{2V}(z) \propto e^{-\frac{\|z\|^2}{2(1-e^{-2t}/2)}} \, ,
    \end{cases}
\end{equation*}
from what we obtain
\begin{equation*}
    \begin{cases}
        & \Theta(z) = Z_{2V}\big(\| z\|^2 + \frac{d(1-e^{-2t})[2(1-e^{-2t})+e^{-4t}]}{2-e^{-2t}}\big) \, , \\
        & \frac{p_t^{2V}(z)^2}{p_t^{V}(z)^2} \propto e^{-\frac{\|e^{-t}z\|^2}{2(1-e^{-2t}/2)}} \, , \\
        & \frac{p_t^{2V}(z)^2}{p_t^{V}(z)^3} \propto e^{\frac{\|z\|^2(1-e^{-2t})}{2(1-e^{-2t}/2)}} \, .
    \end{cases}
\end{equation*}
In particular, we obtain that the second term is bounded and that the first term diverges to infinity making the overall bound of~\citet{agapiou2017importance} vacuous.

\section{Proof of Proposition~\ref{prop:non_asympt}}
\label{appendix:prop_non_asympt}

Before starting the proof, we recall the following identities.

\begin{proposition}[Tweedie's formulas]
\label{lemma:density_log_hess_forward}
    Denoting $p_t^V$ the density of the forward process $X_t$ initialized at $\mu \propto e^{-V}$, and $Y_t \sim \mathcal{N}(0, (1-e^{-2t})I_d)$, it holds for all $z\in \mathbb{R}^d$ that
    \begin{equation}
        p_t^V(z) =  \frac{e^{td}}{Z_V} \mathbb{E}[e^{-V((z - Y_t)e^{t})}]\, ,
    \end{equation}
    that
    \begin{equation}
        \nabla \log(p_t^V)(z) = \frac{\mathbb{E}[-Y_t e^{-V((z-Y_t)e^{t})}]}{(1- e^{-2t})\mathbb{E}[e^{-V((z-Y_t)e^{t})}]}  \, ,
    \end{equation}
    and that
    \begin{equation}
    \label{eq:log_hessian_forward}
    \begin{split}        
        \nabla^2 \log(p_t^V)(z) = & \frac{\mathbb{E}[Y_t Y_t^\top  e^{-V((z-Y_t)e^t)}]}{(1-e^{-2t})^2 \mathbb{E}[e^{-V((z-Y_t)e^{t})}]} -\frac{I_d}{(1-e^{-2t})} \\
        & - (\nabla \log(p_t^V)(z))) (\nabla \log(p_t^V)(z))^\top \, .
    \end{split}
    \end{equation}
\end{proposition}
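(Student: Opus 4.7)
The three identities are all consequences of writing $X_t$ in closed form along the forward OU process and then performing Gaussian integration by parts in the noise variable $Y_t$. Since $X_t = e^{-t} X_0 + Y_t$ with $X_0 \sim \mu \propto e^{-V}$ and $Y_t \sim \mathcal{N}(0,(1-e^{-2t}) I_d)$ independent, the density of $X_t$ at $z$ is obtained by conditioning on $Y_t = y$ and using the change of variables $x_0 = e^t(z - y)$, which produces the Jacobian factor $e^{td}$. This yields directly
\begin{equation*}
p_t^V(z) = e^{td}\, \mathbb{E}\bigl[\mu\bigl(e^t(z - Y_t)\bigr)\bigr] = \frac{e^{td}}{Z_V}\,\mathbb{E}\bigl[e^{-V(e^t(z-Y_t))}\bigr],
\end{equation*}
which is the first identity.

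The second identity follows from the key symmetry: for the integrand $g(y,z) := e^{-V(e^t(z-y))}$, we have $\nabla_z g(y,z) = -\nabla_y g(y,z)$, since each derivative pulls out a factor $\pm e^t \nabla V$. Hence differentiating under the expectation,
\begin{equation*}
\nabla_z p_t^V(z) = \frac{e^{td}}{Z_V}\,\mathbb{E}\bigl[\nabla_z g(Y_t,z)\bigr] = -\frac{e^{td}}{Z_V}\,\mathbb{E}\bigl[\nabla_y g(Y_t,z)\bigr].
\end{equation*}
Integrating by parts against the density of $Y_t$, for which $\nabla \log p_{Y_t}(y) = -y/(1-e^{-2t})$, transfers the $y$-derivative onto the Gaussian weight and produces a factor $-Y_t/(1-e^{-2t})$. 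Dividing by $p_t^V(z)$ gives the stated formula for $\nabla \log p_t^V(z)$.

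For the Hessian, I start from $\nabla^2 \log p_t^V = \nabla^2 p_t^V/p_t^V - (\nabla \log p_t^V)(\nabla \log p_t^V)^\top$; the second term is already accounted for in the right-hand side of \eqref{eq:log_hessian_forward}, so it suffices to compute $\nabla_z^2 p_t^V/p_t^V$. Using the same symmetry as above entry-wise, $\partial_{z_i}\partial_{z_j} g = \partial_{y_i}\partial_{y_j} g$, so
\begin{equation*}
\frac{\nabla_z^2 p_t^V(z)}{p_t^V(z)} = \frac{\mathbb{E}[\nabla_y^2 g(Y_t,z)]}{\mathbb{E}[g(Y_t,z)]}.
\end{equation*}
Applying Gaussian integration by parts twice against $p_{Y_t}$ (once for each derivative) produces on the one hand the quadratic term $Y_t Y_t^\top/(1-e^{-2t})^2$ coming from $(\nabla \log p_{Y_t})(\nabla \log p_{Y_t})^\top$, and on the other hand an $-I_d/(1-e^{-2t})$ correction coming from $\nabla^2 \log p_{Y_t} = -I_d/(1-e^{-2t})$ (Stein's identity for the covariance of a Gaussian). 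Combining these pieces and subtracting the rank-one outer product yields \eqref{eq:log_hessian_forward}.

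\textbf{Main obstacle.} The content of the proof is entirely mechanical once the symmetry $\nabla_z g = -\nabla_y g$ is identified; no analytic difficulty arises beyond justifying differentiation under the expectation and vanishing of boundary terms in the integrations by parts. The former holds because $V$ is $C^2$ and the Gaussian weight produces integrable dominating functions on any bounded neighborhood; the latter is standard since the Gaussian density decays super-polynomially while, by Assumption~\ref{assumption:dissipativity}, $e^{-V}$ has at most Gaussian growth, so all boundary contributions at infinity vanish. The bookkeeping for the second derivative is the only place requiring care, and amounts to tracking the two terms produced by the product rule when transferring a $y$-derivative onto $y\, p_{Y_t}(y)$.
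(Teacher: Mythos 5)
Your proof is correct, but it takes a different route from the paper's. The paper writes $p_t^V(z)$ as the convolution integral over the initial variable $x$, with the explicit Gaussian kernel $e^{-\|z-xe^{-t}\|^2/(2(1-e^{-2t}))}$, and differentiates $\log p_t^V$ directly in $z$: every derivative falls on that kernel, the Hessian is obtained by an explicit quotient-rule computation, and only at the very end is the change of variables $y = z - xe^{-t}$ performed to phrase everything as expectations over $Y_t$. No derivative of $V$ ever appears, so no integration by parts, no boundary terms, and no smoothness of $V$ is needed -- the identities hold as soon as the integrals make sense. You instead change variables first, differentiate under the expectation so that the derivatives land on $e^{-V}$ via the symmetry $\nabla_z g = -\nabla_y g$, and then remove $\nabla V$ (and $\nabla^2 V$) by one, respectively two, Gaussian integrations by parts (Stein's identity), recovering the same $-Y_t/(1-e^{-2t})$ and $Y_tY_t^\top/(1-e^{-2t})^2 - I_d/(1-e^{-2t})$ factors. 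This is essentially the TSI-to-DSI conversion and is perfectly valid here, but it is strictly more demanding: it requires $V\in C^2$ (granted by Assumption~\ref{assumption:semi_convexity}) and a justification of both the interchange of derivative and expectation and the vanishing of boundary terms, which you do address -- note that the precise reason the boundary terms vanish is that dissipativity forces $V$ to grow at least quadratically (so $e^{-V}$ decays, rather than merely ``has at most Gaussian growth''), while semi-log-convexity bounds $\|\nabla V\|$ by an affine function, so the products $\nabla V\, e^{-V} p_{Y_t}$ and $\nabla^2 V\, e^{-V} p_{Y_t}$ indeed vanish at infinity. The paper's route buys robustness and brevity (no regularity of $V$ used); yours buys a cleaner conceptual link to Gaussian integration by parts at the cost of these extra, though here harmless, hypotheses.
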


\begin{proof}
Recall that $p_t^V$ is the law of the variable
$$
X_t = e^{-t} X_0 + B_{1-e^{-2t}} \, ,
$$
with $X_0 \sim \mu$ and $B_s$ the standard Brownian motion evaluated at time $s$. Hence, using Bayes formula, we have
    \begin{align*}
        p^{V}_t(z) & = \int p_t(z | x) d p_0(x) = \frac{1}{Z^{V} (1-e^{-2t})^{d/2} (2\pi)^{d/2}} \int_{\mathbb{R}^d} e^{- \frac{\| z - xe^{-t} \|^2}{2(1-e^{-2t})}} e^{-V(x)} dx \, .
    \end{align*}
    After taking the logarithm and differentiating with respect to $z$, we obtain
    $$
    \nabla \log(p_t^V)(z) = \frac{\int_{\mathbb{R}^d} -(z - xe^{-t}) e^{- \frac{\| z - xe^{-t} \|^2}{2(1-e^{-2t})}} e^{-V(x)} dx}{(1-e^{-2t})\int_{\mathbb{R}^d} e^{- \frac{\| z - xe^{-t} \|^2}{2(1-e^{-2t})}} e^{-V(x)} dx} \, .
    $$
    To obtain the Hessian, we differentiate the formula above. The Jacobian of the numerator is given by
    $$
        -I_d \int_{\mathbb{R}^d} e^{- \frac{\| z - xe^{-t} \|^2}{2(1-e^{-2t})}} e^{-V(x)} dx + \frac{1}{1-e^{-2t}}\int_{\mathbb{R}^d} (z - xe^{-t}) (z - xe^{-t})^{\top}  e^{- \frac{\| z - xe^{-t} \|^2}{2(1-e^{-2t})}} e^{-V(x)} dx \, ,
    $$
    from which we can deduce
    \begin{align*}
\nabla^2 \log(p_t^V)(z) & = -\frac{I_d}{(1-e^{-2t})} + \frac{\int_{\mathbb{R}^d} (z - xe^{-t}) (z - xe^{-t})^{\top}  e^{- \frac{\| z - xe^{-t} \|^2}{2(1-e^{-2t})}} e^{-V(x)} dx}{(1-e^{-2t})^2 \int_{\mathbb{R}^d} e^{- \frac{\| z - xe^{-t} \|^2}{2(1-e^{-2t})}} e^{-V(x)} dx} \\
& ~~~ - \frac{(\int_{\mathbb{R}^d} (z - xe^{-t})  e^{- \frac{\| z - xe^{-t} \|^2}{2(1-e^{-2t})}} e^{-V(x)} dx)(\int_{\mathbb{R}^d} (z - xe^{-t})  e^{- \frac{\| z - xe^{-t} \|^2}{2(1-e^{-2t})}} e^{-V(x)} dx)^\top}{(1-e^{-2t})^2 (\int_{\mathbb{R}^d} e^{- \frac{\| z - xe^{-t} \|^2}{2(1-e^{-2t})}} e^{-V(x)} dx)^2} \\
& = -\frac{I_d}{(1-e^{-2t})} + \frac{\int_{\mathbb{R}^d} (z - xe^{-t}) (z - xe^{-t})^{\top}  e^{- \frac{\| z - xe^{-t} \|^2}{2(1-e^{-2t})}} e^{-V(x)} dx}{(1-e^{-2t})^2 \int_{\mathbb{R}^d} e^{- \frac{\| z - xe^{-t} \|^2}{2(1-e^{-2t})}} e^{-V(x)} dx} \\
& ~~~ - (\nabla \log(p_t^V)(z)) (\nabla \log(p_t^V)(z))^\top \, .
 \end{align*}
In order to rewrite the quantities above as expectations, we make the change of variable $y = z - x e^{-t}$ so that $x = (z - y)e^{t}$ and we obtain for the density $p_t^{V}$ 
: 

\begin{equation*}
    p_t^V(z) = \frac{e^{td}}{Z_V} \mathbb{E}[e^{-V((z-Y_t)e^{t})}] \, ,
\end{equation*}
where $Y_t \sim \mathcal{N}(0, I_d(1-e^{-2t}))$. Conversely, the score rewrites as
\begin{equation*}
\nabla \log(p_t^V)(z) = \frac{\mathbb{E}[-Y_t e^{-V((z-Y_t)e^{t})}]}{(1- e^{-2t})\mathbb{E}[e^{-V((z-Y_t)e^{t})}]} \, ,
\end{equation*}
and the Hessian rewrites as
$$
\nabla^2 \log(p_t^V)(z) = \frac{\mathbb{E}[Y_t Y_t^\top  e^{-V((z-Y_t)e^t)}]}{(1-e^{-2t})^2 \mathbb{E}[e^{-V((z-Y_t)e^{t})}]} -\frac{I_d}{(1-e^{-2t})} - (\nabla \log(p_t^V)(z))) (\nabla \log(p_t^V)(z))^\top \, .
$$

\end{proof}

For the rest of the proof we shall drop the dependence in $z$ and write $\hat{s}_{t,n}(z) = \frac{\hat{N}}{\hat{D}}$ with the empirical numerator $\hat{N} = -\frac{\sum_{i=1}^n y_i e^{-V(e^t(z-y_i))}}{1-e^{-2t}} $ and denominator $\hat{D} = \sum_{i=1}^n e^{-V(e^t(z-y_i))}$ where we recall $y_i \sim \mathcal{N}(0, (1-e^{-2t})I_d)$. In the following proposition, we explicitly compute the variances of $\hat{N}$ and $\hat{D}$ using the formulas above. In what follows, we shall denote $N = \mathbb{E}[N]$ and $D = \mathbb{E}[D]$

\begin{proposition}[Variance of estimators]
\label{prop:variance_num_denom}
Let $y_1,\dots,y_n$ i.i.d. distributed as $\mathcal{N}(0, (1-e^{-2t})I_d)$. Denote by $\pi$ a standard normal density, and by $\hat{N}(z)$ and $\hat{D}(z)$ the numerator and denominator of the estimator defined in \eqref{eq:def_estimator}. We have:
\begin{align*}
    \mathbb{E}[ \| \hat{N} - N \|^2] &\leq \frac{p_t^{2V}Z_{2V}e^{-td}}{n}  \left(\Delta \log(p_t^{2V})(z) - \frac{\Delta \log(\pi)(z)}{1-e^{-2t}} + \| \nabla \log(p_t^{2V})\|^2(z) \right) \, ,\\
\mathbb{E}[ \| \hat{D} - D \|^2] &\leq \frac{p_t^{2V}(z) Z_{2V} e^{-td}}{n}  \, .
\end{align*}
\end{proposition}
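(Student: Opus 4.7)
The random vectors inside $\hat{N}$ and $\hat{D}$ are i.i.d.\ in $i$, so I plan to start from the standard i.i.d.\ variance identity and bound the variance of each term by its raw second moment:
$$
\mathbb{E}[\|\hat N-N\|^2]\le \frac{1}{n(1-e^{-2t})^2}\mathbb{E}\bigl[\|y_1\|^2 e^{-2V(e^t(z-y_1))}\bigr],\qquad \mathbb{E}[(\hat D-D)^2]\le \frac{1}{n}\mathbb{E}\bigl[e^{-2V(e^t(z-y_1))}\bigr].
$$
The game then reduces to computing the two Gaussian integrals on the right, and my strategy is to recognise them as moments of the forward Ornstein--Uhlenbeck process initialised from the rescaled potential $2V$ rather than $V$, so that Tweedie-type identities can be applied to $p_t^{2V}$.

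For the denominator, I would invoke the density part of Lemma~\ref{lemma:density_log_hess_forward} applied to $2V$, which reads $\mathbb{E}[e^{-2V(e^t(z-Y_t))}] = e^{-td}Z_{2V}p_t^{2V}(z)$. Substituting this in the display above yields the claimed bound on $\mathbb{E}[(\hat{D}-D)^2]$ with no further work.

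For the numerator, the main observation is that the weighted integral $\mathbb{E}[\|Y_t\|^2 e^{-2V(e^t(z-Y_t))}]$ is exactly what is produced by the Hessian-Tweedie identity \eqref{eq:log_hessian_forward} applied to $p_t^{2V}$. Taking the trace at $z$ gives
$$
\Delta \log(p_t^{2V})(z) = \frac{\mathbb{E}[\|Y_t\|^2 e^{-2V(e^t(z-Y_t))}]}{(1-e^{-2t})^2\, \mathbb{E}[e^{-2V(e^t(z-Y_t))}]} - \frac{d}{1-e^{-2t}} - \|\nabla \log(p_t^{2V})(z)\|^2,
$$
so that the weighted second moment can be solved for exactly. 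Plugging this back, together with the denominator identity for $\mathbb{E}[e^{-2V(e^t(z-Y_t))}]$, and rewriting $d/(1-e^{-2t}) = -\Delta\log(\pi)(z)/(1-e^{-2t})$ using that the standard Gaussian satisfies $\Delta\log\pi(z)=-d$, reproduces the stated bound on $\mathbb{E}[\|\hat{N}-N\|^2]$.

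The calculation is essentially bookkeeping; the only substantive point is that I must differentiate Tweedie for the \emph{doubled} potential $2V$, which is what makes the moments of $p_t^{2V}$ (rather than $p_t^V$) appear. This detail is crucial downstream: when Proposition~\ref{prop:non_asympt} integrates these variances against $p_t^V$, the factor $p_t^{2V}(z)$ produced here will combine into the ratio $\Phi_t = p_t^{2V}/p_t^V$ that Lemma~\ref{lemma:upper_bound_ratio} controls, whereas any cruder bound replacing $\|y_1\|^2$ by a deterministic quantity would already destroy the final polynomial complexity.
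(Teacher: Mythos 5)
Your proposal is correct and follows essentially the same route as the paper: the i.i.d.\ variance identity, bounding each variance by the raw second moment, and then evaluating the resulting Gaussian integrals via the Tweedie identities of Lemma~\ref{lemma:density_log_hess_forward} applied to the doubled potential $2V$ (the density formula for $\hat D$, the trace of the log-Hessian formula together with $\Delta\log\pi=-d$ for $\hat N$). Nothing is missing.
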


\begin{proof}
For the numerator, we have
$$
\hat{N} - N  = \frac{-1}{n} \sum_{i=1}^n \frac{y_ie^{-V((z-y_i)e^t)} }{1-e^{-2t}} + N \, ,
$$
hence, since the $y_i, \; i=1,\dots,n$ are i.i.d. distributed as $Y_t\sim \mathcal{N}(0, (1-e^{-2t})I_d)$,
\begin{align*}
\mathbb{E}[ \| \hat{N} - N \|^2] = \frac{1}{n} \mathbb{E}\left[\left \| \frac{Y_te^{-V((z-Y_t)e^t)} }{1-e^{-2t}} - N(z) \right \|^2\right] 
\leq  \frac{1}{n} \mathbb{E}\left[ \frac{\| Y_t\|^2 e^{-2V((z-Y_t)e^t)}}{(1-e^{-2t})^2} \right] \, .
\end{align*}
Taking in the trace in the log hessian identity in Proposition \ref{lemma:density_log_hess_forward} yields
$$
\mathbb{E}\left[\frac{\| Y_t\|^2 e^{-2V((z-Y_t)e^t)}}{(1-e^{-2t})^2} \right] = \left( \Delta \log(p_t^{2V}) - \frac{\Delta \log(\pi)}{1-e^{-2t}} + \|\nabla \log(p_t^{2V})\|^2 \right)p_t^{2V} Z_{2V} e^{-td} \, .
$$
Similarly, we have 
$$
\hat{D} - D  = \frac{1}{n} \sum_{i=1}^n e^{-V((z-y_i)e^t)} - D \, ,
$$
hence we get using again Proposition \ref{lemma:density_log_hess_forward},
\begin{align}
\mathbb{E}[ (\hat{D} - D)^2]  = \frac{1}{n}\left( \mathbb{E}[e^{-2V((z-Y_t)e^{t})}] - \mathbb{E}[e^{-2((z-Y_t)e^{t})}]^2 \right)  \leq \frac{p_t^{2V}(z) Z_{2V}e^{-td}}{n}\, .
\end{align}
\end{proof}

We can now prove Proposition \ref{prop:non_asympt}.
\begin{proof}
Define the event $A=(\hat{D} \geq \eta D)\cap(\|\hat{N}\| \leq \kappa \| N\| )$ where $\eta \leq 1, \kappa \geq 1$ are positive scalars to be chosen later. We start to decompose the quadratic error as:
$$
 \mathbb{E}\left[\left \| \frac{\hat{N}}{\hat{D}} - \frac{N}{D}\right \|^2 \right] =  \mathbb{E}\left[\left\| \frac{\hat{N}}{\hat{D}} - \frac{N}{D} \right\|^2  \mathds{1}_A \right] + \mathbb{E}\left[\left \| \frac{\hat{N}}{\hat{D}} - \frac{N}{D} \right\|^2  \mathds{1}_{\bar{A}} \right] \, .
$$
We now separately analyze the first and the second term. For the first term, define 
\begin{equation*}
\begin{cases}
 \theta: \mathbb{R}^{d}\times\mathbb{R}^* \to \mathbb{R} \\
(x,p) \mapsto \left \|\frac{x}{p} - \frac{N}{D} \right\|^2 \, .
\end{cases}
\end{equation*}
The gradient  and Hessian of $\theta$ are given by
\begin{equation*}
\begin{cases}
    & \nabla \theta(x, p) = \frac{-2}{p}\left(\frac{N}{D}-\frac{x}{p} , \left\| \frac{x}{p} \right \|^2 - \left\langle \frac{x}{p} , \frac{N}{D} \right\rangle \right)  \, , \\
    & \nabla^2 \theta(x, p) = \frac{-2}{p^2} \begin{pmatrix}
				-I_d & (\frac{2x}{p} - \frac{N}{D})^\top \\
				 (\frac{2x}{p} - \frac{N}{D}) & - 3 \left \| \frac{x}{p}  \right \|^2 + 2 \langle \frac{x}{p}, \frac{N}{D} \rangle
				  \end{pmatrix} \, .
\end{cases} 
\end{equation*}
We thus make a Taylor expansion of order $2$ of $\theta( \hat{N}, \hat{D})$ around $(N, D)$: there exists (a random) $\hat{t} \in [0, 1]$ such that
\begin{equation*}
\begin{split}
\theta( \hat{N}, \hat{D}) &= \theta(N, D) + \nabla \theta(N, D)^\top ( \hat{N} - N, \hat{D} - D) 
\\ & ~~~ + \frac{1}{2}  ( \hat{N} - N, \hat{D} - D)^\top \nabla^2 \theta(\hat{N}_{\hat{t}},\hat{D}_{\hat{t}})  (\hat{N} - N, \hat{D} - D)\, .
\end{split}
\end{equation*}
where we denoted $\hat{N}_{\hat{t}}  =  \hat{t} \hat{N} + (1 - \hat{t}) N$ and  $\hat{D}_{\hat{t}}  =  \hat{t} \hat{D} + (1 - \hat{t}) D$.
The two first terms in the expansion are null and we are left with
\begin{align*}
\theta( \hat{N}, \hat{D})& = \frac{1}{ \hat{D}_{\hat{t}}^2 }\bigg(  \| \hat{N} - N \|^2 -2 (\langle 2 \frac{\hat{N}_{\hat{t}}}{\hat{D}_{\hat{t}}} - \frac{N}{D} , \hat{N} - N \rangle  (\hat{D} - D))  \\
&~~~+\left(3 \frac{\| \hat{N}_{\hat{t}}  \|^2 }{\hat{D}_{\hat{t}}^2} - 2 \langle \frac{\hat{N}_{\hat{t}}}{\hat{D}_{\hat{t}}}, \frac{N}{D} \rangle\right)(\hat{D} - D)^2 \bigg) \\
& \leq \frac{1}{ \hat{D}_{\hat{t}}^2 }\left(  \| \hat{N} - N \|^2 + 2 \|  2 \frac{\hat{N}_{\hat{t}}}{\hat{D}_{\hat{t}}} - \frac{N}{D} \| \| \hat{N} - N \|   |\hat{D} - D| + \left(3 \frac{\| \hat{N}_{\hat{t}}  \|^2 }{\hat{D}_{\hat{t}}^2} + 2 \left \| \frac{\hat{N}_{\hat{t}}}{\hat{D}_{\hat{t}}} \right \|  \left \|\frac{N}{D} \right\| \right) (\hat{D} - D)^2  \right) \, .
\end{align*}
 Hence, almost surely over $A$
$$
\left \| \frac{\hat{N}}{\hat{D}} - \frac{N}{D} \right \|^2 \leq \frac{1}{\eta^2 D^2}\left(  \| \hat{N} - N \|^2 +  \frac{6\kappa}{\eta} \left\| \frac{N}{D} \right\| \| \hat{N} - N \|   |\hat{D} - D|  + 5 \left \| \frac{N}{D} \right\|^2\left( \frac{\kappa}{\eta}\right)^2 (\hat{D} - D)^2 \right) \, .
$$
Hence, after taking the expectation and applying Cauchy-Schwarz, we obtain
\begin{align*}
\mathbb{E}\left [ \left \| \frac{\hat{N}}{\hat{D}} - \frac{N}{D} \right \|^2 \mathds{1}_{A}   \right] \leq \frac{1}{\eta^2 D^2} & ( \mathbb{E}[\| \hat{N} - N \|^2]  + \frac{6\kappa}{\eta} \left\| \frac{N}{D} \right\| \mathbb{E}[\| \hat{N} -N\|^2]^{1/2} \mathbb{E}[(\hat{D} - D)^2]^{1/2}   \\
& ~~~ +   5 \left \| \frac{N}{D} \right\|^2 \left( \frac{\kappa}{\eta}\right)^2 \mathbb{E}[(\hat{D} - D)^2] )\, .
\end{align*}
Now recall that  $D = p_t^V Z_V e^{-td}$ and that $\frac{N}{D} = \nabla \log(p_t^V) $ which, combined with Proposition \ref{prop:variance_num_denom} yields for the first term:
$$
\frac{\mathbb{E}[\| \hat{N} - N \|^2]}{\eta^2 D^2}  \leq \frac{p_t^{2V}Z_{2V}   e^{td}}{\eta^2   (p_t^V)^2(Z_V)^2 n} \left(\Delta \log(p_t^{2V}) - \frac{\Delta \log(\pi)}{1-e^{-2t}} + \|\nabla \log(p_t^{2V})\|^2 \right) \, ,
$$
for the second term:
$$
\frac{6\kappa}{\eta^3 D^2} \left\| \frac{N}{D} \right\| \| \hat{N} - N \|   |\hat{D} - D|  \leq \frac{6\kappa p_t^{2V} Z_{2V}  e^{td}}{ \eta^3   (p_t^V)^2 (Z_V)^2 n} \| \nabla \log(p_t^V) \|  \left(\Delta \log(p_t^{2V}) - \frac{\Delta \log(\pi)}{1-e^{-2t}} + \|\nabla \log(p_t^{2V})\|^2 \right)^{1/2} 
$$
and for the last term:
$$
5\left \| \frac{N}{D} \right\|^2 \left(\frac{\kappa}{\eta^2 D}\right)^2 (\hat{D} - D)^2 \leq \frac{5\kappa^2 p_t^{2V}  Z_{2V} e^{td}}{ \eta^4    (p_t^V)^2 (Z_V)^2  n} \| \nabla \log(p_t^V) \|^2 \, .
$$
Hence we finally obtain
\begin{align}\label{eq:bound_quadratic_error_on_A}
\mathbb{E}\left [ \left \| \frac{\hat{N}}{\hat{D}} - \frac{N}{D} \right \|^2 \mathds{1}_{A}   \right] \leq \frac{p_t^{2V}Z_{2V}e^{td}}{n \eta^2(p_t^V Z_V)^2} & \bigg(\frac{6\kappa}{\eta}\|\nabla \log(p_t^V) \| \left[ \Delta \log(p_t^{2V}) - \frac{\Delta \log(\pi)}{1-e^{-2t}} + \|\nabla \log(p_t^{2V}) \|^2 \right]^{1/2}\nonumber \\
 &~~~+ \Delta \log(p_t^{2V}) - \frac{\Delta \log(\pi)}{1-e^{-2t}} + 6\left(\frac{\kappa}{\eta}\right)^2\| \nabla \log(p_t^V) \|^2 \bigg) \, .
\end{align}
Let us now handle the quadratic error of the estimator on the complementary $\bar{A}$.
We have, using Young's inequality $||a-b||^2\le 2(||a||^2+||b||^2)$, 
$$
\mathbb{E}\left[\left \| \frac{\hat{N}}{\hat{D}} - \frac{N}{D} \right \|^2 \mathds{1}_{\bar{A}} \right] \leq 2 \left \| \frac{N}{D} \right \|^2 \mathbb{P}(\bar{A}) + 2 \mathbb{E}\left[ \mathds{1}_{\bar{A}} \frac{\max_{i} \| y_i \|^2}{(1-e^{-2t})^2} \right] \, .
$$
Now, recall that $X_i = ||y_i||^2(1-e^{-2t})^{-1}$ are $n$ independent variables such that for all $i$, $X_i \sim \chi^2(d)$. 
Using Hölder inequality for some $p\geq 1$, the second term  can be upper-bounded as
\begin{align*}
\mathbb{E}\left[ \mathds{1}_{\bar{A}} \frac{\max_{i} \| y_i \|^2}{(1-e^{-2t})^2} \right] & \leq 
\frac{1}{1-e^{-2t}} \mathbb{E}[\max_i X_i^{p}]^{1/p} \mathbb{P}(\bar{A})^{1-1/p} \\
 &\leq  \frac{1}{1-e^{-2t}} (n\mathbb{E}[X_1^p])^{1/p} \mathbb{P}(\bar{A})^{1-1/p} \leq \frac{1}{1-e^{-2t}}  n^{1/p}(d + 2 p) \mathbb{P}(\bar{A})^{1-1/p} \, ,
\end{align*}
where we used in the penultimate inequality that the max is smaller than the sum, and in the last one that $\mathbb{E}[X_1^p] = \prod_{i=0}^{p-1} (d + 2i)$ when $X_1 \sim \chi^2(d)$ combined with the fact that the geometric mean is lower than the arithmetic mean. \\

We now upper bound the probability of the event $\bar{A} = (\hat{D} < \eta D)\cup(\|\hat{N}\| > \kappa \| N\| )$. By Chebyshev's inequality, using $\eta<1$, it holds that 
\begin{align*}
\mathbb{P}( \hat{D} < \eta D) & \leq \frac{\mathbb{E}[(\hat{D} - D)^2]}{ D^2(\eta-1)^2} \leq \frac{p_t^{2V} Z_{2V} e^{td}}{n(p_t^V Z_V)^2(\eta-1)^2} \coloneqq \frac{U}{n(\eta-1)^2} \, .
\end{align*}
Similarly, recalling that $||N||=D ||\nabla \log(p_t^V)||$, we have 
\begin{align*}
\mathbb{P}(\|\hat{N}\| > \kappa \| N\| ) & \leq  \frac{\mathbb{E}[||\hat{N} - N||^2]}{ ||N||^2(\kappa-1)^2}  \\
& \leq \frac{U}{n \| \nabla \log(p_t^V) \|^2 (\kappa-1)^2} \left(  \Delta \log(p_t^{2V}) - \frac{\Delta \log(\pi)}{1-e^{-2t}} +  \|\nabla \log(p_t^V ) \|^2 \right) \, .
\end{align*}
We now make a disjunction of cases: if $\|\nabla \log(p_t^V)\| \geq 1$, we pick $\eta = 1/2$ and $\kappa = 3/2$ so we recover 
\begin{align*}
    \mathbb{E}\left[\left \| \frac{\hat{N}}{\hat{D}} - \frac{N}{D} \right \|^2 \mathds{1}_{\bar{A}} \right] & \leq \frac{8U}{n}\left[\Delta \log(p_t^{2V}) - \frac{\Delta \log(\pi)}{1-e^{-2t}} +  2\|\nabla \log(p_t^V ) \|^2\right] \\
    & ~~ + \frac{n^{1/p}(d + 2 p) }{1-e^{-2t}}  \left[\frac{8U}{n}\left(\Delta \log(p_t^{2V}) - \frac{\Delta \log(\pi)}{1-e^{-2t}} + \| \nabla \log(p_t^V)\|^2 +  1 \right)\right]^{1-1/p} \\
    & \leq \frac{8U}{n}\left[\Delta \log(p_t^{2V}) - \frac{\Delta \log(\pi)}{1-e^{-2t}} +  2\|\nabla \log(p_t^V ) \|^2\right] \\
    & ~~ + \frac{n^{1/p}(d + 2 p) }{1-e^{-2t}}  \left[\frac{8U}{n}\left(\Delta \log(p_t^{2V}) - \frac{\Delta \log(\pi)}{1-e^{-2t}} + \| \nabla \log(p_t^V)\|^2 +  1 \right) + 1\right]\, .
\end{align*}
We thus pick $p = \log(n)$ to get 
\begin{equation*}
\begin{split}
    \mathbb{E}\left[\left \| \frac{\hat{N}}{\hat{D}} - \frac{N}{D} \right \|^2 \mathds{1}_{\bar{A}} \right] \leq \frac{16e^{2}(d + 2 \log(n))}{n(1-e^{-2t})}\left[U\left(\Delta \log(p_t^{2V}) - \frac{\Delta \log(\pi)}{1-e^{-2t}} +  \| \nabla \log(p_t^V)\|^2 + 1 \right) +1\right] \, .
\end{split}    
\end{equation*}
Combining this with the bound \eqref{eq:bound_quadratic_error_on_A} eventually yields
$$
\delta^2 \leq \frac{32e^{2}(d + 2 \log(n))}{n(1-e^{-2t})}\left[U\left(\Delta \log(p_t^{2V}) - \frac{\Delta \log(\pi)}{1-e^{-2t}} +  \| \nabla \log(p_t^V)\|^2 +  \|\nabla\log(p_t^{2V})\|^2+1\right) +1\right] \, ,
$$
where we used the inequality
\begin{align*}
&~~~~~\| \nabla \log(p_t^V)\| \left[\Delta \log(p_t^{2V}) - \frac{\Delta \log(\pi)}{1-e^{-2t}} + \| \nabla \log(p_t^{2V})\|^2  \right]^{1/2} \\
& \leq \frac{1}{2}\left(\| \nabla \log(p_t^V)\|^2 + \Delta \log(p_t^{2V}) - \frac{\Delta \log(\pi)}{1-e^{-2t}} + \| \nabla \log(p_t^{2V})\|^2\right) \, .
\end{align*}
In the case where $\| \nabla \log(p_t) \|^2 < 1$, we instead pick $\eta = 1/2$ and $\kappa = 1 + \frac{1}{2\| \nabla \log(p_t^V) \|}$. We obtain that
$$
\mathbb{P}(\bar{A}) \leq \frac{4U}{n}\left(\Delta \log(p_t^{2V}) - \frac{\Delta \log(\pi)}{1-e^{-2t}} + \|\nabla \log(p_t^V) \|^2 + 1\right)
$$ 
and as previously, for $p=\log(n)$ we get
\begin{equation*}
\begin{split}
    \mathbb{E}\left[\left \| \frac{\hat{N}}{\hat{D}} - \frac{N}{D} \right \|^2 \mathds{1}_{\bar{A}} \right] \leq \frac{16e^{2}(d + 2 \log(n))}{n(1-e^{-2t})}\left[U\left(\Delta \log(p_t^{2V}) - \frac{\Delta \log(\pi)}{1-e^{-2t}} +  \| \nabla \log(p_t^V)\|^2 + 1\right) +1\right] \, .
\end{split}    
\end{equation*}
For this choice of $\kappa, \eta$, the bound on $A$ becomes 
\begin{align*}
\mathbb{E}\left [ \left \| \frac{\hat{N}}{\hat{D}} - \frac{N}{D} \right \|^2 \mathds{1}_{A}   \right]  & 
 \leq \frac{4U}{n} \bigg( 6\left[ \Delta \log(p_t^{2V}) - \frac{\Delta \log(\pi)}{1-e^{-2t}} + \|\nabla \log(p_t^{2V}) \|^2 \right]^{1/2} + \Delta \log(p_t^{2V}) - \frac{\Delta \log(\pi)}{1-e^{-2t}} + 6 \bigg) \\
 & \leq \frac{4U}{n} \left( 7 (\Delta \log(p_t^{2V}) - \frac{\Delta \log(\pi)}{1-e^{-2t}}) + 6(\|\nabla \log(p_t^{2V}) \|^2 +1) \right) \, .
\end{align*}
Thus, we obtain as previously
$$
\delta^2 \leq \frac{32e^{2}(d + 2 \log(n))}{n(1-e^{-2t})}\left[U\left(\Delta \log(p_t^{2V}) - \frac{\Delta \log(\pi)}{1-e^{-2t}} +  \| \nabla \log(p_t^V)\|^2 +  \|\nabla \log(p_t^{2V})\|^2+1\right) +1\right] \, ,
$$
\end{proof}

\section{Proof of Proposition ~\ref{prop:regularity_forward} and Lemma~\ref{lemma:up_fisher_m2_density}}

Before starting the proofs, we recall the following usefull lemma that bounds the second order moment of dissipative distributions.

\begin{lemma}
    \label{lemma:bound_second_moment_dissipative}
    Let $V$ be such that $\langle \nabla V(x), x \rangle \geq a \| x \|^2 - b$. Then for $\mu\propto e^{-V}$, denoting $m_2$ the second moment of $\mu$, it holds that $m_2 \leq \frac{b+d}{a}$.
\end{lemma}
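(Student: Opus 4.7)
The plan is to use integration by parts to connect the second moment $m_2 = \int \|x\|^2 \mu(x)\,\ud x$ to the quantity $\langle \nabla V(x), x\rangle$ that is controlled by the dissipativity assumption. The key identity is the divergence formula: for the vector field $x \mapsto x$,
\begin{equation*}
\nabla \cdot (x\, \mu(x)) = d\, \mu(x) + x \cdot \nabla \mu(x) = d\, \mu(x) - \langle \nabla V(x), x\rangle \mu(x) \, ,
\end{equation*}
where the last equality uses $\nabla \mu = -\nabla V \cdot \mu$. Integrating over $\mathbb{R}^d$ and assuming the boundary term vanishes at infinity, one obtains
\begin{equation*}
d = \int \langle \nabla V(x), x\rangle \mu(x)\,\ud x \, .
\end{equation*}

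Applying the dissipativity assumption $\langle \nabla V(x), x\rangle \geq a\|x\|^2 - b$ directly inside the integral then gives
\begin{equation*}
d \geq a m_2 - b \, ,
\end{equation*}
from which the desired bound $m_2 \leq (b+d)/a$ follows by rearranging.

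The only delicate point is to justify that the boundary term in the integration by parts vanishes, i.e.\ that $\int \nabla \cdot (x\mu(x))\,\ud x = 0$. This requires enough decay of $\mu$ at infinity, which is guaranteed by dissipativity: indeed, $\langle \nabla V(x), x\rangle \geq a\|x\|^2 - b$ implies that along any ray from the origin, $V$ grows at least quadratically for large $\|x\|$ (by integrating $t \mapsto \frac{\ud}{\ud t} V(tx/\|x\|)$), so $e^{-V}$ and $\|x\| e^{-V}$ decay super-polynomially. A standard truncation argument (integrating over a ball of radius $R$ and letting $R\to \infty$) then legitimizes the integration by parts. This is the only mildly technical step; the algebraic core of the argument is a single line.
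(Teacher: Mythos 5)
Your proof is correct and takes essentially the same route as the paper, which applies the identity $\int L(f)\,\ud\mu = 0$ for the generator $L(f) = \Delta f - \langle \nabla V, \nabla f\rangle$ to $f(x) = \|x\|^2$ — exactly your divergence identity $d = \int \langle \nabla V(x), x\rangle\,\ud\mu(x)$ up to a factor of $2$. Your bookkeeping cleanly yields the stated bound $(b+d)/a$ (the paper's own write-up slips a factor and ends with $(b+2d)/a$), and your truncation remark supplies the decay justification for the integration by parts that the paper leaves implicit.
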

\begin{proof}
    Define the Laplacian of $\mu$ as $L(f) = \Delta f - \langle \nabla V, \nabla f\rangle$ for $f$ sufficiently smooth. By integration by parts, it holds that
    $$
    \int L(f)(x) \ud \mu(x) = 0 \, .
    $$
    In particular, for $f(x) = \| x \|^2$, we recover $\int \langle \nabla V(x), \nabla f(x) \rangle \ud \mu(x) = 2d$. For $V$ dissipative, it implies
    $$
    a \int \| x \|^2 \ud \mu(x) \leq 2d + b \, ,
    $$
    or equivalently $m_2 \leq (b+2d)/a$.
\end{proof}

\subsection{Proof of Lemma~\ref{lemma:up_fisher_m2_density}}
\label{appendix:lemma_up_fisher_m2_density}

\begin{proof}
    The first inequality was shown in the Lemma above. For the Fisher information, it holds that
    \begin{align*}
        \mathcal{I}(\mu, \pi) & = \int \| \nabla V(x) - x \|^2 \ud \mu(x) \, , \\
        & \leq 2 \int \langle \nabla V(x), \nabla V(x) e^{-V(x)} \rangle /Z_V \ud x + 2 m_2 \, , \\
        & = 2 \int \Delta V \ud \mu(x) + 2 m_2 \, , \\
        & \leq 2\beta d + 2 m_2 \, .
    \end{align*}
    There remains to lower-bound $\mu(0)$. Denote $x^*$ a global minimizer of $V$. By dissipativity, it must hold that $\| x^* \|^2 \leq b/a$. Now observe that 
    \begin{align*}
        V(x^*) - V(0) & = \int_{0}^1 (x^*)^\top\nabla^2 V(tx^*) x^* \ud t \, 
         \leq \beta \| x^*\|^2 \, .
    \end{align*}
    Combined with the fact that for all $x \in \mathbb{R}^d$, it holds that $V(x) - V(x^*) \geq 0$, be recover that $V(x) - V(0) \geq - \beta \| x^*\|^2 \geq -\beta b/a$. Furthermore, for $0 < \delta < 1$, we have 
    \begin{align*}
        V(x) - V(0) &= \int_{0}^1 \langle \nabla V(tx), x\rangle \ud t \, , \\
        &= \int_\delta^1 \langle \nabla V(tx), x\rangle \ud t + V(\delta x) - V(0) \, , \\ 
        & \geq \int_\delta^1 \langle \nabla V(tx), tx\rangle/t \ud t - \beta b /a \, ,\\
        & \geq \int_\delta^1 a t\| x \|^2 - b/t \ud t - \beta b /a \, , \\
        & = a \| x \|^2/2 - a \delta^2 \|x\|^2/2 + b \log(\delta) - \beta b/a \, .
    \end{align*}
In particular, for $\delta  = 1/\sqrt{2}$, we obtain
$$
V(x) - V(0) \geq a \| x\|^2/4 - b(\beta/a + \log(2)/2) \, .
$$
Hence, we have 
\begin{align*}
    \mu(0) & = \frac{e^{-V(0)}}{\int e^{-V(x)} \ud x} \, , \\
    & = \frac{1}{\int e^{-(V(x) - V(0))} \ud x} \, , \\
    & \leq \frac{1}{\int e^{-a \|x \|^2/4 + b(\beta/a + \log(2)/2)} \ud x} \\
    & = e^{-b(\beta/a + \log(2)/2)} (a/2)^{d/2} (2\pi)^{-d/2} \, .
\end{align*}
Since $\beta/a \geq 1$ and $\log(2)/2\leq 1$, we recover that $\log(\mu(0)^{-2/d}) \leq 4\beta b/ad + 2\pi + \log(2/a)$.

\end{proof}

\subsection{Proof of Proposition ~\ref{prop:regularity_forward}}
\label{appendix:prop_regularity_forward}

Recall that the intermediate scores read 
$$
\nabla \log(p_t^V)(z) = \frac{z - e^{-t} \mathbb{E}_{q_{t,z}}[Y]}{1-e^{-2t}} \, ,
$$
with $q_{t,z}(x) \propto e^{-V(x)} e^{-\frac{\| e^{-t} x - z \|^2}{2(1-e^{-2t})}}$. In particular, if $V$ is dissipative with constant $a, b$ then it holds that
\begin{align*}
    \langle -\nabla \log(q_{t,z})(x), x \rangle &= \langle \nabla V(x)  + \frac{e^{-t}(e^{-t}x -z)}{1-e^{-2t}}, x \rangle \, , \\
    & \geq a \| x \|^2 - b +  \frac{e^{-t}}{1-e^{-2t}}(e^{-t} \|x\|^2 - \langle z, x \rangle).
\end{align*}
Now recall that $e^{-t} \|x\|^2 - \langle z, x \rangle \geq -e^{t} \| z \|^2/4$ which yields $$ \langle -\nabla \log(q_{t,z})(x), x \rangle \geq a \| x\|^2 - (b+\|z \|^2/(4(1-e^{-2t})))\, .$$ 
Hence, using Lemma \ref{lemma:bound_second_moment_dissipative}, it holds that
$$
\mathbb{E}_{q_{t,z}}[\| Y \|^2] \leq \frac{2b+ \|z \|^2/(2(1-e^{-2t})) + d}{a} \, .
$$
Hence we recover that 
\begin{align*}
    \| \nabla \log(p_t^V)(z) \|^2  & \leq  \frac{2\|z \|^2}{(1-e^{-2t})^2} + \frac{2e^{-2t} \mathbb{E}_{q_{t,z}}[\| Y \|^2]}{(1-e^{-2t})^2} \, ,\\
    & \leq \frac{2\|z \|^2}{(1-e^{-2t})^2}  + \frac{e^{-2t}}{(1-e^{-2t})^2}\biggl(\frac{2(2b+d)}{a} + \frac{\| z \|^2}{(1-e^{-2t})a}\biggr) \, , \\
    & = \frac{\|z \|^2}{(1-e^{-2t})^2}\biggl( 2 + \frac{e^{-2t}}{a(1-e^{-2t})}\biggr) + \frac{2e^{-2t}(2b +d)}{a(1-e^{-2t})^2} \, .
\end{align*}
Similarly, recall that 
$$
\nabla^2 \log(p_t^V)(z) - \nabla^2 \log(\pi)(z) = \frac{e^{-2t}}{1-e^{-2t}} \left(\frac{\text{Cov}_{q_{t, z}}(X)}{1-e^{-2t}} - I_d \right)  \, ,
$$
from which we can deduce $\Delta \log(p_t^{2V}) \leq \frac{e^{-2t}}{(1-e^{-2t})^2} \mathbb{E}_{q_{t,z}}[\| Y \|^2]$ which yields again
$$
\Delta \log(p_t^{2V}) \leq \frac{e^{-2t} }{(1-e^{-2t})^2} \biggl( \frac{\|z \|^2}{2a(1-e^{-2t})} + \frac{2b + d}{a}\biggr) \, .
$$

\section{Proof of Lemma~\ref{lemma:upper_bound_ratio}}
\label{appendix:lemma_ub_ratio}

Before starting the proof, we recall the result of \citet{Mikulincer2023}.

\begin{proposition}
    \label{prop:upper_bound_neg_hessian}
    Let $\mu$ be a $\beta$-semi-log-convex probability distribution. Then, denoting $p_t^V$ the distribution of the forward process in Eq. \ref{eq:def_ou} and $\pi$ the density of the standard Gaussian, it holds for all $z \in \mathbb{R}^d$ that
    \begin{equation}
    \label{eq:upper_bound_hessian}
 \nabla^2 \log(\pi)(z) - \nabla^2 \log(p_t^V)(z) \preceq \frac{(\beta-1)e^{-2t}}{(1-e^{-2t})(\beta- 1) + 1} I_d \, .
    \end{equation}
\end{proposition}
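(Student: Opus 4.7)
The strategy is to recognize the claim as a lower bound on the posterior covariance of $X_0$ given $X_t$, and to prove that bound via a short integration-by-parts argument (a reverse Cram\'er--Rao inequality) that uses only the upper Hessian bound on $V$. Set $\lambda = e^{-2t}$, so that $X_t = \sqrt{\lambda}X_0 + \sqrt{1-\lambda}Z$ with $Z\sim\mathcal{N}(0,I_d)$, and the posterior of $X_0\mid X_t = z$ is $q_{t,z}(x)\propto \exp(-U_{t,z}(x))$ with $U_{t,z}(x):=V(x)+\frac{\|z-\sqrt{\lambda}x\|^2}{2(1-\lambda)}$. Rewriting the expectations of \Cref{lemma:density_log_hess_forward} (taken over $Y_t\sim\mathcal{N}(0,(1-\lambda)I_d)$) as conditional expectations given $X_t=z$ and using that $\nabla^2\log\pi(z)=-I_d$, one obtains the standard representation
\begin{equation*}
\nabla^2 \log p_t^V(z) - \nabla^2 \log \pi(z) \;=\; \frac{\lambda}{(1-\lambda)^2}\Cov_{q_{t,z}}(X) - \frac{\lambda}{1-\lambda}I_d .
\end{equation*}
Plugging this in and simplifying, the target inequality is seen to be equivalent to the posterior covariance lower bound $\Cov_{q_{t,z}}(X) \succeq \frac{1-\lambda}{(1-\lambda)(\beta-1)+1}\,I_d$.

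Differentiating $U_{t,z}$ twice gives $\nabla^2 U_{t,z}(x) = \nabla^2 V(x) + \frac{\lambda}{1-\lambda}I_d \preceq \left(\beta+\frac{\lambda}{1-\lambda}\right)I_d = \frac{(1-\lambda)(\beta-1)+1}{1-\lambda}I_d =: L_t I_d$, where \emph{only} the semi-log-convexity $\nabla^2V\preceq\beta I_d$ is invoked (no lower bound on $\nabla^2 V$ is required). The task therefore reduces to the following generic fact: \emph{if $q\propto e^{-U}$ on $\R^d$ satisfies $\nabla^2 U \preceq L\,I_d$, then $\Cov_q(X)\succeq L^{-1}I_d$}.

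For the generic fact I would use two integrations by parts. Pick an arbitrary unit vector $v$, and set $f(x)=\langle v, x-\E_q[X]\rangle$, which is centered with $\nabla f = v$. A first integration by parts yields the Stein-type identity $\E_q[f(X)\,\nabla U(X)] = v$, hence $\E_q[f(X)\langle v,\nabla U(X)\rangle]=1$. A second integration by parts (using $\langle v,\nabla U\rangle\,q = -\partial_v q$) gives $\E_q[\langle v,\nabla U(X)\rangle^2] = \E_q[v^\top\nabla^2 U(X)v]\le L$. Cauchy--Schwarz applied to the first identity then forces $\Var_q(\langle v, X\rangle)\ge L^{-1}$; taking the infimum over unit $v$ yields the matrix inequality $\Cov_q(X)\succeq L^{-1}I_d$, and specializing to $L = L_t$ closes the proof.

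The one subtle point is justifying both integrations by parts: one must check that the boundary terms vanish and that $\|\nabla U_{t,z}\|\in L^2(q_{t,z})$. Since the upper Hessian bound on $V$ does not preclude slow decay of $e^{-V}$ along low-curvature directions, one has to exploit the strictly convex quadratic contribution $\tfrac{\lambda}{2(1-\lambda)}\|x\|^2$ coming from the Gaussian likelihood, combined with the a priori integrability of $\mu$, to secure enough tail decay on $q_{t,z}$ for any $t>0$. The boundary case $t=0$ (i.e.\ $\lambda=1$) is trivial since the inequality then reduces exactly to the assumption $\nabla^2V\preceq\beta I_d$.
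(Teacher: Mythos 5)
Your proposal is correct, but it proves the statement by a genuinely different route than the paper. The paper's proof is essentially a reduction-plus-citation: it computes the Ornstein--Uhlenbeck semigroup $Q_t$ applied to $f=\frac{\ud\mu}{\ud\pi}$, observes that $\nabla\log Q_t(f)=\nabla\log p_t^V-\nabla\log\pi$, and then invokes Lemma~5 of \citet{Mikulincer2023} verbatim to get the Hessian bound. You instead give a self-contained argument: using the second-order Tweedie identity of \Cref{lemma:density_log_hess_forward} (in the covariance form $\nabla^2\log p_t^V(z)-\nabla^2\log\pi(z)=\frac{\lambda}{(1-\lambda)^2}\Cov_{q_{t,z}}(X)-\frac{\lambda}{1-\lambda}I_d$, which is exactly the representation the paper itself uses later in the proof of \Cref{prop:regularity_forward}), you reduce the claim to the posterior covariance bound $\Cov_{q_{t,z}}(X)\succeq L_t^{-1}I_d$ with $L_t=\beta+\frac{\lambda}{1-\lambda}$, and you prove that bound by a directional Cram\'er--Rao argument (Stein identity plus $\E_q[\langle v,\nabla U\rangle^2]=\E_q[v^\top\nabla^2U\,v]\le L_t$ and Cauchy--Schwarz); your algebraic equivalences check out, including positivity of the denominator $(1-\lambda)(\beta-1)+1=(1-\lambda)\beta+\lambda$. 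What your route buys is transparency and self-containedness: it isolates precisely where semi-log-convexity enters (only through $\nabla^2 U_{t,z}\preceq L_t I_d$) and removes the dependence on an external lemma; what the paper's route buys is brevity and the avoidance of regularity bookkeeping, since the analytic content is delegated to the cited result. The one point you should not leave as a sketch is the justification of the two integrations by parts: for $t>0$ the Gaussian factor in $q_{t,z}$ together with $e^{-V}\in L^1$ and $V\in C^2$ (Assumption~\ref{assumption:semi_convexity}) does give enough decay, and $\E_{q_{t,z}}\|\nabla U_{t,z}\|^2<\infty$ can be secured from the pointwise upper bound $\Delta U_{t,z}\le dL_t$ via a standard cutoff/monotone-convergence argument, but this should be written out (or a reference given) for the proof to be complete at the level of rigor of the appendix.
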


\begin{proof}
Define the Ornstein-Uhlenbeck semi-group $Q_t$ as
\begin{align*}
Q_t(g)(z) &= \int g(ze^{-t} + \sqrt{1-e^{-2t}} y) e^{-\frac{\| y \|^2}{2}} (2\pi)^{-d/2} \ud y \\
	      & = \frac{1}{\sqrt{1-e^{-2t}}} \int g(u)  e^{-\frac{\| u - e^{-t} z \|^2}{2(1-e^{-2t})}} (2\pi)^{-d/2} \ud u \, ,
\end{align*}
for all function $g$ integrable w.r.t. the standard Gaussian measure. Taking $g$ as $f=\frac{\ud \mu}{\ud \pi}$ with $\pi$ the standard Gaussian, we obtain that
\begin{align*}
Q_t(f)(z) &=  \frac{1}{Z_V\sqrt{1-e^{-2t}}} \int e^{-V(u)} e^{\frac{\| u \|^2}{2}} e^{-\frac{\| u - e^{-t} z \|^2}{2(1-e^{-2t})}} \ud u \\
& =  \frac{1}{Z_V\sqrt{1-e^{-2t}}} \int e^{-V(u)} e^{\frac{\| u \|^2(1-e^{-2t}) - \|u \|^2 + \langle z, u e^{-t} \rangle - e^{-2t}\| z \|^2}{2(1-e^{-2t})}}  \ud u \\
& =  \frac{1}{Z_V\sqrt{1-e^{-2t}}} \int e^{-V(u)} e^{\frac{-\|ue^{-t} -z \|^2 + \| z \|^2 - e^{-2t}\| z \|^2}{2(1-e^{-2t})}}  \ud u \\
& = \frac{e^{\frac{\|z\|^2}{2}}}{Z_V\sqrt{1-e^{-2t}}}  \int e^{-V(u)} e^{-\frac{\|ue^{-t} -z \|^2}{2(1-e^{-2t})}}  \ud u \, .
\end{align*}
In particular, we remark that $\nabla \log(Q_t(f)) = \nabla \log(p_t^V) - \nabla \log(\pi)$. Now, the quantity $\nabla \log(Q_t(f))$ was studied in \citet{Mikulincer2023} and they prove in Lemma 5 that for all $z$
$$
\nabla^2 \log(Q_t(f))(z) \succeq \frac{(1-\beta)e^{-2t}}{(1-e^{-2t})(\beta-1) +1} I_d \, ,
$$
which is equivalent to 
$$
\nabla^2 \log(\pi)(z) - \nabla^2 \log(p_t^V)(z) \preceq \frac{(\beta-1)e^{-2t}}{(1-e^{-2t})(\beta - 1) + 1} I_d \, .
$$
\end{proof}

Before proving Lemma~\ref{lemma:upper_bound_ratio}, we introduce this preliminary result on the evolution of $\Phi_t$.
\begin{lemma}[Evolution of the ratio]
\label{lem:evolution_ratio}
    Let $t>0$, it holds that $$
        \partial_t \Phi_t = \Phi_t\left(\Delta \log(\Phi_t) - \langle \nabla \log(\Phi_t), \nabla \log(\pi) \rangle + \| \nabla \log(\Phi_t) \|^2 + 2 \langle \nabla \log(p_t^V), \nabla \log(\Phi_t) \rangle \right) \, .
   $$
\end{lemma}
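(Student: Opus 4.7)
The plan is to directly compute $\partial_t \Phi_t$ by combining the Fokker--Planck equations for $p_t^V$ and $p_t^{2V}$. Since the forward process~\eqref{eq:def_ou} is the Langevin diffusion targeting the standard Gaussian $\pi$, both densities satisfy the same PDE
\begin{equation*}
\partial_t p_t = \Delta p_t + \nabla \cdot(x p_t),
\end{equation*}
which can be rewritten $\partial_t p_t = \nabla\cdot(p_t \nabla \log(p_t/\pi))$ using $\nabla \log \pi(x) = -x$. I would record this PDE once for $p_t^V$ and once for $p_t^{2V}$.

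Next, I would apply the quotient rule
\begin{equation*}
\partial_t \Phi_t = \frac{\partial_t p_t^{2V}}{p_t^V} - \Phi_t \frac{\partial_t p_t^V}{p_t^V},
\end{equation*}
and plug in the Fokker--Planck expressions. Using the two elementary identities $\Delta p/p = \Delta \log p + \|\nabla \log p\|^2$ and $\nabla\cdot(xp)/p = d + x\cdot \nabla \log p$ (both valid for any smooth positive $p$), together with $\Delta p_t^{2V}/p_t^V = \Phi_t\, \Delta p_t^{2V}/p_t^{2V}$, one gets
\begin{equation*}
\partial_t \Phi_t = \Phi_t\Bigl[\,\Delta \log p_t^{2V} - \Delta \log p_t^V + \|\nabla \log p_t^{2V}\|^2 - \|\nabla \log p_t^V\|^2 + x\cdot(\nabla\log p_t^{2V} - \nabla\log p_t^V)\,\Bigr].
\end{equation*}

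Finally, I would rewrite each bracketed piece in terms of $\Phi_t$ using $\log \Phi_t = \log p_t^{2V} - \log p_t^V$. This immediately gives $\Delta\log p_t^{2V} - \Delta\log p_t^V = \Delta \log \Phi_t$ and $\nabla\log p_t^{2V} - \nabla\log p_t^V = \nabla \log \Phi_t$. For the squared-gradient difference, factor as a difference of squares:
\begin{equation*}
\|\nabla \log p_t^{2V}\|^2 - \|\nabla \log p_t^V\|^2 = \langle \nabla \log \Phi_t,\, \nabla \log p_t^{2V} + \nabla \log p_t^V\rangle = \|\nabla \log \Phi_t\|^2 + 2\langle \nabla \log \Phi_t, \nabla \log p_t^V\rangle,
\end{equation*}
where the second equality uses $\nabla \log p_t^{2V} = \nabla \log p_t^V + \nabla \log \Phi_t$. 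Substituting $x = -\nabla \log \pi$ in the transport term yields exactly the claimed formula. There is no real obstacle here: the argument is a routine but attentive book-keeping computation, and the only point that requires mild care is consistently eliminating $\nabla \log p_t^{2V}$ in favor of $\nabla \log p_t^V$ and $\nabla \log \Phi_t$ to match the stated form.
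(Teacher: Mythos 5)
Your proof is correct and follows essentially the same route as the paper: both start from the Fokker--Planck equation of the OU process and reduce the resulting bracket via $\log\Phi_t = \log p_t^{2V} - \log p_t^V$ and the same difference-of-squares identity $\|\nabla\log p_t^{2V}\|^2 - \|\nabla\log p_t^V\|^2 = \|\nabla\log\Phi_t\|^2 + 2\langle\nabla\log p_t^V,\nabla\log\Phi_t\rangle$. The only cosmetic difference is that you work at the density level with the quotient rule, whereas the paper subtracts the two log-density evolution equations directly (so the $\Delta\log\pi$ terms cancel there instead of the constants $d$ in your version).
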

\begin{proof}
    Recall that the log-density $\log(p_t^V)$ evolves as
    $$
    \partial_t \log(p_t^V) = \Delta \log(p_t^V) + \| \nabla \log(p_t^V) \|^2 - \langle \nabla \log(p_t^V), \nabla \log(\pi) \rangle - \Delta \log(\pi) \, .
    $$
    Hence, we deduce that $\log(\Phi_t)$ evolves as
    $$
    \partial_t \log(\Phi_t) = \Delta \log(\Phi_t) - \langle \nabla \log(\Phi_t), \nabla \log(\pi) \rangle + \| \nabla \log(p_t^{2V}) \|^2 - \| \nabla \log(p_t^V) \|^2 \, .
    $$
    The difference of quadratic terms can be expressed as 
    \begin{align*}
        \| \nabla \log(p_t^{2V}) \|^2 - \| \nabla \log(p_t^V) \|^2 & = \| \nabla \log(\Phi_t) + \nabla \log(p_t^V) \|^2 - \| \nabla \log(p_t^V) \|^2\\
        & = \| \nabla \log(\Phi_t) \|^2 + 2 \langle \nabla \log(p_t^V), \nabla \log(\Phi_t) \rangle \, ,
    \end{align*}
    which allows to recover
    $$
        \partial_t \log(\Phi_t) = \Delta \log(\Phi_t) - \langle \nabla \log(\Phi_t), \nabla \log(\pi) \rangle + \| \nabla \log(\Phi_t) \|^2 + 2 \langle \nabla \log(p_t^V), \nabla \log(\Phi_t) \rangle \, .
    $$
\end{proof}

We now provide the proof of Lemma~\ref{lemma:upper_bound_ratio}.
\begin{proof}
    Until the rest of the proof, the dependence on $z$ of the integrand shall be implied unless expressed explicitly. We start by differentiating $m_0(\Phi_t)$ with respect to $t$:
    \begin{align*}
        \partial_t m_0(\Phi_t) & = \int \partial_t \Phi_t \ud z \\
        & = \int \Phi_t \left( \Delta \log(\Phi_t) - \langle \nabla \log(\Phi_t), \nabla \log(\pi) \rangle + \| \nabla \log(\Phi_t) \|^2 + 2 \langle \nabla \log(p_t^V), \nabla \log(\Phi_t) \rangle \right) \ud z \, ,
    \end{align*}
    where we used Lemma~\ref{lem:evolution_ratio} to compute $\partial_t \Phi_t$. Using integration by parts, the first term reads
    \begin{align*}
        \int \Phi_t  \Delta \log(\Phi_t) \ud z & = - \int \langle \nabla \Phi_t, \nabla \log(\Phi_t) \rangle \ud z  = - \int \Phi_t \| \nabla \log(\Phi_t)\|^2 \ud z,
    \end{align*}
    hence the first and the third terms cancel and we recover
    $$
    \partial_t m_0(\Phi_t) = 2 \int \langle \nabla \log(p_t^V), \nabla \Phi_t \rangle \ud z  - \int \langle \nabla \log(\pi), \nabla \Phi_t \rangle \ud z \, . 
    $$
    Using integration by parts again, we recover
    \begin{align*}
        \partial_t m_0(\Phi_t) & = \int \Delta \log(\pi)\Phi_t \ud z - 2 \int \Delta \log(p_t^V)\Phi_t \ud z \\
        & = 2 \int (\Delta \log(\pi) - \Delta \log(p_t^V))\Phi_t \ud z - \int \Delta \log(\pi) \Phi_t \ud z \\
        & = dm_0(\Phi_t) + 2 \int (\Delta \log(\pi) - \Delta \log(p_t^V))\Phi_t \ud z \, .
    \end{align*}
    Using Proposition \ref{prop:upper_bound_neg_hessian}, since $\mu$ is $\beta$-semi-log-convex, the term $ (\Delta \log(\pi) - \Delta \log(p_t^V))$ can be upper-bounded uniformly by $ \frac{d(\beta-1)e^{-2t}}{(1-e^{-2t})(\beta - 1) + 1}$ so we eventually get
    $$
    \partial_t m_0(\Phi_t) \leq d m_0(\Phi_t)\left(1 + \frac{2(\beta-1)e^{-2t}}{(1-e^{-2t})(\beta - 1) + 1} \right)\, .
    $$
    Hence we can use Gronwall's lemma which yields
    $$
   m_0(\Phi_t) \leq m_0(\Phi_0) \exp \left( d \int_0^t\left(1 + \frac{2(\beta-1)e^{-2s}}{(1-e^{-2s})(\beta - 1) + 1}\right) \ud s \right) \, .
    $$
    Denoting by $Z_V$ (resp. $Z_{2V}$) the normalizing constant of $e^{-V}$ (resp. $e^{-2V}$), the term $m_0(\Phi_0)$ reads
    \begin{align*}
        m_0(\Phi_0) & = \int \frac{p_0^{2V}(z)}{p_0^{V}(z)} \ud z 
        = \frac{Z_V}{Z_{2V}} \int \frac{e^{-2V(z)}}{e^{-V(z)}} \ud z 
        = \frac{(Z_V)^2}{Z_{2V}} \, .
    \end{align*}
    Finally, let us compute the integral above. Making the change of variable $u = e^{-2s} (\beta - 1)$ we have $\ud u = -2(\beta-1)e^{-2s} \ud s $ which yields
     \begin{align*}
          \int_0^t\frac{2(\beta-1)e^{-2s}}{(1-e^{-2s})(\beta - 1) + 1} \ud s & = - \int_{\beta-1}^{(\beta-1)e^{-2t}} \frac{1}{\beta - u} \ud u \\
          & = \left[\log(\beta - u) \right]_{\beta-1}^{(\beta-1)e^{-2t}} \\
          & = \log(\beta - (\beta-1)e^{-2t}) \\
         & = \log(\beta(1-e^{-2t}) + e^{-2t}) \, .
    \end{align*}
    Hence we recover 
    \begin{equation*}
   m_0(\Phi_t) \leq \frac{e^{td}(Z_V)^2}{Z_{2V}} (\beta(1-e^{-2t}) + e^{-2t})^{d} \,.\qedhere
    \end{equation*}
\end{proof}

In order to recover a bound on the second moment of $\Phi_t$, we need several intermediate results. We first prove that the maximum of the ratio decreases through time.
\begin{lemma}[Decrease of the maximum of $\Phi$]
\label{lemma:decrease_max}
    The maximum of the ratio $\Phi_t$ decreases with $t$.
\end{lemma}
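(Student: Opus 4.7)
The plan is to apply the parabolic maximum principle directly to the pointwise evolution equation of $\log \Phi_t$ derived in Lemma~\ref{lem:evolution_ratio}. Dividing the identity of that lemma by the positive quantity $\Phi_t$ gives
\begin{equation*}
\partial_t \log \Phi_t \;=\; \Delta \log \Phi_t \;+\; \|\nabla \log \Phi_t\|^2 \;+\; \langle 2\nabla \log p_t^V - \nabla \log \pi,\, \nabla \log \Phi_t\rangle,
\end{equation*}
and the key observation is that every term on the right except $\Delta \log \Phi_t$ is first-order in $\nabla \log \Phi_t$ (or its squared norm), hence vanishes at any critical point of $\log \Phi_t$.

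First, I would fix $t > 0$, assume for the moment that $\Phi_t$ attains its maximum at some $z^\star = z^\star(t) \in \mathbb{R}^d$, and use the fact that at this global maximum $\nabla \log \Phi_t(z^\star) = 0$ and $\nabla^2 \log \Phi_t(z^\star) \preceq 0$, so in particular $\Delta \log \Phi_t(z^\star) \leq 0$. Plugging this into the displayed PDE kills all terms in $\nabla \log \Phi_t$ and leaves $\partial_t \log \Phi_t(z^\star) = \Delta \log \Phi_t(z^\star) \leq 0$.

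Second, I would promote this pointwise inequality at the argmax to the monotonicity of $M(t) := \max_z \log \Phi_t(z)$. This is a classical application of Danskin's envelope theorem: for almost every $t$, $M'(t)$ equals $\partial_t \log \Phi_t(z^\star(t))$ for any argmax $z^\star(t)$, hence $M'(t) \leq 0$ and $M$ is non-increasing. Equivalently, one can run a first-time-of-increase argument: if $M$ strictly increased on some interval $[t_1,t_2]$, then $\log \Phi_{t_2}(z^\star(t_2)) - \log \Phi_{t_1}(z^\star(t_2)) \geq M(t_2) - M(t_1) > 0$ would force $\partial_s \log \Phi_s(z^\star(s_0)) > 0$ at some intermediate $s_0$ where $z^\star(s_0)$ is itself a local maximum, contradicting the pointwise bound above.

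The main obstacle is to justify that the supremum of $\Phi_t$ is actually attained (so that a classical argmax exists). Under Assumption~\ref{assumption:dissipativity}, the potential $V$ grows at least quadratically at infinity, so $2V$ decays strictly faster than $V$ and a Laplace-type estimate on the OU convolutions defining $p_t^V$ and $p_t^{2V}$ shows $p_t^{2V}(z)/p_t^V(z) \to 0$ as $\|z\| \to \infty$; combined with continuity of $\Phi_t$, this places the supremum in a compact set. A cleaner-to-write alternative, if tail estimates turn out to be delicate, would be to truncate and smooth $\mu$, establish the monotonicity of $\max \Phi_t^{(n)}$ for the regularized problem, and pass to the limit.
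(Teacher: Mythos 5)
Your proposal is correct and follows essentially the same route as the paper: evaluate the evolution equation of $\log\Phi_t$ from Lemma~\ref{lem:evolution_ratio} at an argmax, where all gradient terms vanish and $\Delta\log\Phi_t\leq 0$, hence $\partial_t\log\Phi_t\leq 0$ there. The only difference is that you justify differentiating $t\mapsto\max_z\log\Phi_t(z)$ via Danskin's envelope theorem (and address attainment of the maximum), whereas the paper invokes the implicit function theorem for the argmax $z_t$ — your version is in fact the more careful handling of the same argument, since the paper's step tacitly requires a nondegenerate Hessian at the maximum.
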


\begin{proof}
Let $z_t$ be a point where $\Phi_t$ attains its maximum and denote $M_t = \log(\Phi_t)(z_t)$. By the implicit function theorem, $z_t$ is differentiable hence we can compute $\partial_t M_t$ as
\begin{align*}
\partial_t M_t &= \partial_t \log(\Phi_t)(z_t) + \langle \partial_t z_t, \nabla \log(\Phi_t)(z_t) \rangle \\
& = \Delta \log(\Phi_t)(z_t) \, .
\end{align*}
Since $z_t$ is a maximum, we have in particular $\Delta \log \Phi_t(z_t) \leq 0$ which implies that $M_t$ decreases.
\end{proof}
We then derive an upper-bound on the maximum of a log-smooth distribution.
\begin{proposition}[Upper-bound of the maximum]
\label{prop:upper_bound_maximum}
If $\mu$ is $\beta$-semi-log-convex then it holds that $\frac{\ud \mu}{\ud z} \leq \left(\frac{\beta}{2 \pi}\right)^{\frac{d}{2}}$.
\end{proposition}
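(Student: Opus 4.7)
The plan is to leverage the quadratic upper bound on $V$ implied by semi-log-convexity to lower-bound the normalizing constant, and then observe that the density is maximized at the minimizer of $V$.

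First, I would argue that $V$ attains its global minimum at some $x^\star \in \R^d$. Since $e^{-V}$ is integrable and $V$ is $C^2$, $V$ must tend to infinity at infinity, so it achieves its infimum. At this minimizer, $\nabla V(x^\star) = 0$. Using the Taylor formula with integral remainder together with the assumption $\nabla^2 V \preceq \beta I_d$, I would then deduce the pointwise upper bound
\begin{equation*}
V(x) \leq V(x^\star) + \frac{\beta}{2} \| x - x^\star \|^2 \quad \text{for all } x \in \R^d.
\end{equation*}

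Next, this inequality yields a lower bound on the normalizing constant $Z_V = \int e^{-V(x)} \ud x$ by comparing with a Gaussian integral:
\begin{equation*}
Z_V \geq e^{-V(x^\star)} \int e^{-\frac{\beta}{2}\| x - x^\star\|^2} \ud x = e^{-V(x^\star)} \left(\frac{2\pi}{\beta}\right)^{d/2}.
\end{equation*}

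Finally, since $V$ is minimized at $x^\star$, the density $\mu = e^{-V}/Z_V$ attains its maximum at $x^\star$, so for any $z \in \R^d$,
\begin{equation*}
\frac{\ud \mu}{\ud z}(z) \leq \frac{e^{-V(x^\star)}}{Z_V} \leq \left(\frac{\beta}{2\pi}\right)^{d/2},
\end{equation*}
which is the desired bound. The only mildly nontrivial step is the existence of a global minimizer, which is straightforward from integrability of $e^{-V}$; the rest is essentially a one-line Gaussian comparison.
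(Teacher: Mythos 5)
Your proof is correct and follows essentially the same route as the paper: bound $V(x) \leq V(x^\star) + \frac{\beta}{2}\|x-x^\star\|^2$ via the Hessian upper bound at the minimizer, lower-bound $Z_V$ by the resulting Gaussian integral, and use $e^{-(V(z)-V(x^\star))}\leq 1$. The only difference is that you spell out the existence of the global minimizer (which the paper takes for granted); just note that integrability of $e^{-V}$ alone does not force $V\to\infty$ at infinity — you also need the semi-log-convexity bound to rule out arbitrarily thin dips — but this is a minor point and the conclusion stands.
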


\begin{proof}
Recall that the density of $\mu$ can be re-written as
$$\frac{\ud \mu}{\ud z} = \frac{e^{-(V(z)- V_*)}}{\int_z e^{-(V(z)-V_*)} \ud z} \, ,$$ 
where $V_*$ the minimum of $V$ attained for some $z_*$. By definition $e^{-(V(z)- V_*)} \leq 1$ for all $z$. Furthermore, since $V$ verifies $\nabla^2 V \preceq \beta I_d$, we are ensured that $$V(z) - V_* \leq \beta\frac{\| z - z_* \|^2}{2} \, ,$$ which implies in particular that 
$$\frac{1}{\int_z e^{-(V(z)-V_*)} \ud z} \leq  \left(\frac{\beta}{2 \pi}\right)^{\frac{d}{2}} \, .\qedhere $$
\end{proof}
Using the previous result, we can derive an upper-bound on the integrated squared gradient at $0$.
\begin{lemma}[Upper-bound integrated gradient]
\label{lemma:upper_bound_int_score_0}
Let $\mu \propto e^{-V}$ be a $\beta$-semi-log-convex measure. Denoting $\mu(0)$ the density of $\mu$ with respect to the Lebesgue measure at $0$, it holds that
$$
\int_0^t  \| \nabla \log(p_s^V) \|^2(0)\ud s \leq -\log\left(\mu(0)\right) + \frac{d}{2}\log\left(\frac{\beta}{2\pi}\right) \, .
$$
\end{lemma}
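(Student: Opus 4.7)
The plan is to turn the integrand $\|\nabla\log p_s^V(0)\|^2$ into a time derivative plus an explicitly integrable residual, and then telescope. Starting from the Fokker--Planck equation for the OU process, the log-density satisfies $\partial_t\log p_t^V(x) = \Delta\log p_t^V(x) + \|\nabla\log p_t^V(x)\|^2 + \langle x, \nabla\log p_t^V(x)\rangle + d$, which at $x=0$ reduces to
$$\partial_t\log p_t^V(0) = \Delta\log p_t^V(0) + \|\nabla\log p_t^V(0)\|^2 + d.$$

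I would next rewrite the right-hand side using Tweedie's identities of Proposition~\ref{lemma:density_log_hess_forward}. After the change of variable $X=-Y_t e^t$, the relevant expectations become moments of the tilted density $q_{t,0}(x)\propto e^{-V(x)}e^{-e^{-2t}\|x\|^2/(2(1-e^{-2t}))}$ already introduced in the proof of Proposition~\ref{prop:regularity_forward}. Taking the trace of the Hessian identity yields
$$\Delta\log p_t^V(0)+\|\nabla\log p_t^V(0)\|^2 = \frac{e^{-2t}\,\mathbb{E}_{q_{t,0}}[\|X\|^2]}{(1-e^{-2t})^2} - \frac{d}{1-e^{-2t}},$$
and the decomposition $\mathbb{E}_{q_{t,0}}[\|X\|^2]=\|\mathbb{E}_{q_{t,0}}[X]\|^2+\operatorname{tr}\operatorname{Cov}(q_{t,0})$ lets one recover $\|\nabla\log p_t^V(0)\|^2$ from the mean contribution. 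The task then reduces to lower-bounding $\operatorname{tr}\operatorname{Cov}(q_{t,0})$.

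This is the main obstacle, since $q_{t,0}$ need not be log-concave and tools such as Brascamp--Lieb do not directly apply. My plan is to invoke the translation-family Cram\'er--Rao inequality $\operatorname{Cov}(q)\succeq I(q)^{-1}$ combined with the identity $I(q)=\mathbb{E}_q[\nabla^2 U]$ (obtained by integration by parts) for any $q\propto e^{-U}$. Writing $q_{t,0}\propto e^{-U_t}$, one has $\nabla^2 U_t=\nabla^2 V+\frac{e^{-2t}}{1-e^{-2t}}I_d \preceq \left(\beta+\frac{e^{-2t}}{1-e^{-2t}}\right)I_d$ by Assumption~\ref{assumption:semi_convexity}, hence $\operatorname{Cov}(q_{t,0})\succeq \frac{1-e^{-2t}}{\beta(1-e^{-2t})+e^{-2t}}I_d$. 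Substituting this lower bound into the previous display and simplifying gives the pointwise inequality
$$\|\nabla\log p_t^V(0)\|^2 \;\leq\; \partial_t\log p_t^V(0) + \frac{d(\beta-1)\,e^{-2t}}{\beta(1-e^{-2t})+e^{-2t}}.$$

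To conclude, I would integrate this inequality from $0$ to $t$. The residual can be computed in closed form through the substitution $u=\beta(1-e^{-2s})+e^{-2s}$, which yields $\frac{d}{2}\log(\beta(1-e^{-2t})+e^{-2t})$; the time derivative telescopes to $\log p_t^V(0)-\log\mu(0)$. By Proposition~\ref{prop:upper_bound_neg_hessian}, $p_t^V$ is $\beta/(\beta(1-e^{-2t})+e^{-2t})$-semi-log-convex, so Proposition~\ref{prop:upper_bound_maximum} applied to $p_t^V$ gives $\log p_t^V(0) \leq \tfrac{d}{2}\log\beta - \tfrac{d}{2}\log(\beta(1-e^{-2t})+e^{-2t}) - \tfrac{d}{2}\log(2\pi)$. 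The two $\log(\beta(1-e^{-2t})+e^{-2t})$ contributions cancel exactly, leaving precisely $-\log\mu(0)+\tfrac{d}{2}\log(\beta/(2\pi))$. The Cram\'er--Rao step is essential: without it, the naive Jensen bound $\mathbb{E}_{q_{t,0}}[\|X\|^2]\geq\|\mathbb{E}_{q_{t,0}}[X]\|^2$ produces a residual of order $\frac{d\,e^{-2s}}{1-e^{-2s}}$ whose integral diverges at $s=0$.
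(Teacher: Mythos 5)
Your proof is correct, and it follows the paper's skeleton in its outer structure (evaluate the log-Fokker--Planck equation at $0$, integrate in time, bound $\log p_t^V(0)$ via Propositions~\ref{prop:upper_bound_neg_hessian} and~\ref{prop:upper_bound_maximum}, and exploit the exact cancellation of the two $\tfrac{d}{2}\log(\beta(1-e^{-2t})+e^{-2t})$ terms), but it differs in how the key residual $\Delta\log\pi(0)-\Delta\log p_t^V(0)$ is controlled. The paper simply quotes Proposition~\ref{prop:upper_bound_neg_hessian} (i.e.\ Lemma 5 of Mikulincer--Shenfeld) and integrates the resulting bound $\frac{d(\beta-1)e^{-2s}}{(1-e^{-2s})(\beta-1)+1}$, whereas you re-derive exactly that trace bound from first principles: Tweedie's identity (Proposition~\ref{lemma:density_log_hess_forward}) expresses $\Delta\log p_t^V(0)$ through $\operatorname{tr}\operatorname{Cov}(q_{t,0})$, and the Cram\'er--Rao inequality combined with $\mathbb{E}_{q_{t,0}}[\nabla^2 U_t]\preceq(\beta+\tfrac{e^{-2t}}{1-e^{-2t}})I_d$ gives the covariance lower bound $\tfrac{1-e^{-2t}}{\beta(1-e^{-2t})+e^{-2t}}I_d$, which reproduces the same residual $\frac{d(\beta-1)e^{-2t}}{\beta(1-e^{-2t})+e^{-2t}}$; your closing remark that the trivial bound $\operatorname{tr}\operatorname{Cov}\geq 0$ yields a non-integrable residual correctly identifies why this step is the crux. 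The paper's route is shorter because the heavy lifting is outsourced to the cited result; yours is more self-contained and explains the mechanism (a Fisher-information/covariance duality for the tilted measure $q_{t,0}$), at the cost of an integration-by-parts identity $I(q_{t,0})=\mathbb{E}_{q_{t,0}}[\nabla^2 U_t]$ whose justification is at the same level of rigor as analogous manipulations already used in the paper (e.g.\ in Lemma~\ref{lemma:up_fisher_m2_density}). Note also that since Cram\'er--Rao gives a matrix inequality, your argument actually recovers the full statement of Proposition~\ref{prop:upper_bound_neg_hessian}, so the remaining appeal to it for the semi-log-convexity of $p_t^V$ (feeding into Proposition~\ref{prop:upper_bound_maximum}) could be made self-contained as well.
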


\begin{proof}
Denoting $\pi$ the density of the standard $d$ dimensional Gaussian, recall that the density $p_t^V$ evolves as
    $$
    \partial_t p_t^V = \nabla \cdot \left(p_t^V \nabla  \log\left(\frac{p_t^V}{\pi}\right) \right) \, ,
    $$
    which can also be re-written as
    \begin{align*}
        \partial_t \log(p_t^V) & = \frac{\Delta p_t^V}{p_t^V} - \langle \nabla \log(p_t^V), \nabla \log(\pi) \rangle - \Delta \log(\pi) \\
        & = \Delta \log(p_t^V) + \| \nabla \log(p_t^V) \|^2 - \langle \nabla \log(p_t^V), \nabla \log(\pi) \rangle - \Delta \log(\pi) \, .
    \end{align*}
In particular, for $z=0$ this yields
$$
 \partial_t \log(p_t^V)(0) = \Delta \log(p_t^V)(0) + \| \nabla \log(p_t^V) \|^2(0) - \Delta \log(\pi)(0) \, ,
$$
which implies 
$$\int_0^t  \| \nabla \log(p_s^V) \|^2(0)\ud s =   \log(p_t^V)(0) - \log(p_0^V)(0) + \int_{0}^t \Delta \log(\pi)(0) - \Delta \log(p_t^V)(0) \ud s \, .$$
Using the uniform upper-bound of Proposition \ref{prop:upper_bound_neg_hessian}, the second term can upper-bounded as
\begin{align*}
\int_{0}^t \Delta \log(\pi)(0) - \Delta \log(p_t^V)(0) \ud s & \leq  d \int_{0}^t  \frac{(\beta-1)e^{-2s}}{(1-e^{-2s})(\beta - 1) + 1} \ud s \\
& = \frac{d}{2}(\beta-1) \int_{e^{-2t}}^1 \frac{1}{\beta - u(\beta-1)} \ud u \\
& = \frac{d}{2} \log(\beta(1-e^{-2t}) + e^{-2t}) \, .
\end{align*}
Furthermore, Proposition \ref{prop:upper_bound_neg_hessian} shows that $-\nabla^2 \log(p_t^V) \preceq \frac{\beta}{\beta(1-e^{-2t}) + e^{-2t}}$. Thus, using Proposition \ref{prop:upper_bound_maximum}, we recover that $\log(p_t^V)(0) \leq \frac{d}{2} \log\left(\frac{\beta}{\beta(1-e^{-2t}) + e^{-2t}} \right) - \frac{d}{2}\log(2\pi) $. In particular, we recover 
$$
\int_0^t  \| \nabla \log(p_s^V) \|^2(0)\ud s \leq -\log\left(\frac{\ud \mu}{\ud x}(0)\right) + \frac{d}{2}\log(\beta)- \frac{d}{2}\log(2\pi) \, .\qedhere
$$

\end{proof}

We can now bound the first order moment of $\Phi_t$

\begin{lemma}
\label{lemma:upper_bound_first_moment_ratio}
Let $\mu$ be a $\beta$-semi-log-convex measure with finite second moment $m_2$. It holds that
    $$
   m_1(\Phi_t) \leq  \frac{(Z_V)^2}{Z_{2V}}e^{t(d+1)}(\beta(1-e^{-2t}) + e^{-2t})^{d}\left(\sqrt{m_2} + \sqrt{ -2\log\left(\mu(0)\right) + d \log\left(\frac{\beta}{2\pi}\right) } + 2 \sqrt{d\beta}\right) \, .
    $$
\end{lemma}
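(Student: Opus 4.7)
The plan is to mirror the strategy used for the $m_0$ bound in Lemma~\ref{lemma:upper_bound_ratio}, but with weight $\|z\|$ instead of $1$. I would start by differentiating $m_1(\Phi_t) = \int \|z\|\,\Phi_t(z)\,dz$ in time, using Lemma~\ref{lem:evolution_ratio} to expand $\partial_t \Phi_t = \Delta \Phi_t + z\cdot \nabla \Phi_t + 2\nabla \log p_t^V\cdot \nabla \Phi_t$, and integrating by parts (using $\nabla \|z\| = z/\|z\|$, $\Delta \|z\| = (d-1)/\|z\|$, $\nabla\cdot(\|z\|\,z) = (d+1)\|z\|$). This decomposes the derivative as
\[
\partial_t m_1(\Phi_t) = (d-1)\!\int \frac{\Phi_t}{\|z\|}\,dz \;-\;(d+1)\,m_1(\Phi_t) \;-\; 2\!\int \tfrac{z}{\|z\|}\cdot\nabla\log p_t^V\,\Phi_t\,dz \;-\; 2\!\int \|z\|\,\Delta\log p_t^V\,\Phi_t\,dz.
\]

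Next, I would control the two score-related terms via Proposition~\ref{prop:upper_bound_neg_hessian}. Setting $\gamma_t := \beta/(\beta(1-e^{-2t})+e^{-2t})$, the one-sided Hessian bound $-\nabla^2 \log p_t^V \preceq \gamma_t I_d$ (obtained by tracing and shifting the bound in Proposition~\ref{prop:upper_bound_neg_hessian}) yields $-\Delta \log p_t^V \leq d\gamma_t$, bounding the Laplacian term by $2d\gamma_t\,m_1$. A first-order Taylor expansion of $\nabla \log p_t^V$ around the origin, combined with the same Hessian bound, gives $|(z/\|z\|)\cdot \nabla \log p_t^V(z)| \leq \|\nabla \log p_t^V(0)\| + \gamma_t\,\|z\|$, bounding the first-order term by $2\,\|\nabla \log p_t^V(0)\|\,m_0 + 2\gamma_t\,m_1$. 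The singular contribution $(d-1)\int \Phi_t/\|z\|\,dz$ is handled by a truncate-and-optimize argument using the uniform bound $\Phi_t(z) \leq (Z_V^2/Z_{2V})(\beta/(2\pi))^{d/2}$ coming from Lemma~\ref{lemma:decrease_max} together with Proposition~\ref{prop:upper_bound_maximum} applied to $\mu$, which allows comparison with $m_0(\Phi_t)$ and contributes to the $2\sqrt{d\beta}$ constant term.

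Assembling these ingredients yields a Grönwall-type differential inequality $\partial_t m_1 \leq C_1(t)\,m_1 + C_2(t)\,m_0 + 2\,\|\nabla \log p_t^V(0)\|\,m_0$ with explicit coefficients. Integrating via Grönwall gives the integrating factor $e^{t(d+1)}(\beta(1-e^{-2t})+e^{-2t})^d$, obtained from $\int_0^t (2\gamma_s-1)\,ds = t + \log(\beta(1-e^{-2t})+e^{-2t})$ (computed exactly as in the $m_0$ proof via the change of variables $u = (\beta-1)e^{-2s}$). Substituting the explicit bound on $m_0(\Phi_s)$ from Lemma~\ref{lemma:upper_bound_ratio} inside the Grönwall integral produces three contributions: the initial value $m_1(\Phi_0) = (Z_V^2/Z_{2V})\,\mathbb{E}_\mu[\|X\|] \leq (Z_V^2/Z_{2V})\sqrt{m_2}$ yields the $\sqrt{m_2}$ piece; the integrated score term $\int_0^t \|\nabla \log p_s^V(0)\|\,ds$, controlled by Cauchy--Schwarz and Lemma~\ref{lemma:upper_bound_int_score_0}, yields the $\sqrt{-2\log\mu(0)+d\log(\beta/(2\pi))}$ piece; and the remaining deterministic piece accumulates to $2\sqrt{d\beta}$.

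The main obstacles I expect are twofold. First, the singular integral $\int \Phi_t/\|z\|\,dz$ is a.e.\ finite but does not obviously compare with either $m_0$ or $m_1$; the uniform upper bound on $\Phi_t$ makes the trade-off between the singularity and the tail optimizable, but requires care. Second, the exponents in the final bound are tight: naively the Grönwall integrating factor would be $(\beta(1-e^{-2t})+e^{-2t})^{d+1}$, whereas the claimed bound has power $d$. Recovering the sharp power requires absorbing one $(\beta(1-e^{-2t})+e^{-2t})$ factor into the square root coming from $\sqrt{m_0}$ via Cauchy--Schwarz at the right step of the Grönwall integration, so that the final bound inherits exactly the $m_0$ factor in $(\beta(1-e^{-2t})+e^{-2t})^d$ multiplied by a single extra $e^t$.
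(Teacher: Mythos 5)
Your route is essentially the paper's. After rewriting $\partial_t\Phi_t=\Delta\Phi_t+\langle z,\nabla\Phi_t\rangle+2\langle\nabla\log p_t^V,\nabla\Phi_t\rangle$ and integrating by parts against $\|z\|$, your four terms are the paper's terms $A$ and $B$ regrouped: with $\gamma_t=\beta/(\beta(1-e^{-2t})+e^{-2t})$ your total Gr\"onwall coefficient $2d\gamma_t+2\gamma_t-(d+1)=(d+1)(2\gamma_t-1)$ coincides exactly with the paper's $(d+1)\bigl(1+2\tfrac{(\beta-1)e^{-2t}}{(\beta-1)(1-e^{-2t})+1}\bigr)$, and the handling of the singular term (uniform bound on $\Phi_t$ via Lemma~\ref{lemma:decrease_max} and Proposition~\ref{prop:upper_bound_maximum}, truncation radius optimized against $m_0$), the splitting of the score term at the origin, and the final Cauchy--Schwarz with Lemma~\ref{lemma:upper_bound_int_score_0} are all the paper's steps. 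Two small points: your bound $|(z/\|z\|)\cdot\nabla\log p_t^V(z)|\le\|\nabla\log p_t^V(0)\|+\gamma_t\|z\|$ as written needs a two-sided Hessian bound, but only the direction $-\langle z/\|z\|,\nabla\log p_t^V(z)\rangle\le\|\nabla\log p_t^V(0)\|+\gamma_t\|z\|$ is used and that follows from the one-sided bound of Proposition~\ref{prop:upper_bound_neg_hessian}; and $\Delta\|z\|=(d-1)/\|z\|$ fails at $d=1$, where the term is $2\Phi_t(0)$ and must be treated separately, as the paper does.

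The one genuine problem is your last step on the exponent. Your own accounting gives the integrating factor $\exp\bigl((d+1)\int_0^t(2\gamma_s-1)\,\ud s\bigr)=e^{t(d+1)}(\beta(1-e^{-2t})+e^{-2t})^{d+1}$, i.e.\ power $d+1$, and the proposed repair --- ``absorbing one factor into the square root coming from $\sqrt{m_0}$ via Cauchy--Schwarz'' --- is not a real step: $m_0$ never appears under a square root in this argument (it enters linearly, multiplied by $\|\nabla\log p_s^V(0)\|$, and as $m_0^{(d-1)/d}$ in the singular term), and dividing the inhomogeneous terms by the integrating factor already produces convergent integrals with no slack to remove a power from the homogeneous $m_1(\Phi_0)$ contribution. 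However, you should not count this against your argument: the paper's own proof of this lemma also concludes with exponent $d+1$ (and constant $\sqrt{d\beta}$ rather than $2\sqrt{d\beta}$), so the power $d$ in the statement is inconsistent with the paper's derivation; the pattern $m_0\mapsto d$, $m_1\mapsto d+1$, $m_2\mapsto d+2$ used afterwards confirms $d+1$ is the intended exponent. In short: same approach, correct up to exactly the point the paper reaches; drop the unsubstantiated absorption step and state the bound with $(\beta(1-e^{-2t})+e^{-2t})^{d+1}$.
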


\begin{proof}
    We differentiate $m_1(\Phi_t)$ and we recover 
    $$
    \partial_t m_1(\Phi_t) = \int \Phi_t \left( \Delta \log(\Phi_t) - \langle \nabla \log(\Phi_t), \nabla \log(\pi) \rangle + \| \nabla \log(\Phi_t) \|^2 + 2 \langle \nabla \log(p_t^V), \nabla \log(\Phi_t) \rangle \right) \| z \|  \ud z \, .
    $$
    Integration by parts of the first term yields
    $$
    \int \Phi_t \Delta \log(\Phi_t) \| z \| \ud z = - \int \Phi_t \| \nabla \log(\Phi_t) \|^2 \| z \| + \langle \nabla \Phi_t, \frac{z}{\| z \|} \rangle \ud z \, ,
    $$
    hence the squared gradients terms cancel and we recover
    $$
    \partial_t m_1(\Phi_t) = \int \langle 2 \nabla \log(p_t^V) - \nabla \log(\pi), \nabla \Phi_t \rangle \| z \|  \ud z -  \int \langle \nabla \Phi_t, \frac{z}{\| z \|} \rangle \ud z \, .
    $$
    Let us denote by $A$ the first term above. Integration by parts yields:
    \begin{align*}
    A &  = \int \Phi_t (\Delta \log(\pi) - 2 \Delta \log(p_t^V))\| z \| \ud z +  \int \Phi_t \langle \nabla \log(\pi) - 2 \nabla \log(p_t^V), \frac{z}{\| z \|} \rangle \ud z \\
     & = - \int \Phi_t \Delta \log(\pi) \| z \| \ud z + 2 \int \Phi_t (\Delta \log(\pi) - \Delta \log(p_t^V))\| z \| \ud z \\
     & ~~~ -  \int \Phi_t \langle \nabla \log(\pi), \frac{z}{\| z\|} \rangle \ud z + 2 \int \Phi_t \langle \nabla \log(\pi) - \nabla \log(p_t^V), \frac{z}{\| z \|} \rangle \ud z \\
     & = (d+1)m_1(\Phi_t)  + 2 \int \Phi_t (\Delta \log(\pi) - \Delta \log(p_t^V))\| z \| \ud z  + 2 \int \Phi_t \langle \nabla \log(\pi) - \nabla \log(p_t^V), \frac{z}{\| z \|} \rangle \ud z \, .
    \end{align*}
    Using the upper bound given in Proposition \ref{prop:upper_bound_neg_hessian}, we get $ 2 \int \Phi_t (\Delta \log(\pi) - \Delta \log(p_t^V))\| z \| \ud z \leq 2d \frac{(\beta-1)e^{-2t}}{(\beta-1)(1-e^{-2t}) + 1}m_1(\Phi_t)$. Similarly, we re-write the second term as
    \begin{align*}
      2 \int \Phi_t \langle \nabla \log(\pi) - \nabla \log(p_t^V), \frac{z}{\| z \|} \rangle \ud z & =   2 \int \Phi_t \langle \nabla \log\left(\frac{\pi}{p_t^V}\right)(z) -\nabla \log\left(\frac{\pi}{p_t^V}\right)(0) , \frac{z}{\| z \|} \rangle \ud z \\
      & ~~~ + 2 \int \Phi_t \langle \nabla \log\left(\frac{\pi}{p_t^V}\right)(0),  \frac{z}{\| z \|} \rangle \ud z \\
      & \leq 2m_1(\Phi_t) \frac{(\beta-1)e^{-2t}}{(\beta-1)(1-e^{-2t}) +1} + 2\| \nabla \log(p_t^V)(0) \|m_0(\Phi_t) \, .
    \end{align*}
    Let us now handle the term $B = -  \int \langle \nabla \Phi_t, \frac{z}{\| z \|} \rangle \ud z $. In one dimension, $B = 2 \Phi_t(0) \leq \max(\Phi_t)$ and for $d\geq 2$, we have
    \begin{align*}
    B & = \int \frac{\Phi_t (d-1)}{\| z \|} \ud z \\
    & = \int_{B_R} \frac{\Phi_t (d-1)}{\| z \|} \ud z + \int_{\overline{B_R}} \frac{\Phi_t(d-1) }{\| z \|} \ud z \\
    & \leq \max(\Phi_t) \int_{B_R} \frac{d-1}{\| z \|} \ud z + \frac{d-1}{R}m_0(\Phi_t) \\
    & = \max(\Phi_t) \frac{2 \pi^{d/2}}{\Gamma(d/2)} R^{d-1} + \frac{d-1}{R}m_0(\Phi_t) \, .
    \end{align*}
    Using Lemma \ref{lemma:decrease_max} and Proposition \ref{prop:upper_bound_maximum}, we have that $\max(\Phi_t) \leq \max(\Phi_0) = \frac{(Z_{V})^2}{Z_{2V}} \max(\frac{\ud \mu}{\ud x}) \leq  \frac{(Z_{V})^2}{Z_{2V}} (\frac{\beta}{2\pi})^{d/2}$. Hence, if we pick $R = \left(Z_{2V}m_0(\Phi_t) \Gamma(d/2)/Z_V^2\right)^{1/d} \beta^{-1/2} 2^{1/2 - 1/d}$, we get as an upper-bound for $B$:
    \begin{align*}
    B & \leq \frac{Z_V^2}{Z_{2V}} d \sqrt{\beta} 2^{1/d -1/2} (m_0(\Phi_t)Z_{2V}/Z_V^2)^{\frac{d-1}{d}} \Gamma(d/2)^{-1/d} \\
    & \leq \frac{2Z_V^2}{Z_{2V}}\sqrt{d\beta} (m_0(\Phi_t)Z_{2V}/Z_V^2)^{\frac{d-1}{d}}\, .
    \end{align*}
    In particular, we recover that 
    \begin{multline}
    \partial_t m_1(\Phi_t) \leq (d+1)\left(1 + 2 \frac{(\beta-1) e^{-2t}}{(\beta-1)(1-e^{-2t})+1} \right)m_1(\Phi_t) \\+ 2 \| \nabla \log(p_t^V)(0) \| m_0(\Phi_t) + \frac{2Z_V^2}{Z_{2V}}\sqrt{\beta d}(m_0(\Phi_t)Z_{2V}/Z_V^2)^{\frac{d-1}{d}} \, ,
    \end{multline}
    hence using Gronwall lemma, we have that
   \begin{align*}
   m_1&(\Phi_t)  \leq  e^{t(d+1)} (\beta(1-e^{-2t}) + e^{-2t})^{d+1}m_1(\Phi_0) \\
    &  +2 \int_0^t \left[ \| \nabla \log(p_s^V)(0) \| m_1(\Phi_s) + \sqrt{d\beta} (m_0(\Phi_s)Z_{2V}/Z_V^2)^{\frac{d-1}{d}}\right]\frac{e^{t(d+1)} (\beta(1-e^{-2t}) + e^{-2t})^{d+1}}{e^{s(d+1)} (\beta(1-e^{-2s}) + e^{-2s})^{d+1}} \ud s \, .
    \end{align*}
    Using Lemma \ref{lemma:upper_bound_ratio}, we have that $m_0(\Phi_s) \leq \frac{Z_V^2}{Z_{2V}}e^{sd}(\beta(1-e^{-2s}) + e^{-2s})^d$ hence the first term of the integral is upper-bounded as:
   $$
     \int_0^t \frac{e^{-s(d+1)}  m_0(\Phi_s)\| \nabla \log(p_s^V) (0)\|}{ (\beta(1-e^{-2s}) + e^{-2s})^{d+1}} \ud s \leq \frac{Z_V^2}{Z_{2V}}\int_0^t  \| \nabla \log(p_s^V)(0) \|\frac{e^{-s}}{\beta(1-e^{-2s}) + e^{-2s}} \ud s \, .
    $$
    By Cauchy-Schwarz it holds that 
    $$
     \int_0^t  \frac{\| \nabla \log(p_s^V)(0) \|e^{-s}}{L(1-e^{-2s}) + e^{-2s}} \ud s  \leq \sqrt{ \int_0^t \| \nabla \log(p_s^V)(0) \|^2 \ud s} \sqrt{\int_0^t \frac{e^{-2s}}{(\beta(1-e^{-2s}) + e^{-2s})^2} \ud s} \, .
     $$
     The integral term is given by 
     \begin{align*}
     \int_0^t \frac{e^{-2s}}{(\beta(1-e^{-2s}) + e^{-2s})^2} \ud s & = \frac{1}{2} \int_{e^{-2t}}^1 \frac{1}{(\beta(1-u) + u)^2} \ud u \\
     & = \frac{1}{2(1-\beta)}[-\frac{1}{\beta(1-u) + u}]_{e^{-2t}}^1 \\
     & = \frac{1}{2(1-\beta)}(\frac{1}{\beta(1-e^{-2t}) + e^{-2t}} -1) \\
     & = \frac{1-e^{-2t}}{2(\beta(1-e^{-2t}) + e^{-2t})}
     \end{align*}
    Similarly, 
    \begin{align*}
    \int_{0}^t m_0(\Phi_s)^{\frac{d-1}{d} }\frac{e^{-s(d+1)}}{(\beta(1-e^{-2s}) + e^{-2s})^{d+1}} \ud s & \leq \int_{0}^t  \frac{e^{-2s}}{(\beta(1-e^{-2s}) + e^{-2s})^2} \ud s \\
    & =\frac{1-e^{-2t}}{2(\beta(1-e^{-2t}) +e^{-2t})}
    \end{align*}
    Hence, we obtain 
    $$
   m_1(\Phi_t) \leq \frac{Z_V^2}{Z_{2V}}e^{t(d+1)}(\beta(1-e^{-2t}) + e^{-2t})^{d+1}\left(\frac{Z_{2V}}{Z_{V}^2}m_1(\Phi_0) + \sqrt{ -2\log(\mu(0)) + d \log\left(\frac{\beta}{2\pi}\right) } + \sqrt{d\beta}\right) \, .
    $$
    Finally, $m_1(\Phi_0) =\frac{(Z_V)^2}{Z_{2V}}\int \| z \| \frac{e^{-V(z)}}{\int e^{-V} \ud z} \ud z =  \frac{(Z_V)^2}{Z_{2V}}\sqrt{m_2(\mu)}$.
\end{proof}

We can now derive our upper-bound on $m_2(\Phi_t)$.
\begin{proof}
We start by differentiating $m_2(\Phi_t)$:
\begin{align*}
\partial_t m_2(\Phi_t) & = \int \| z \|^2 \partial_t \Phi_t \ud z \\
& = \int \| z \|^2 ( \text{div}( \nabla \Phi_t ) - \langle \nabla \Phi_t, \nabla \log(\pi) \rangle + 2 \langle \nabla \log(p_t^V), \nabla \Phi_t \rangle) \ud z \\
& = -\int 2 \langle z, \nabla \Phi_t \rangle \ud z + \int (\Delta \log(\pi) - 2 \Delta \log(p_t^V)) \| z \|^2\Phi_t \ud z  + 2 \int \langle \nabla \log(\pi) - 2 \log(p_t^V), z \rangle \Phi_t \ud z  \\
& = -2\int \Delta \log(\pi) \Phi_t \ud z - \int \Delta \log(\pi)\| z \|^2  \Phi_t  \ud z - 2  \int \langle \nabla \log(\pi), z \rangle \Phi_t \ud z \\
 & ~~~ + 2 \int (\Delta \log(\pi) - \Delta \log(p_t^V) ) \| z \|^2 \Phi_t \ud z + 4 \int \langle \nabla \log(\pi) - \nabla \log(p_t^V), z \rangle \Phi_t \ud z \\
 & = 2dm_0(\Phi_t) + dm_2(\Phi_t) + 2m_2(\Phi_t) \\
 &\qquad+ 2 \int (\Delta \log(\pi) - \Delta \log(p_t^V) ) \| z \|^2 \Phi_t \ud z + 4 \int \langle \nabla \log(\pi) - \nabla \log(p_t^V), z \rangle \Phi_t \ud z \, .
\end{align*}
The first term $ \int (\Delta \log(\pi) - \Delta \log(p_t^V) )  \| z \|^2\Phi_t \ud z$ is upper-bounded by $\frac{(\beta-1)e^{-2t}}{(1-e^{-2t})(\beta - 1) + 1}dm_2(\Phi_t)$ and for the second term we have
\begin{align*}
\int \langle \nabla \log(\pi) - \nabla \log(p_t^V), z \rangle \Phi_t \ud z &= \int \langle \nabla \log\left(\frac{\pi}{p_t^V}\right)(z) -  \nabla \log\left(\frac{\pi}{p_t^V}\right)(0) , z \rangle \Phi_t \ud z \\
& ~~~ + \int \langle \nabla \log\left(\frac{\pi}{p_t^V}\right)(0), z \rangle \Phi_t \ud z \\
& \leq \int \frac{(L-1)e^{-2t}}{(1-e^{-2t})(\beta - 1) + 1}\| z \|^2 \Phi_t \ud z + \| \log(p_t^V)(0) \| \int \| z \| \Phi_t \ud z \\
& =  \frac{(\beta-1)e^{-2t}}{(1-e^{-2t})(\beta - 1) + 1}m_2(\Phi_t) + \| \nabla \log(p_t^V)(0) \| m_1(\Phi_t) \, .
\end{align*}
Hence we recover
$$
\partial_t m_2(\Phi_t)  \leq (d+2)\left(1+ \frac{2(\beta-1)e^{-2t}}{(1-e^{-2t})(\beta - 1) + 1}\right)m_2(\Phi_t) + 4\| \nabla \log(p_t^V)(0) \|m_1(\Phi_t) + 2dm_0(\Phi_t) \, .
$$
We now use the Gronwall lemma to obtain
$$
m_2(\Phi_t) \leq e^{t(d+2)} ( \beta(1-e^{-2t}) + e^{-2t})^{d+2} \left(m_2(\Phi_0) + 2\int_0^t  \frac{e^{-s(d+2)}( 2m_1(\Phi_s) \| \nabla \log(p_s^V)(0)\| + d m_0(\Phi_s))}{( \beta(1-e^{-2s}) + e^{-2s})^{(d+2)} }\ud s \right)
$$
Recalling the upper-bound $m_1(\Phi_t) \leq e^{t(d+1)}(\beta (1-e^{-2t}) + e^{-2t})^{d+1} C$ where $C$ is defined in Lemma \ref{lemma:upper_bound_first_moment_ratio}, we upper-bound the integral term as 
\begin{align*}
 \int_0^t  \frac{e^{-s(d+2)}m_1(\Phi_s) \| \nabla \log(p_s^V)(0)\|}{( \beta(1-e^{-2s}) + e^{-2s})^{(d+2)} } \ud s & \leq C \int_0^t \frac{e^{-s}}{ \beta(1-e^{-2s}) + e^{-2s})} \| \nabla \log(p_s^V) (0)\| \ud s \\
 & \leq C \sqrt{\int_0^t \frac{e^{-2s}}{ (\beta(1-e^{-2s}) + e^{-2s})^{2}} \ud s} \sqrt{\int_0^t \| \nabla \log(p_t^V)(0) \|^2 \ud s} \\
 & = C \sqrt{\frac{1-e^{-2t}}{2}} \sqrt{\int_0^t \| \nabla \log(p_t^V)(0) \|^2 \ud s} \, .
\end{align*}
Similarly,
\begin{align*}
\int_0^t  \frac{e^{-s(d+2)}}{( \beta(1-e^{-2s}) + e^{-2s})^{(d+2)} } m_0(\Phi_s)\ud s &\leq \int_0^t \frac{e^{-2s}}{( \beta(1-e^{-2s}) + e^{-2s})^{2} } \\
& = \frac{1-e^{-2t}}{2} \, .
\end{align*}
Hence we recover that 
$$
m_2(\Phi_t) \leq  e^{t(d+2)} ( \beta(1-e^{-2t}) + e^{-2t})^{d+2} \left(m_2(\Phi_0) + 2C\sqrt{-2\log\left(\mu(0) \right) + d\log(\frac{\beta}{2\pi})} + d \right) \, .
$$
Using the expression of $C$, we recover eventually that
$$
m_2(\Phi_t) \leq  \frac{2e^{t(d+2)}Z_V^2}{Z_{2V}} ( \beta(1-e^{-2t}) + e^{-2t})^{d+2} \left[m_2(\mu) + d (\beta+1) - 4 \log\left(\mu(0) \right) + 2 d\log(\frac{\beta}{2\pi}) \right]\, .
$$
\end{proof}

\section{Proof of Theorem~\ref{theorem:poly_complexity}}
\label{appendix:proof_main_th}

\begin{proof}
As Proposition \ref{prop:regularity_forward} shows, the average error of the estimator can be upper-bounded as
$$
\mathbb{E}[\| \hat{s}_{t,n}(z) - \nabla \log(p_t^V)(z) \|^2] \lesssim \frac{d \log(n)e^{td}Z_{2V} p_t^{2V}(z)}{n a(Z_V)^2(p_t^V(z))^2}\biggl[\frac{\|z \|^2}{(1-e^{-2t})^4} + \frac{b+d}{(1-e^{-2t})^3}\biggr] \, .
$$
Hence, the average integrated error reads
\begin{multline}
\mathbb{E}[\int \| \hat{s}_{t,n}(z) - \nabla \log(p_t^V)(z) \|^2 \ud p_t(z)]\\
\lesssim \frac{d \beta^{d+2}\log(n)e^{2t(d+1)}}{n a(1-e^{-2t})^4}\biggl[\frac{Z_{2V}m_2(\Phi_t)e^{-t(d+2)}}{(Z_V)^2} + \frac{Z_{2V}(b+d)m_0(\Phi_t)e^{-t(d+2)}}{(Z_V)^2}\biggr] \, .
\end{multline}
We then apply the upper-bounds in Lemma \ref{lemma:upper_bound_ratio} and in Lemma \ref{lemma:up_fisher_m2_density} and we recover
$$
\mathbb{E}[\int \| \hat{s}_{t,n}(z) - \nabla \log(p_t^V)(z) \|^2 \ud p_t(z)] \lesssim \frac{d \beta^{d+3}\log(n)e^{2t(d+1)}(b+d)}{n a^2(1-e^{-2t})^4}
$$
We thus set $n$ as $n = d \max(\epsilon^{-2(d+1) + 1}, \epsilon^{-5})=\epsilon^{-2(d+1) + 1}$ and we eventually get for $\epsilon \leq t \leq \log(1/\epsilon)$ that
$$
\mathbb{E}[\int \| \hat{s}_{t,n}(z) - \nabla \log(p_t^V)(z) \|^2 \ud p_t(z)] \lesssim \frac{\epsilon \beta^{d+3}(b+d)}{a^2} \, .
$$
Hence, plugging again the bounds of Lemma \ref{lemma:up_fisher_m2_density} in Theorem \ref{th:conforti}, we recover the desired result.
\end{proof}


\end{document}